\newcommand{\spara}[1]{\smallskip\noindent{\bf{#1}}}
\newcommand{\calL}{{\ensuremath{\cal L}}}
\newcommand{\graph}{{\ensuremath{G}}}
\newcommand{\vertices}{{\ensuremath{V}}}
\newcommand{\novertices}{{\ensuremath{n}}}
\newcommand{\vertex}{{\ensuremath{v}}}
\newcommand{\edges}{{\ensuremath{E}}}
\newcommand{\aset}{{\ensuremath{A}}}
\newcommand{\noedges}{{\ensuremath{m}}}
\newcommand{\source}{{\ensuremath{s}}}
\newcommand{\dest}{{\ensuremath{t}}}
\newcommand{\traffic}{{\ensuremath{w}}}
\newcommand{\nopairs}{{\ensuremath{k}}}
\newcommand{\backbone}{{\ensuremath{R}}}
\newcommand{\dist}{{\ensuremath{d}}}
\newcommand{\sfactor}{{\ensuremath{\lambda}}}
\newcommand{\ccost}{{\ensuremath{c}}}
\newcommand{\ppath}{{\ensuremath{p}}}
\newcommand{\edge}{{\ensuremath{e}}}
\newcommand{\budget}{{\ensuremath{B}}}
\newcommand{\ccset}{{\ensuremath{{\cal C}}}}
\newcommand{\cc}{{\ensuremath{C}}}
\newcommand{\feasible}{{\ensuremath{F}}}
\newcommand{\cut}{{\ensuremath{S}}}
\newcommand{\cutset}{{\ensuremath{{\cal S}}}}
\newcommand{\vertexpairs}{{\ensuremath{\vertices_2}}}
\newcommand{\EBC}{{\ensuremath{EB}}}
\newcommand{\EBClog}{{\ensuremath{EB_{\calL}}}}
\newcommand{\nsp}{{\ensuremath{\sigma}}}
\newcommand{\bigO}{{\ensuremath{\cal O}}}
\newcommand{\NP}{{\ensuremath{\mathbf{NP}}}}
\newcommand{\NPhard}{{\NP-hard}}
\newcommand{\NPcomp}{{\NP-complete}}
\newcommand{\benefit}{{\ensuremath{b}}}
\newcommand{\effective}{{\ensuremath{\widehat{\ell}}}}
\newcommand{\HM}{{\ensuremath{H}}}
\newcommand{\universe}{{\ensuremath{U}}}
\newcommand{\e}{{\ensuremath{u}}}
\newcommand{\noe}{{\ensuremath{n}}}
\newcommand{\collection}{{\ensuremath{\cal S}}}
\newcommand{\set}{{\ensuremath{S}}}
\newcommand{\nos}{{\ensuremath{m}}}
\newcommand{\bk}{{\ensuremath{k}}}
\newcommand{\setnode}{{\ensuremath{v}}}
\newcommand{\terminal}{{\ensuremath{z}}}
\newcommand{\iter}{{\ensuremath{I}}}
\newcommand{\landmark}{{\ensuremath{z}}}
\newcommand{\landmarks}{{\ensuremath{L}}}
\newcommand{\nolandmarks}{{\ensuremath{\ell}}}
\newcommand{\steinerforest}{{\sc Steiner\-Forest}}
\newcommand{\kspanner}{{\ensuremath{k}-{\sc Spanner}}}
\newcommand{\setcover}{{\sc Set\-Cover}}
\newcommand{\backbonediscovery}{{\sc Back\-bone\-Discovery}}
\newcommand{\greedy}{{\sf Greedy}}
\newcommand{\greedyeb}{{\sf Greedy\-EB}}
\newcommand{\london}{{\sf London\-Tube}}
\newcommand{\flights}{{\sf US\-Flights}}
\newcommand{\ukroad}{{\sf UK\-Road}}
\newcommand{\sjroad}{{\sf SJ\-Road}}
\newcommand{\nyctaxi}{{\sf NYC\-Taxi}}
\newcommand{\wikispeedia}{{\sf Wikispeedia}}
\newcommand{\abeline}{{\sf Abeline}}
\newtheorem{problem}{Problem}
\newtheorem{theorem}{Theorem}
\newtheorem{lemma}{Lemma}
\newcommand{\squishlist}{\begin{list}{$\bullet$}
  { \setlength{\itemsep}{0pt}
     \setlength{\parsep}{3pt}
     \setlength{\topsep}{3pt}
     \setlength{\partopsep}{0pt}
     \setlength{\leftmargin}{1.5em}
     \setlength{\labelwidth}{1em}
     \setlength{\labelsep}{0.5em} } }
\newcommand{\squishend}{
  \end{list}  }
\begin{document}

%\setcounter{chapter}{2} % If you are doing your chapter as chapter one,
%\setcounter{section}{3} % comment these two lines out.

%\title{Backbone discovery in traffic networks}
\title{Discovering the Network Backbone from Traffic Activity Data}

\author{
% You can go ahead and credit any number of authors here,
% e.g. one 'row of three' or two rows (consisting of one row of three
% and a second row of one, two or three).
%
% The command \alignauthor (no curly braces needed) should
% precede each author name, affiliation/snail-mail address and
% e-mail address. Additionally, tag each line of
% affiliation/address with \affaddr, and tag the
% e-mail address with \email.
%
% 1st. author
\alignauthor Sanjay Chawla \\
\affaddr University of Sydney \\
\affaddr Sydney, Australia \\
\email sanjay.chawla@sydney.edu.au \\
\and
\alignauthor Venkata Rama Kiran Garimella \\ 
\affaddr Aalto University \\
\affaddr Helsinki, Finland \\
\email kiran.garimella@aalto.fi
\and
\alignauthor Aristides Gionis \\
\affaddr Aalto University and HIIT \\
\affaddr Helsinki, Finland \\
\email aristides.gionis@aalto.fi \\
\and
\alignauthor Dominic Tsang \\
\affaddr  University of Sydney  \\
\affaddr Sydney, Australia \\
\email dwktsang@yahoo.com
}

\iffalse
\author{ 
Sanjay Chawla\thanks{University of Sydney}
\and Kiran Garimella\thanks{Aalto University} 
\and Aristides Gionis\thanks{Aalto University}
\and Dominic Tsang\thanks{University of Sydney}
}
%\date{}
\fi

\maketitle

\begin{abstract}
We introduce a new computational problem, the \backbonediscovery\ problem,
which encapsulates both {\em functional} and {\em structural} aspects
of network analysis. 
%%  For example, 
While the topology of a typical road network
has been available for a long time (e.g., through maps),
it is only recently that fine-granularity functional (activity and usage) 
information 
about the network (like source-destination traffic information)
is being collected and is readily available.
The combination of functional and structural information provides
an efficient way to explore and
understand usage patterns of networks and aid in design and
decision making. We propose efficient algorithms for
the \backbonediscovery\ problem including a novel use
of edge centrality.
%%  to discover the backbone. 
We observe that for many real world networks, our algorithm produces a backbone with a small subset of the edges that support a large percentage of the network activity.
%We observe that
%for many real worlds networks a small subset of the edges
%support a large percentage of the network activity.

%\note{Mention the landmarks optimization-Done!}

\end{abstract}

\section{Introduction}

In this paper we study a novel problem, 
which combines {\em structural} and {\em functional (activity)} network data.
In recent years there has been a large body of research related
to exploiting structural information of networks. However, with the increasing
availability of fine-grained functional information, it is now
possible to obtain a detailed understanding of activities on
a network. 
Such activities include source-destination traffic
information in road and communication networks.

More specifically we study the problem of discovering the {\em backbone} of
traffic networks. 
In our setting, we consider the topology of a network
$\graph=(\vertices,\edges)$ and a traffic log 
$\calL=\{(\source_i, \dest_i, \traffic_i)\}$, 
recording the amount of traffic $\traffic_i$ that incurs between
source $\source_i$ and destination~$\dest_i$. 
We are also given a budget \budget\ that accounts 
for a total edge cost. 
The goal is to discover a sparse subnetwork \backbone\ of \graph, 
of cost at most \budget, which summarizes as well as possible the
recorded traffic~\calL.

\begin{figure*}[t]
\begin{center}
\includegraphics[width=0.40\textwidth]{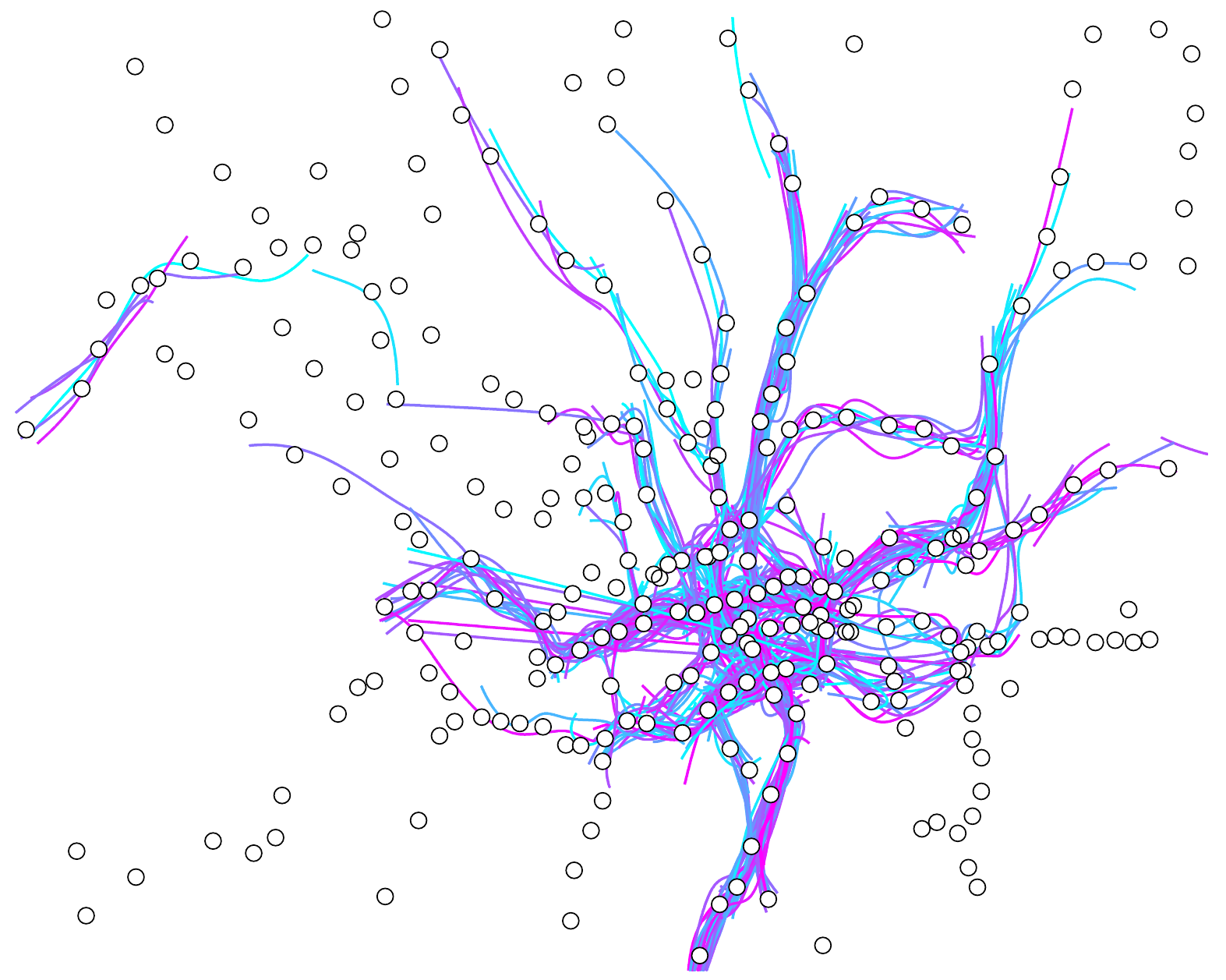}%
\hspace{0.18\columnwidth}%
\includegraphics[width=0.40\textwidth]{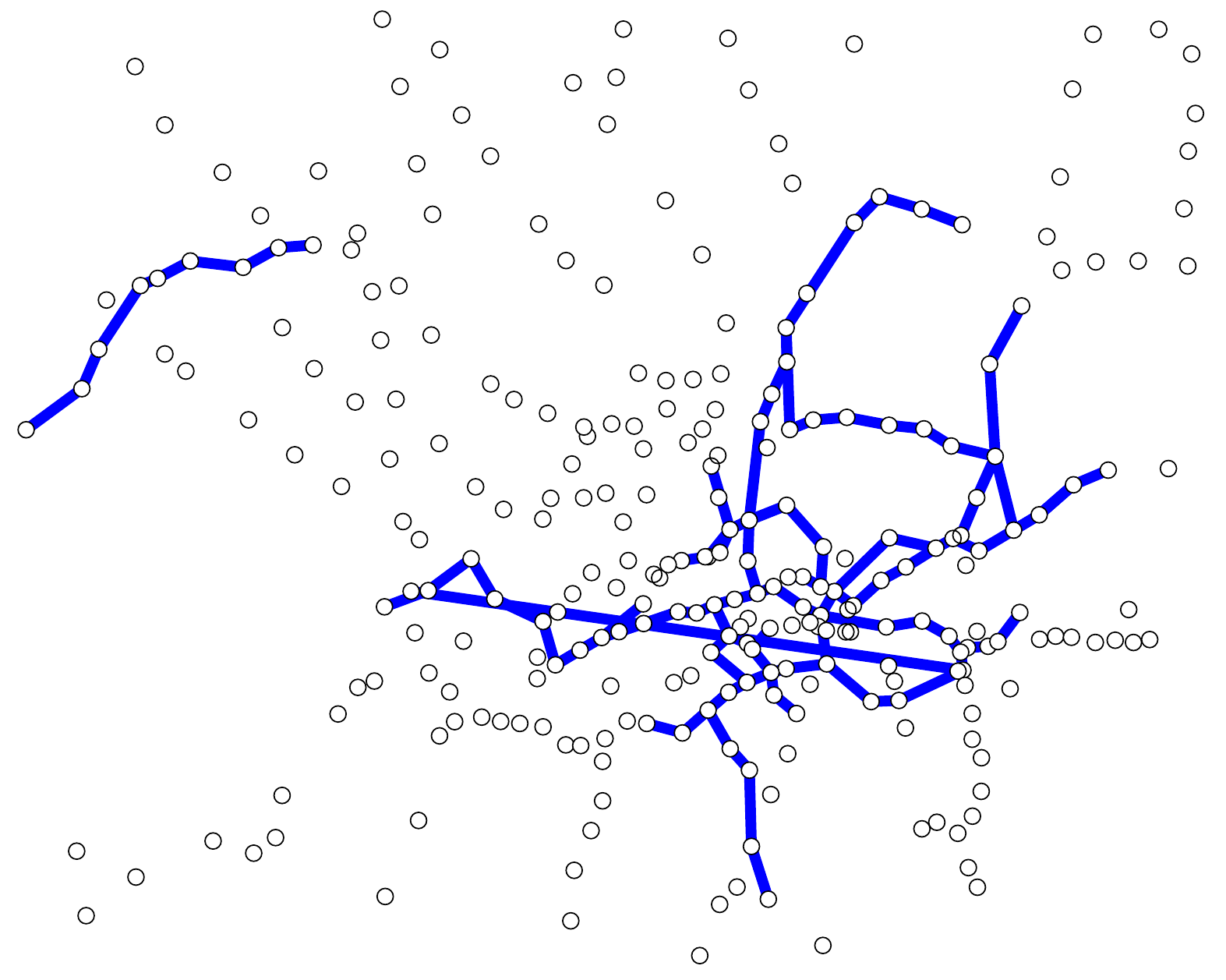}
\end{center}
\caption{\label{london_tube} London tube network, with nodes representing the stations. The figure on the left shows a subset of the trips made,
 and the figure on the right shows the corresponding backbone, as discovered by our
 algorithm. The input data contains only source--destination (indicating start and end points of a trip) pairs and for
 visualization purposes, a B-spline was interpolated along
the shortest path between each such pair. 
The backbone presented on the right covers only 24\% of the edges in the original network and has a stretch factor of 1.58. This means that even with pruning 76\% of the edges in the network, we are able to maintain shortest paths which are at most 1.58 times the shortest path length original graph.}
%\todo[Aris]{Need to add a note that in the data we only have $s$--$t$
% pairs, and for the visualization we interpolate a B-spline along a
% shortest path}
\label{fig:londontube}
\end{figure*}

The problem we study has applications 
%in diverse domains.
for both {\em exploratory data analysis} and {\em network design}. 
An example application of our algorithm is shown in Figure~\ref{fig:londontube}. Here, we consider a traffic log (Figure~\ref{fig:londontube}, left), which consists of the most popular routes used on the London tube. The backbone produced by our algorithm takes into account this demand (based on the traffic log) and tries to summarize the underlying network, thus presenting us with insights about usage pattern of the London tube (Figure~\ref{fig:londontube}, right).
This representation of the `backbone' of the network could be very useful to identify the important edges to upgrade or to
keep better maintained in order to minimize the total traffic disruptions.

% For example in the case of Figure~\ref{fig:londontube}, 
% the depicted sparse subnetwork accurately represents the traffic
% pathways in the network
% and also identifies  the most important edges to upgrade or to
% keep better maintained in order to minimize the total traffic
% disruptions.

% The traffic log considered in Figure~\ref{fig:londontube} (left) only consists of the popular routes used on the London tube. The backbone produced by our algorithm takes into account this demand (based on the traffic log) and tries to summarize the underlying network, thus presenting us with insights about usage pattern of the London tube (Figure~\ref{fig:londontube}, right).
% \iffalse % start comment counterexample image

% \fi % end comment counterexample image

We only consider source-destination pairs in the traffic log, 
and not full trajectories, as 
source-destination information captures
{\em true mobility demand} in a network. 
For example, data about the daily commute from home (source) to office
(destination) is more resilient than trajectory information, which is
often determined by local and transient constraints, like traffic conditions on the road, time of day, etc. 
Furthermore, in 
communication networks, only the source-ip and destination-ip information
is encoded in TCP-IP packets. Similarly, in a city metro, check-in
and check-out information is captured while the intervening movement
is not logged.

The \backbonediscovery\ problem  is an amalgam of
%two well-known problems that have been studied in theoretical
%computer science: 
the $k$-{\em spanner} problem~\cite{narasimhan2007geometric}
and the {\em Steiner-forest} problem~\cite{williamson2011design}.
However, our problem formulation will have
elements which are substantially distinct from both of these problems.

In the $k$-spanner problem
the goal is to find a minimum-cost subnetwork \backbone\ of \graph, such
that for {\em each pair} of nodes $u$ and $v$, 
the shortest path between $u$ and $v$ on~\backbone\ 
is at most $k$ times longer than
the shortest path between $u$ and $v$ on~\graph. 
In our problem, we are not necessarily interested in preserving the $k$-factor distance between all nodes but for only a subset of them.

%% On the other hand, 
In the Steiner-forest problem
we are given a set of pairs of terminals $\{(\source_i,\dest_i)\}$
and the goal is to find a minimum-cost forest on which each
source $\source_i$ is connected to the corresponding destination~$\dest_i$. Our problem is different from the Steiner-forest problem because we do not need all $\{(\source_i,\dest_i)\}$ to be connected, and try to optimize a stretch factor so that the structural aspect of the network are also taken into account.

% \subsection{The role of edge betweenness}\label{eb}
A novel aspect of our work is the use of edge-betweenness to guide
the selection of the backbone~\cite{Newman}.  The intuition is as follows. 
An algorithm to solve the Steiner-forest problem will 
try and minimize the sum of cost of edges selected as long as
as the set of terminal 
pairs $\{(\source_{i},\dest_{i})\}$
are connected and is agnostic to minimizing stretch factor. However,
if the edge costs are inversely weighted with {\em edge-betweenness}, 
then edges that can contribute to reducing the stretch factor can be
potentially included into the backbone.

% \textbf{An example backbone} 
To understand the differences of the proposed \backbonediscovery\ problem
with both the $k$-{\em spanner} and {\em Steiner-forest} formulations, 
consider the example shown in Figure~\ref{counter_example}.
In this example, there are four groups of nodes:
\begin{enumerate}
\item group $A$ consists of $n$ nodes, $a_1,\ldots,a_n$, 
\item group $B$ consists of $n$ nodes, $b_1,\ldots,b_n$,
\item  group $C$ consists of $2$ nodes, $c_1$ and $c_2$, and
\item group $D$ consists of $m$ nodes, $d_1,\ldots,d_m$.
\end{enumerate}
Assume that $m$ is smaller then $n$, and thus much smaller than $n^2$.
All edges shown in the figure have cost $1$, except the edges between
$c_1$ and $c_2$, which has cost~$2$.
Further assume that there is one unit of traffic between each $a_i$ and each~$b_j$, for $i,j=1,\ldots,n$, resulting in $n^2$ source-destination
pairs (the majority of the traffic), 
and one unit of traffic between $d_i$ and $d_{i+1}$, for $i=1,\ldots,m-1$,
resulting in $m-1$ source-destination pairs (some additional marginal
traffic).
The example abstracts a common layout found in many
cities: a few busy centers (commercial, residential, entertainment, etc.)
with some heavily-used links connecting them (group $C$), and some peripheral
ways around, that serve additional traffic (group $D$).
%For example
%in Sydney, Australia, the heavily-used and toll-restricted Harbor Bridge connects northern parts of the city with the business center but there is a longer and toll-free inland way which is available but sparingly used.

\begin{figure*}[t]

\begin{minipage}{0.31\linewidth}
\centering
\subfloat[]{
	\label{counter_example_a}
	\includegraphics[width=\textwidth, clip=true, trim=5 0 5 5]{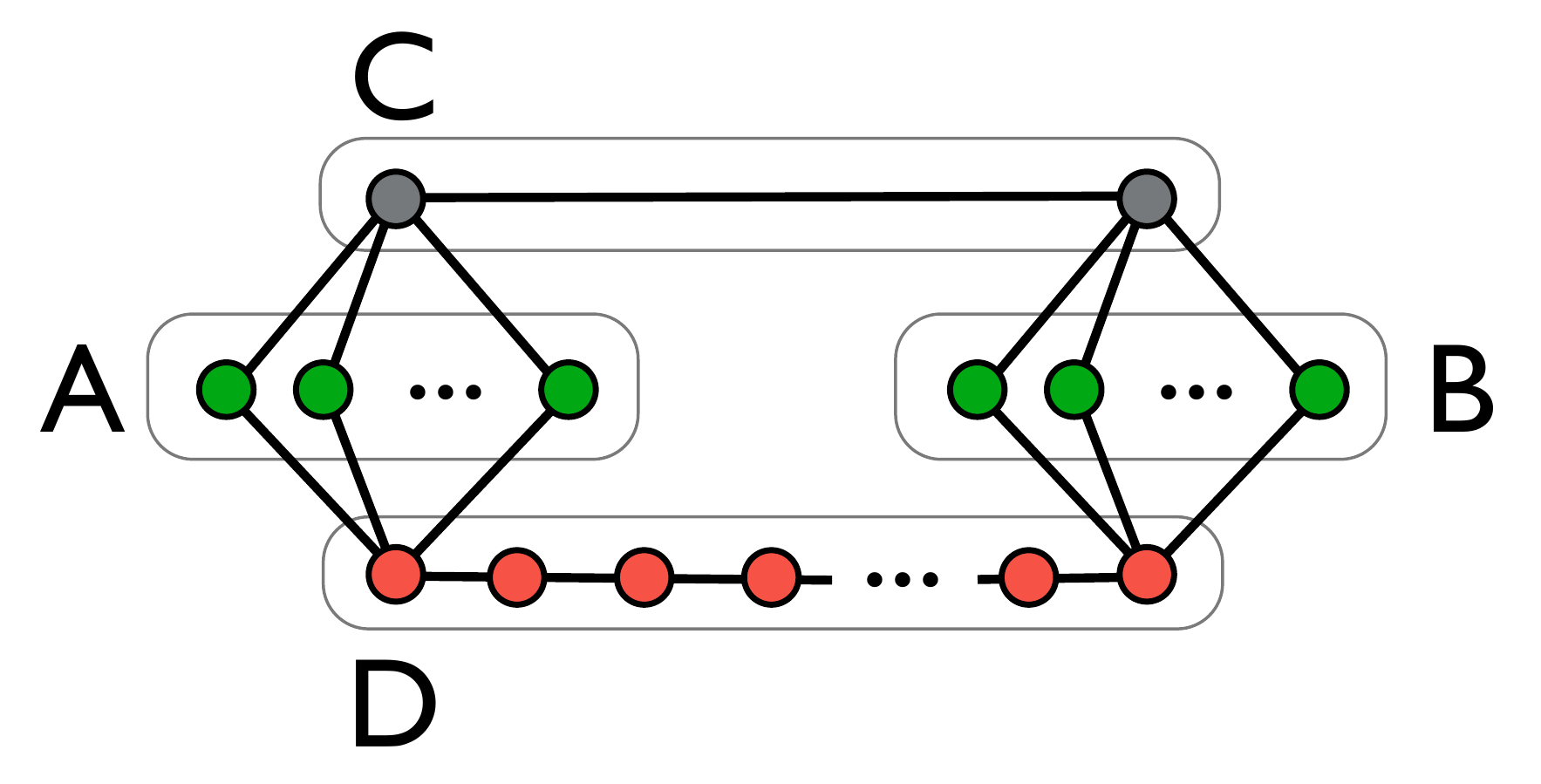}
}
\end{minipage}%
\begin{minipage}{0.31\linewidth}
\centering
\subfloat[]{
	\label{counter_example_b}
	\includegraphics[width=\textwidth, clip=true, trim=5 0 5 5]{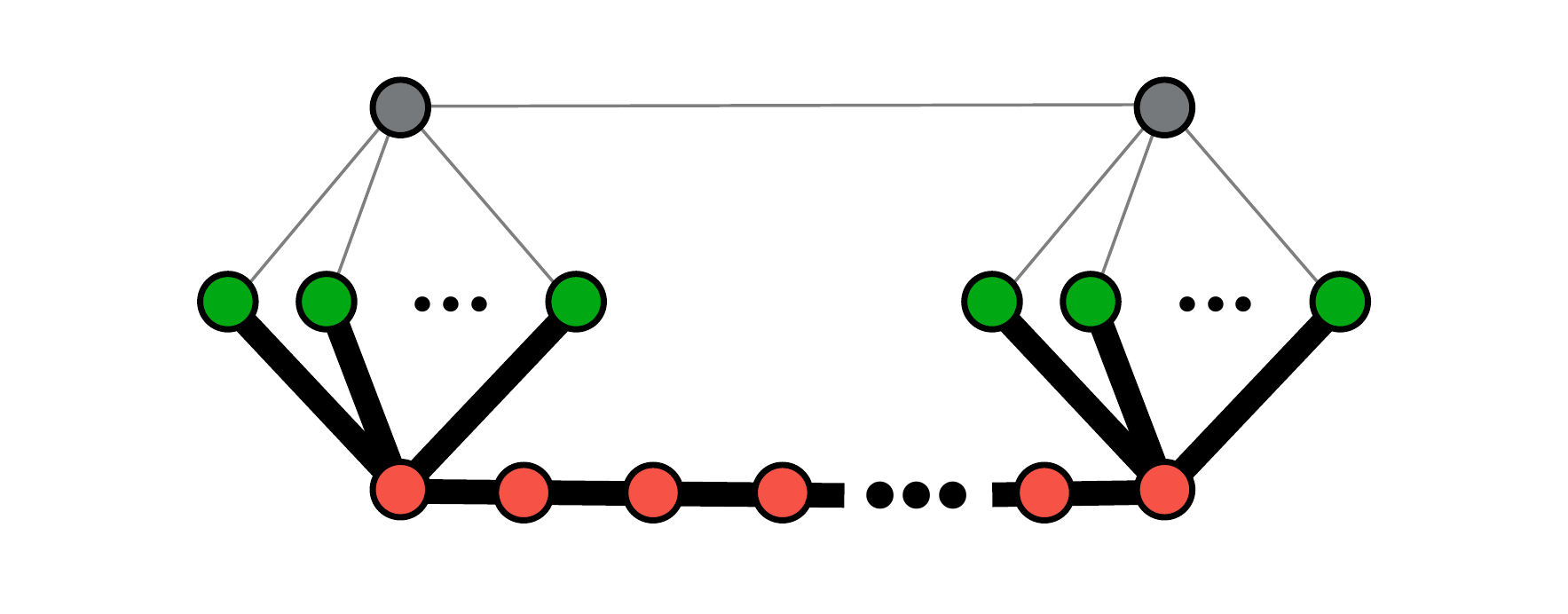}
}
\end{minipage}%
\begin{minipage}{.31\linewidth}
\centering
\subfloat[]{
	\label{counter_example_c}
	\includegraphics[width=\textwidth, clip=true, trim=5 0 5 5]{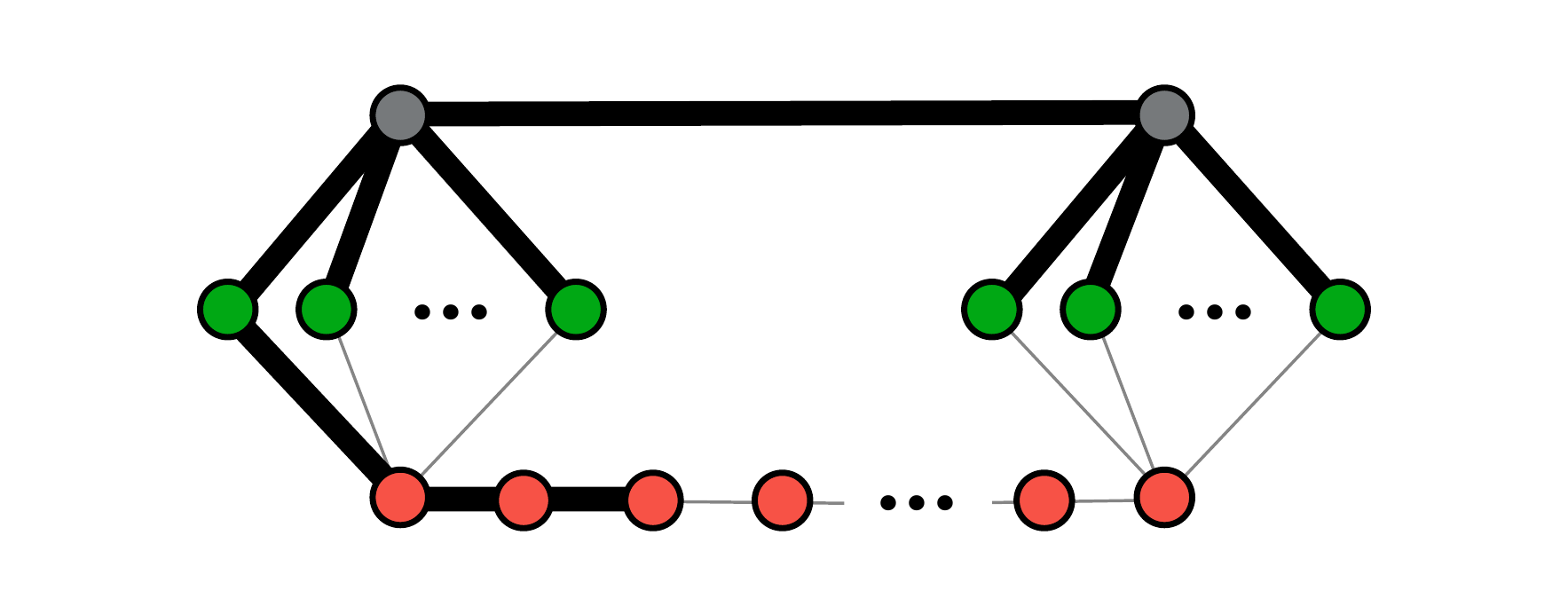}
}
\end{minipage}\par\medskip

\caption{
The \backbonediscovery\ problem solution results in a better network
than the one obtained from the Steiner forest solution.
	(a) A traffic network. We consider
 a unit of traffic from each node in $A$ to each node in $B$, and from each
 node in $D$ to its right neighbor.
 (b) Shown with thick edges is an optimal Steiner forest for certain cost $C$.
 (c) Shown with thick edges is a
 backbone of cost at most $C$ that captures the traffic in the
 network better than the optimal Steiner forest.
}
\label{counter_example}
\end{figure*}

Careful inspection of the above example highlights advantages of
the backbone discovery problem:
\begin{itemize}
\item 
 As opposed to the $k$-spanner problem, we do not need to guarantee
 short paths for all pairs of nodes, but only for those in our traffic log
which makes our approach more general. In particular, based on the budget
requirements a backbone could be designed for the most voluminous 
paths.
\item 
Due to the budget constraint, it may not be possible to
 guarantee connectivity for all pairs in the traffic log. 
%% This is compatible with our intended applications (e.g., discovery of
%% important links, visualization of the network, link upgrade, etc.). 
We thus need a way to decide which pairs to leave disconnected. 
 Neither the $k$-spanner nor the Steiner-forest problems provision
 for disconnected pairs. 
In fact,
%% in contrast to the Steiner-forest formulation,
 it is possible that the optimal backbone may even contain cycles
 while leaving pairs disconnected. Again, allowing for a disconnected
backbone, generalizes the Steiner-forest problem and  may help provision for 
a tighter budget.
In order to allow for a disconnected backbone, 
we employ the use of {\em stretch factor}, defined as a 
{\em weighted harmonic mean} over the source-destination pairs of the traffic
log, which provides a principled objective to optimize connectivity while allowing to leave disconnected pairs, when there is insufficient budget. 

\item 
 Certain high cost edges may be an essential part of the backbone that
 other problem formulations may leave out.
 For example, while the edge that connects the nodes in $C$ is a
 very important edge for the overall traffic (as it provides a short route between $A$ and $B$), the
 optimal Steiner-forest solution, shown in
 Figure~\ref{counter_example_b}, prefers the long path along the
 nodes in~$D$. Our algorithm includes the component $C$ (as seen in Figure~\ref{counter_example_c}) because it is an edge that has a high edge-betweenness.

\end{itemize}

% In the example of Figure~\ref{counter_example}, the edge connecting
% the component $C$ is an edge that has high edge-betweenness, 
% and we want to favor including that edge in the backbone.

The rest of the paper is organized as follows. 
In Section~\ref{sec:problem}, we rigorously define the \backbonediscovery\ problem.
Section~\ref{sec:algorithms} introduces our algorithm based on the greedy approach, 
Section~\ref{sec:experiments} details our experimental evaluation, results and discussion.
% are provided in Section~\ref{sec:experiments}. 
In Section~\ref{sec:related} we survey related work and distinguish our problem formulation with other relevant approaches. 
We conclude in Section~\ref{sec:conclusions} with a summary and potential directions for future work.

\section{Problem definition}
\label{sec:problem}

Let $\graph=(\vertices,\edges)$ be a network, with
$|\vertices|=\novertices$ and $|\edges|=\noedges$. 
For each edge $\edge\in\edges$ there is a cost $\ccost(\edge)$. 
Additionally, we consider a traffic log {\calL}, 
specified as a set of triples $(\source_i, \dest_i, \traffic_i)$, 
with $\source_i,\dest_i\in\vertices$, and with $i=1,\ldots,\nopairs$.
A triple $(\source_i, \dest_i, \traffic_i)$ indicates the fact that
$\traffic_i$ units of traffic have been recorded between nodes
$\source_i$ and $\dest_i$. 

We aim at discovering the {\em backbone} of traffic networks. 
A backbone $\backbone$ is a subset of the edges of the network $\graph$,
that is, $\backbone\subseteq\edges$ that provides
a good summarization for the whole
traffic in~\calL.
In particular, we require that if the available traffic had used
only edges in the backbone \backbone, it should have been almost as
efficient as using all the edges in the network. We formalize this intuition below. 

Given two nodes $\source,\dest\in\vertices$ 
and a subset of edges $\aset\subseteq\edges$, 
we consider the shortest path 
$\dist_\aset(\source,\dest)$
from \source\ to \dest\
that uses only edges in the set \aset.
In this shortest-path definition, edges are counted according to their cost \ccost.
If there is no path from $\source$ to $\dest$ using only edges in
\aset, we define $\dist_\aset(\source,\dest)=\infty$.
Consequently, 
$\dist_\edges(\source,\dest)$
is the shortest path from $\source$ to $\dest$ using all the edges in
the network, and 
$\dist_\backbone(\source,\dest)$
is the shortest path from $\source$ to $\dest$ using only edges in the
backbone~\backbone.

To measure the quality of a backbone \backbone,
with respect to some traffic log 
$\calL=\{(\source_i, \dest_i, \traffic_i)\}$
we use the concept of {\em stretch factor}. 
Intuitively, we want to consider shortest paths from $\source_i$ to $\dest_i$, 
and evaluate how much longer are those paths on the backbone
\backbone, 
than on the original network. The idea of using stretch factor for evaluating the quality of a subgraph has been used 
extensively in the past in the context of spanner graphs~\cite{narasimhan2007geometric}.

In order to aggregate shortest-path information for all
source--destination pairs in our log in a
meaningful way, we need to address two issues. 
The first issue is that not all source--destination pairs have the
same volume in the traffic log. 
This can be easily addressed by weighting the contribution of
each pair $(\source_i,\dest_i)$ by its corresponding
volume~$\traffic_i$.

The second issue is that since we aim at discovering very sparse
backbones, many source--destination pairs in the log could be
disconnected in the backbone. 
To address this problem we aggregate shortest-path
distances using the {\em harmonic mean}.
This idea, which has been proposed by Marchiori and
Latora~\cite{marchiori2000harmony} and has also been used by Boldi and
Vigna~\cite{boldi2013axioms} in measuring centrality in networks, 
provides a very clean way to deal with infinite distances: 
if a source--destination pair is not
connected, their distance is infinity, so the harmonic mean accounts
for this by just adding a zero term in the summation.
Using the arithmetic mean is problematic, as we would need to add an 
infinite term with other finite numbers. 

% end comment new stretch factor

Overall, given a set of edges $\aset\subseteq\edges$, we measure
the connectivity of the traffic log 
$\calL = \{(\source_i, \dest_i, \traffic_i)\}$,  $|\calL| = k$ by
\[
\HM_\calL(\aset) = 
\left( \sum_{i=1}^k \traffic_i \right)
\left( 
 \sum_{i=1}^k \frac{\traffic_i}{\dist_\aset(\source_i,\dest_i)}
\right)^{-1}.
\]
The {\em stretch factor} of a backbone \backbone\ is then defined
as 
\[
\sfactor_\calL(\backbone)=
\frac{\HM_\calL(\backbone)}{\HM_\calL(\edges)}.
\]
The stretch factor is always greater or equal than 1. 
The closer it is to 1, the better the connectivity that it offers
to the traffic log~\calL.

\smallskip
We are now ready to formally define the problem of backbone discovery.
\begin{problem}[{\backbonediscovery}]
\label{problem:optimal-backbone-design}	
Consider a network $\graph=(\vertices,\edges)$ and a traffic log $\calL=\{ (\source_i, \dest_i, \traffic_i)\}$.
Consider also a cost budget \budget.
The goal is to find a backbone network $\backbone\subseteq\edges$ of total cost {\budget} 
that minimizes the stretch factor $\sfactor_\calL(\backbone)$
or report that no such solution exists. 
\end{problem}	

As one may suspect, \backbonediscovery\ is an \NPhard\ problem. 

\begin{lemma}
\label{lemma:np-hardness}
The \backbonediscovery\ problem, defined in Problem~\ref{problem:optimal-backbone-design}, is \NPhard.
\end{lemma}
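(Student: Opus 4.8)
The plan is to reduce from a known \NPhard\ problem. The most natural candidate is \setcover\ (or equivalently \steinerforest, which the paper has already identified as closely related), since the budget constraint in \backbonediscovery\ plays the role of a covering/packing threshold. I would show: given a \setcover\ instance with universe $\universe$, a collection $\collection$ of sets, and an integer $\bk$, one can construct in polynomial time a network $\graph$, a traffic log $\calL$, and a budget $\budget$ such that there is a set cover of size $\bk$ if and only if there is a backbone $\backbone$ of cost at most $\budget$ achieving stretch factor $\sfactor_\calL(\backbone)=1$ (i.e.\ perfect connectivity at the original shortest-path lengths). Deciding whether the optimal stretch factor is $1$ is then \NPhard, which suffices since the optimization problem is at least as hard as this decision version.

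The construction I have in mind: create a ``source'' node $\source$ and, for each element $\e\in\universe$, an element node; put an edge of cost $1$ from $\source$ to every set node $\setnode_\set$ (one node per set $\set\in\collection$), and a zero-cost (or tiny-cost) edge from $\setnode_\set$ to each element node $\e\in\set$. Also add, for each element $\e$, a direct ``expensive'' edge from $\source$ to $\e$ of large cost $M$, so that $\dist_\edges(\source,\e)$ is realized through some set node containing $\e$ (cost $1+0=1$, say), not through the expensive edge. The traffic log consists of one unit of traffic between $\source$ and each element node $\e$. A backbone achieves stretch factor exactly $1$ on this log precisely when, for every element $\e$, it contains a path of length equal to $\dist_\edges(\source,\e)$; given the cost structure this forces the inclusion of at least one set node $\setnode_\set$ with $\e\in\set$ together with its $\source$--$\setnode_\set$ edge. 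The total cost of the backbone is then essentially the number of such chosen set nodes, so a budget of $\budget=\bk$ (plus the negligible contributions of the zero-cost edges) is met iff the chosen sets form a cover of size $\bk$. Degenerate issues — ensuring the expensive edges never help, handling ties in shortest paths, making all intra-gadget edge costs strictly positive if the model requires it — are routine to patch by choosing $M$ large and perturbing zero costs to $\varepsilon$ with $\varepsilon$ small enough that $\bk$ set-edges of cost $1+\varepsilon$ still fit under a slightly inflated budget while $\bk+1$ do not.

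The step I expect to be the main obstacle is pinning down exactly what ``stretch factor $1$'' forces, because the stretch factor is a single aggregate (weighted harmonic mean) over all pairs rather than a per-pair guarantee: in principle a backbone could overshoot on some pairs and, since distances only go up when edges are removed, there is no compensating undershoot — so $\HM_\calL(\backbone)\ge \HM_\calL(\edges)$ always, with equality iff $\dist_\backbone(\source_i,\dest_i)=\dist_\edges(\source_i,\dest_i)$ for every $i$ with $\traffic_i>0$. That observation (equality in the harmonic mean comparison holds only termwise) is what makes the reduction clean, and I would state and prove it as the first lemma of the argument. Everything after that is bookkeeping on the gadget. An alternative, if one prefers not to argue about the $\sfactor=1$ threshold, is to reduce from \kspanner\ directly by taking $\calL$ to contain all $\binom{\novertices}{2}$ pairs with unit traffic and a target stretch, but the \setcover\ reduction above is more transparent and gives \NPhardness\ even for the special case of a single source, which is a stronger statement.
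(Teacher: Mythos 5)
Your proposal is essentially the paper's own proof: the same reduction from \setcover\ with a hub node joined to set-nodes by cost-$1$ edges, set-nodes joined to their elements by cost-$0$ edges, a traffic log pairing each element with the hub, budget $\budget=\bk$, and the equivalence ``cover of size $\bk$ exists iff stretch factor $1$ is achievable'' (your extra cost-$M$ direct edges are harmless but unnecessary). Your explicit observation that $\sfactor_\calL(\backbone)=1$ forces termwise equality of distances --- because deleting edges can only increase each $\dist_\backbone(\source_i,\dest_i)$, so the harmonic means can only coincide when every pair attains its original shortest-path length --- is a correct and worthwhile filling-in of a step the paper's sketch leaves implicit.
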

\begin{proof}[Sketch]
We obtain a reduction from the \setcover\ problem: 
given a ground set $\universe=\{\e_1,\ldots,\e_\noe\}$, 
a collection $\collection = \{\set_1,\ldots,\set_\nos\}$ 
of subsets of \universe, and an integer~\bk, 
determine whether there are \bk\ sets in \collection\
that cover all the elements of \universe.

Given an instance of the \setcover\ problem we form an instance of the
\backbonediscovery\ problem as follows (See Figure~\ref{fig:np_gadget} for illustration). 
We create one node $\e_i$ for each $\e_i\in\universe$
and one node $\setnode_j$ for each $\set_j\in\collection$.
We also create a special node $\terminal$.
We create an edge $(\e_i,\setnode_j)$ if and only if $\e_i\in\set_j$
and we assign to this edge cost 0.
We also create an edge $(\setnode_j,\terminal)$ for all
$\set_j\in\collection$ and we assign to this edge cost 1.
As far as the traffic log is concerned, 
we set $\calL=\{(\e_i,\terminal,1)\mid\e_i\in\universe\}$, 
that is, we pair each $\e_i\in\universe$ with the special node
\terminal\ with volume~1.
Finally we set the budget $\budget=\bk$.
It is not difficult to see that the instance of the
\backbonediscovery\ problem constructed in this way has a solution
with stretch factor 1 if and only if the given instance of the
\setcover\ problem has a feasible solution. \hfill
\end{proof}

\begin{figure}[h]
\centering
\includegraphics[width=0.5\textwidth]{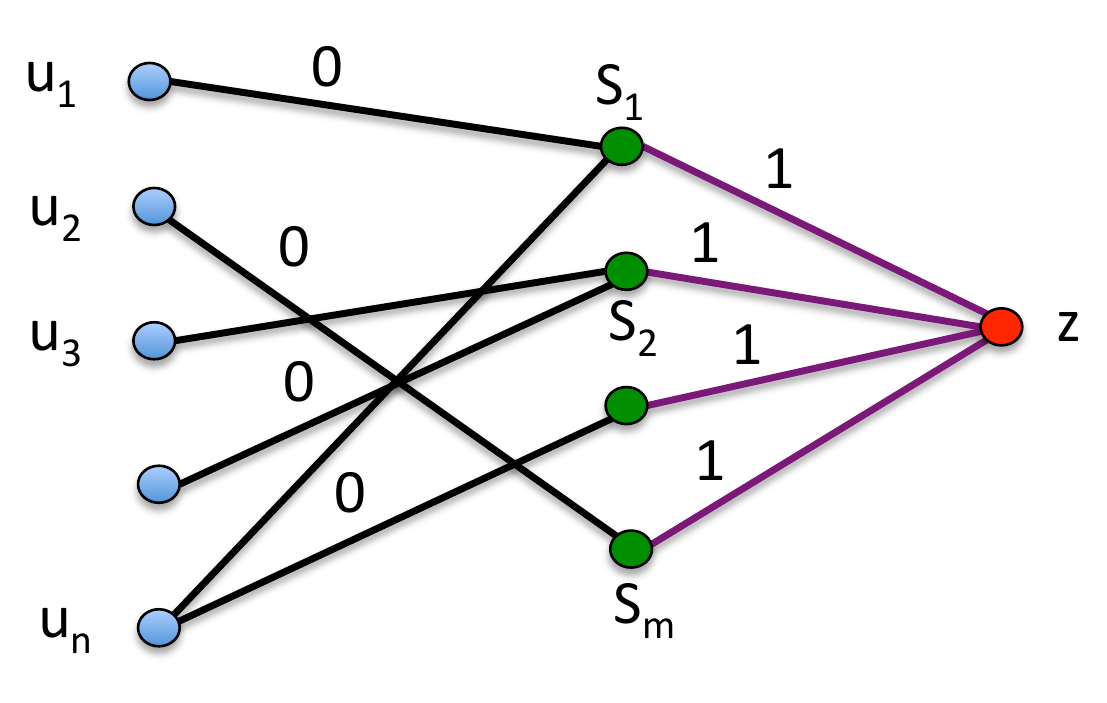}
\caption{Reduction from Set Cover to \backbonediscovery\ for the
log $\calL=\{(u_{i},z,1)|u_{i} \in U \}$}
\label{fig:np_gadget}
\end{figure}

\section{Algorithm}
\label{sec:algorithms}

The algorithm we propose for the \backbonediscovery\ problem 
is a {\em greedy} heuristic that connects one-by-one the
source--destination pairs of the traffic log~\calL. 
A distinguishing feature of our algorithm is that it utilizes a notion
of {\em edge benefit}.
In particular, we assume that for each edge $\edge\in\edges$ we have
available a benefit measure $\benefit(\edge)$. 
The higher is the measure $\benefit(\edge)$ the more beneficial it is
to include the edge \edge\ in the final solution. 
The benefit measure is computed using the traffic log \calL\ and it
takes into account the global structure of the network \graph.

% As we discussed previously (Section \ref{eb}), 
The more central an edge is with respect to a traffic log, 
the more beneficial it is to include it in the solution, 
as it can be used to serve many source--destination pairs. 
In this paper we use {\em edge-between\-ness} as a centrality measure,
adapted to take into account the traffic log.
We also experimented with {\em commute-time centrality}, 
but edge-between\-ness was found to be more effective. 

Our algorithm relies on the notion of {\em effective distance}
$\effective(\edge)$, 
defined as $\effective(\edge)=\ccost(\edge)/\benefit(\edge)$, 
where $\ccost(\edge)$ is the cost of an edge~$\edge\in\edges$, 
and $\benefit(\edge)$ is the edge-between\-ness of \edge.
%% The idea is to consider as cost for an edge \edge\ its effective
%% distance $\effective(\edge)$, rather than its actual distance $\ccost(\edge)$.
The intuition is that by dividing the cost of each edge by its
benefit, we are biasing the algorithm towards selecting edges with
high benefit.

We now present our algorithm in more detail.

\subsection{The greedy algorithm.}
\label{section:greedy}

As discussed above, our algorithm operates with effective distances 
$\effective(\edge)=\ccost(\edge)/\benefit(\edge)$, 
where $\benefit(\edge)$ is a benefit score for each edge~\edge. 
The objective is to obtain a cost/benefit trade-off: 
edges with small cost and large benefit are favored to be included in
the backbone. 
In the description of the greedy algorithm that follows, we assume that the
effective distance $\effective(\edge)$ of each edge is given as
input. 

The algorithm works in an iterative fashion, maintaining and growing
the backbone, starting from the empty set.
In the $i$-th iteration the algorithm picks a source--destination pair 
$(\source_i,\dest_i)$ from the traffic log \calL, 
and ``serves'' it. 
Serving a pair $(\source_i,\dest_i)$ means computing a
shortest path $\ppath_i$ from $\source_i$ to $\dest_i$, 
and adding its edges in the current \backbone, if they are not already
there.
For the shortest-path computation the algorithm uses the effective
distances~$\effective(\edge)$.
When an edge is newly added to the backbone
its cost is subtracted from the available budget. Here, the actual cost of the edge $\ccost(\edge)$ (instead of the $\effective(\edge)$) is used. 
Also its effective distance is reset to zero, 
since it can be used for free in subsequent iterations of the algorithm.
The source--destination pair that is chosen to be served in each
iteration is the one that reduces the stretch factor the most at that iteration; 
and hence the greedy nature of the algorithm. 

The algorithm proceeds until it exhausts all its budget or until the
stretch factor becomes equal to 1
(which means that all pairs in the log are served via a shortest
path).
The pseudo-code for the greedy algorithm is shown as
Algorithm~\ref{algorithm:greedy}.

\begin{algorithm}[t]
\caption{\label{algorithm:greedy}The greedy algorithm}
\begin{algorithmic}[1]
\REQUIRE{Network $\graph=(\vertices,\edges)$, edge costs
 $\ccost(\edge)$, benefit costs $\benefit(\edge)$, 
 cost budget \budget, traffic log $\calL = \{ (\source_i, \dest_i, \traffic_i)\}$ }
\ENSURE{ A subset of edges $\backbone\subseteq\edges$ of total cost
 $\ccost(\backbone)\le\budget$ and small stretch factor
 $\sfactor(\backbone)$}
%% \STATE {Compute edge benefits $\benefit(\edge)$} 
\FOR {all $\edge\in\edges$}
\STATE {$\effective(\edge)\leftarrow \ccost(\edge)/\benefit(\edge)$}
\ENDFOR
\STATE {$\backbone\leftarrow\emptyset$}
\STATE {$\sfactor\leftarrow\infty$}

\WHILE{$(\budget>0)$ and $(\sfactor>1)$}
\FOR{ each $(\source_i, \dest_i, \traffic_i)\in\calL$ } 
\STATE {$\ppath_i\leftarrow${\sc ShortestPath}$(\source_i,\dest_i,\graph,\effective)$}
\STATE {$\sfactor_i\leftarrow${\sc StretchFactor}$(\backbone \cup \ppath_i,\graph,\calL)$}
\ENDFOR
\STATE {$p^*\leftarrow \min_i \{ \sfactor_i \}$}
\hfill\COMMENT{the path with min. stretch factor in the above iteration} 
\STATE {$\backbone' \leftarrow p^*\setminus\backbone$}
\hfill\COMMENT{edges to be newly added} 
\IF {$\ccost(\backbone')>\budget$}
\STATE {Return $\backbone$} 
\hfill\COMMENT{budget exhausted} 
\ENDIF
\STATE {$\backbone\leftarrow\backbone \cup \backbone'$}
\hfill\COMMENT{add new edges in the backbone } 
\STATE {$\effective(\backbone')\leftarrow 0$}
\hfill\COMMENT{reset cost of newly added edges} 
\STATE {$\budget\leftarrow\budget-\ccost(\backbone')$}
\hfill\COMMENT{decrease budget} 
\STATE {$\sfactor\leftarrow${\sc StretchFactor}$(\backbone,\graph,\calL)$}
\hfill\COMMENT{update $\sfactor$} 
\ENDWHILE
\STATE {Return $\backbone$}
\end{algorithmic}
\end{algorithm}

We are experimenting with two variants
of this greedy scheme, depending on the benefit score we use.
%% (and consequently on the effective distance). 
These are the following:
\begin{description}
\item[\greedy:] We use uniform benefit scores ($\benefit(\edge)=1$). 
\item[\greedyeb:] The benefit score of an edge is set equal to its 
 {\em edge-between\-ness centrality}. 
%% \item[\greedyct:] The benefit score of an edge is set equal to its {\em commute-time centrality}. 
\end{description}

% \newpage

\subsection{Speeding up the greedy algorithm.}
\label{sec:optimizations}

As we show in the experimental section the greedy algorithm provides
solutions of good quality, in particularly the variant with the
edge-betweenness weighting scheme. 
However, the algorithm is computationally expensive, and thus in this
section we discuss a number of optimizations. 
We start by analyzing the running time of the algorithm.

\spara{Running time.}
Assume that the benefit scores $\benefit(\edge)$ are given for all edges
$\edge\in\edges$, and that the algorithm performs \iter\ iterations until it
exhausts its budget. 
In each iteration we need to perform $\bigO(\nopairs^2)$ shortest-path
computations, where \nopairs\ is the size of the traffic log \calL.
A shortest path computation is 
$\bigO(\noedges+\novertices\log \novertices)$, and thus the overall
complexity of the algorithm is 
$\bigO(\iter\nopairs^2(\noedges+\novertices\log \novertices))$.
The number of iterations \iter\ depends on the available budget and in
the worst case it can be as large as \nopairs. 
However, since we aim at finding sparse backbones, 
 the number of iterations is typically smaller.

\spara{Optimizations with no approximation.}
We first show how to speed up the algorithm, 
while guaranteeing the same solution with 
the na\"{i}ve implementation of the greedy.
Since the most expensive component is the computation of
shortest paths on the newly-formed network, we make sure that we
compute the shortest path only when needed.
Our optimizations consist of two parts.

First, during the execution of the algorithm we 
maintain the connected components that the backbone creates  in the network.
Then, we do not need to compute shortest paths for all
$(\source_i,\dest_i)$ pairs, for which $\source_i$ and $\dest_i$
belong to different connected components; we know that their distance
is $\infty$.
This optimization is effective at the early stages of the algorithm,
when many terminals belong to different connected components. 

Second, when computing the decrease in the stretch factor due to a candidate
shortest path to be added in the backbone, 
for pairs for which we have to recompute a shortest-path distance, 
we first compute an optimistic lower bound, based on the shortest path
on the whole network (which we compute once in a preprocessing step). 
If this optimistic lower bound is not better than the current best
stretch factor then we can skip the computation of the shortest path
on the backbone. 

As shown in the empirical evaluation of our algorithms, depending on
the dataset, these optimization heuristics lead to 20--35\% improvement
in performance.

\spara{Optimization based on landmarks.}
To further improve the running-time of the algorithm
we compute shortest-path distances using
landmarks~\cite{DBLP:conf/wsdm/SarmaGNP10,DBLP:conf/cikm/PotamiasBCG09},
an effective technique to approximate distances on graphs. 
Here we use the approach of Potamias et
al.~\cite{DBLP:conf/cikm/PotamiasBCG09}:
A set of \nolandmarks\ landmarks 
$\landmarks = \{\landmark_1,\ldots,\landmark_\nolandmarks\}$
is selected and for each vertex $\vertex\in\vertices$ the distances 
$\dist(\vertex,\landmark_i)$
%% , $i=1,\ldots,\nolandmarks$, 
to all landmarks are computed and stored as an
$\nolandmarks$-dimensional vector representing vertex~\vertex.
The distance between two vertices $\vertex_1,\vertex_2$ 
is then approximated as 
$\min_{i} \{ \dist(\vertex_1,\landmark_i) +
\dist(\vertex_2,\landmark_i) \}$, 
i.e., the tightest of the upper bounds induced by the set of landmarks
\landmarks.

To select landmarks we use high-degree non-adjacent vertices in the
graph, which is reported as one of the best performing methods by
Potamias et al.~\cite{DBLP:conf/cikm/PotamiasBCG09}.
Note that the distances are now approximations to the true distances,
and the method trades accuracy by efficiency via the number of
landmarks selected. 
In practice a few hundreds of landmarks are sufficient to provide
high-quality approximations even for graphs with millions of
vertices~\cite{DBLP:conf/cikm/PotamiasBCG09}.

For the running-time analysis, 
note that in each iteration we need to compute the distance between all graph
vertices and all landmarks.
This can be done with \nolandmarks\ single-source shortest-path
computations and the running time is
$\bigO(\nolandmarks(\noedges+\novertices\log\novertices))$.
The landmarks can then be used to approximate shortest-path distances
between all source-destination pairs in the traffic log \calL, 
with running time $\bigO(\nopairs\nolandmarks)$.
Thus, the overall complexity is 
$\bigO(\iter\nolandmarks(\nopairs+\noedges+\novertices\log\novertices))$.
Since \nolandmarks\ is expected to be much smaller than \nopairs, 
the method provides a significant improvement over the na\"{i}ve
implementation of the greedy. As shown in the experimental evaluation, using landmarks provides an improvement of up to 4 times in terms of runtime in practice.

% end comment primal dual algorithm

% \todo[Aris]{What about the last greedy step that we added?}

\subsection{Edge-between\-ness centrality.}
\label{section:ebc}

As we already discussed in the previous sections, our
greedy algorithm uses edge centrality
measures for benefit scores $\benefit(\edge)$.
In this section we discuss edge
between\-ness, and in particular show how we adapt the measure to take into
account the traffic log~\calL.

We first recall the standard definition of edge-betweennes.
Given a network $\graph=(\vertices,\edges)$, we define $\vertexpairs=
{\vertices \choose 2}$ to be the set of all pairs of nodes of
$\graph$.
Given a pair of nodes $(\source, \dest)\in \vertexpairs$ and an edge $\edge\in\edges$, 
we define by 
$\nsp_{\source,\dest}$ the total number of shortest paths from $\source$ to $\dest$, 
and by
$\nsp_{\source,\dest}(\edge)$ the total number of shortest paths from $\source$ to $\dest$ that pass though edge \edge.

The standard definition of edge-between\-ness centrality $\EBC(\edge)$
of edge {\edge} is the following:
\[
\EBC(\edge) =
\sum_{(\source,\dest)\in\vertexpairs} \frac{\nsp_{\source,\dest}(\edge)}{\nsp_{\source,\dest}}.
\]
%% In other words, all pairs of nodes of the network contribute to the
%% between\-ness centrality of an edge, and they contribute with the
%% same weight.
In our problem setting we take into account the traffic log
$\calL=\{(\source_i, \dest_i, \traffic_i)\}$, and we define the
edge-between\-ness $\EBClog(\edge)$ of an edge \edge\ with respect to
log~\calL, as follows.
\[
\EBClog(\edge) =
\sum_{(\source,\dest,\traffic)\in\calL} \traffic\,\frac{\nsp_{\source,\dest}(\edge)}{\nsp_{\source,\dest}}.
\]
In this modified definition only the source--destination pairs of the
log \calL\ contribute to the centrality of the edge \edge, and the
amount of contribution is proportional to the corresponding traffic.
The adapted edge-between\-ness can still be
computed in $\bigO(\novertices\noedges)$ time~\cite{Brandes2007Centrality}.

\section{Experimental evaluation}
\label{sec:experiments}

The aim of the experimental section is to evaluate the performance of
the proposed algorithm, the optimizations, and the edge-betweenness measure.
We also compare our algorithm with other state-of-the-art methods which attempt to solve a similar problem.

\spara{Datasets.}
We experiment with six real-world datasets, 
four transportation networks, one web network and one internet-traffic network.
For five of the datasets we also obtain real-world traffic,
while for one we use synthetically-generated traffic.
The characteristics of our datasets are provided in
Table~\ref{tab:datasets}, and a brief description follows.

\begin{table}
\centering
\resizebox{\linewidth}{!}{%
\begin{tabular}{lcrrcc}
\hline
\multirow{2}{*}{Dataset} & \multirow{2}{*}{Type} & 
\multirow{2}{*}{\# Nodes} & \multirow{2}{*}{\# Edges} & Real & Real \\
 & & & & network & traffic \\
\hline
\london & transportation & 316 & 724 & \checkmark & \checkmark \\
\flights & transportation & 1\,268 & 51\,098 & \checkmark & \checkmark \\
\ukroad & transportation & 8\,341 & 13\,926 & \checkmark & - \\
% \sjroad & transportation & 18\,263 & 47\,594 & \checkmark & - \\
\nyctaxi & transportation & 50\,736 & 158\,898 & \checkmark & \checkmark \\
\wikispeedia & web & 4\,604 & 213\,294 & \checkmark & \checkmark \\ 
\abeline & internet & 12 & 15 & \checkmark & \checkmark \\
\hline
\end{tabular}}
\caption{\label{tab:datasets}Dataset statistics.}
\end{table}

\spara{\london.} 
The London Subway network consists of subway stops and links between
them.\footnote{{\small\url{http://bit.ly/1C9PbLT}}}
We use the geographic distance between stations as a proxy for edge
costs.
% More accurate distance functions would be possible to use, such as
% rail distance or travel time, but those are not available to us. 
We also obtain a traffic log $\calL$ extracted from the Oyster card system.\footnote{{\small\url{http://bit.ly/1qM2BYi}}}
The log consists of aggregate trips made by passengers 
between pairs of stations during a one-month period (Nov-Dec 2009). 
We filter out source-destination pairs with traffic less than 100 and
remove bi-directional pairs by selecting one of them at random and
summing up their traffic. 

\spara{\flights}.
We obtain a large network of US airports, and the list of all flights
within the US during 2009--2013, from the Bureau of Transportation Statistics.\footnote{{\small\url{http://1.usa.gov/1ypXYvL}}}
Flying distances between airports, obtained using Travelmath.com,
%% \footnote{{\small\url{http://www.travelmath.com/flying-distance/}}}
are used as edge costs. 
The traffic volume is the number of flights between airports.

\spara{\nyctaxi}.
We obtain the complete road network of NYC using OpenStreetMap data.\footnote{{\small\url{http://metro.teczno.com/\#new-york}}}
In this network each node corresponds to a road intersection and
each link corresponds to a road segment. Edge costs are computed
as the geographic distances between the junctions.
Data on the pickup and drop-off locations of NYC taxis for 2013 was 
used to construct the traffic
log.\footnote{{\small\url{http://chriswhong.com/open-data/foil_nyc_taxi/}}} 
The $2\,000$ most 
frequently used source-destination pairs was used to create the
traffic log.

\spara{\wikispeedia}.
Wikispeedia\footnote{{\small\url{http://snap.stanford.edu/data/wikispeedia.html}}}~\cite{west2009wikispeedia} is an online crowd sourcing game designed to measure semantic distances between 2 wikipedia pages using the paths taken by humans to reach from one page to the other. This dataset contains a base network of hyperlinks between Wikipedia pages and the paths taken by users between two pages. 
We construct the traffic log using the unique (start, end) pages from this data.

\spara{\ukroad}.
Next we consider the UK road
network.\footnote{{\small\url{http://www.dft.gov.uk/traffic-counts/download.php}}}
% Each node of a network corresponds to a road intersection and each
% link corresponds to a road segment. Distances between nodes (edge
% cost) are computed again as the geographic distances between the
% junctions. 
The network construction is similar to what was done with the \nyctaxi\ data.
For simplicity we use only the largest connected component. 
Since we were not able to obtain real-world traffic data for this network,
we generate synthetic traffic logs $\calL$ simulating different
scenarios. 
In particular we generate traffic logs according to four different
distributions:
($i$) power-law traffic volume, power-law \source-\dest\ pairs; 
($ii$) power-law traffic volume, uniformly random \source-\dest\ pairs; 
($iii$) uniformly random traffic volume, power-law \source-\dest\ pairs; and
($iv$) uniformly random traffic volume, uniformly random \source-\dest\ pairs.
These different distributions capture different traffic volume possibility and hence help in understanding the behavior of our algorithm with respect to the traffic log $\calL$.
%% To generate power-law \source-\dest\ pairs, we first generate a
%% power-law distribution containing the nodes and then randomly
%% sample from that distribution.

\spara{\abeline}.
For a qualitative analysis we also consider the well known \abeline\ dataset
consisting of a sample of the network traffic extracted from the
Internet2 backbone\footnote{{\small\url{http://www.internet2.edu}}} 
%% This dataset allows us to visually validate the results of the algorithms.
and
that carries the  network traffic between major universities in the
continental US. 
The network consists of twelve nodes and 15 high-capacity links. 
Associated with each physical link, we also have capacity of the link which
serves as a proxy for the cost of the link.
%% ---high capacity links are more expensive. 
We obtain traffic logs from 2003 between all pairs of nodes.
% We obtain traffic logs from 2003 between $132$
% ($12\times 11$) pairs of distinct nodes.

\iffalse % start comment sjroad

\spara{\sjroad}. 
To study the scalability of our approach, we use the road network from
the city of San Joaquin
County.\footnote{{\small\url{http://www.cs.fsu.edu/~lifeifei/SpatialDataset.htm}}}
This road network is constructed in a similar way as the \ukroad, 
and synthetically generated traffic log is used, according to the same
four distributions. 

\fi % end comment sjroad

\spara{Baseline.}
To obtain better intuition for the performance of our methods
we define a simple baseline, where 
a backbone is created by adding edges in increasing order of their
effective distances $\effective(\edge) =
\ccost(\edge)/\benefit(\edge)$, 
where $\benefit(\edge)$ is edge-betweenness; 
this was the best-performing baseline among other baselines we
tried, such as adding source--destination pairs one by one (i) randomly, (ii) in decreasing order of volume ($w_i$), (iii) in increasing order of effective distance defined using closeness centrality, etc.
%, such as considering commute time or adding edges only in order of their centrality score.
% such as considering edges in order of commute time or centrality. 

\begin{figure*}[t]
\begin{minipage}{.24\linewidth}
\centering
\subfloat[]{\label{}\includegraphics[width=0.95\textwidth, height=\textwidth]{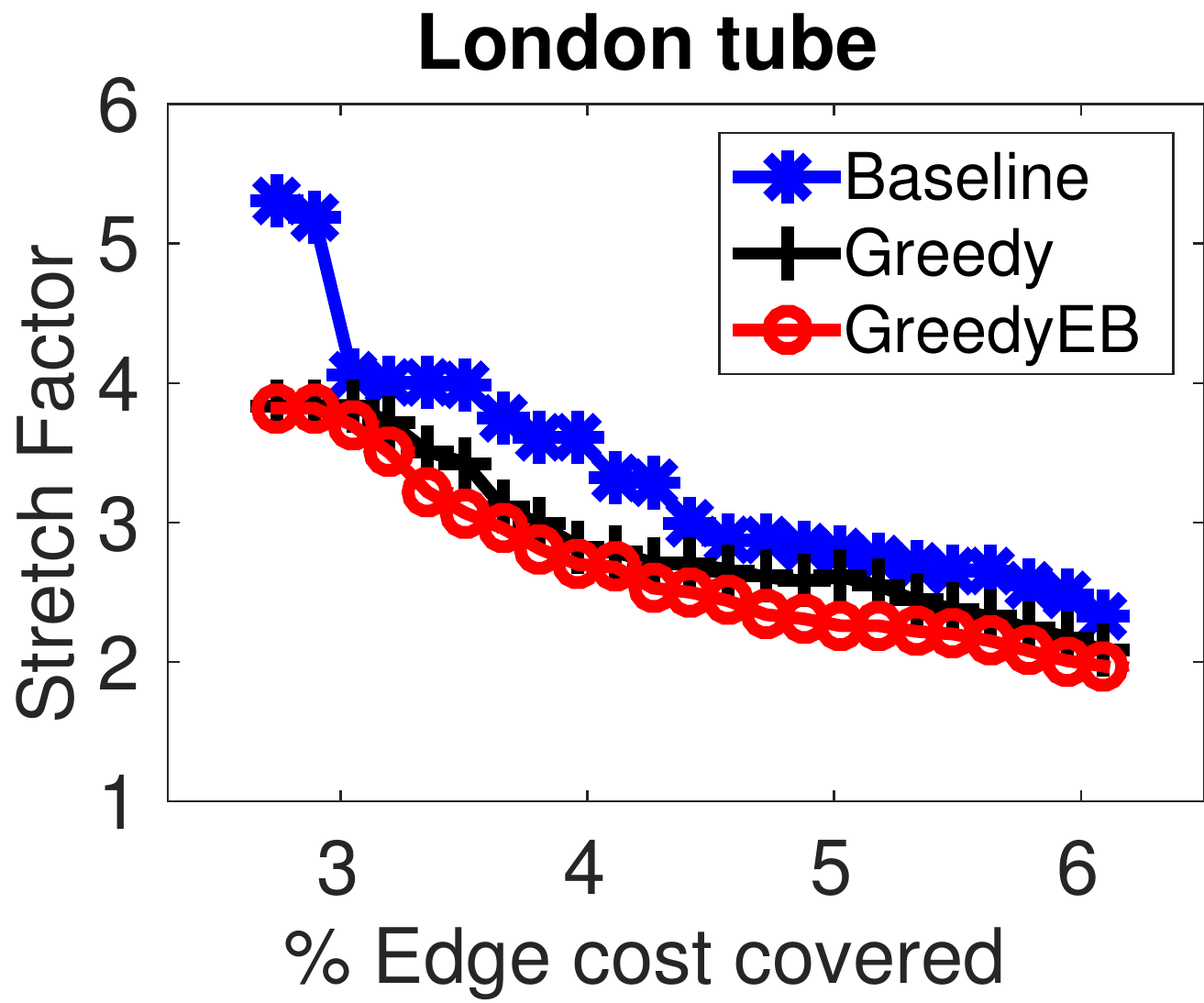}}
\end{minipage}%
\begin{minipage}{.24\linewidth}
\centering
\subfloat[]{\label{}\includegraphics[width=0.95\textwidth, height=\textwidth]{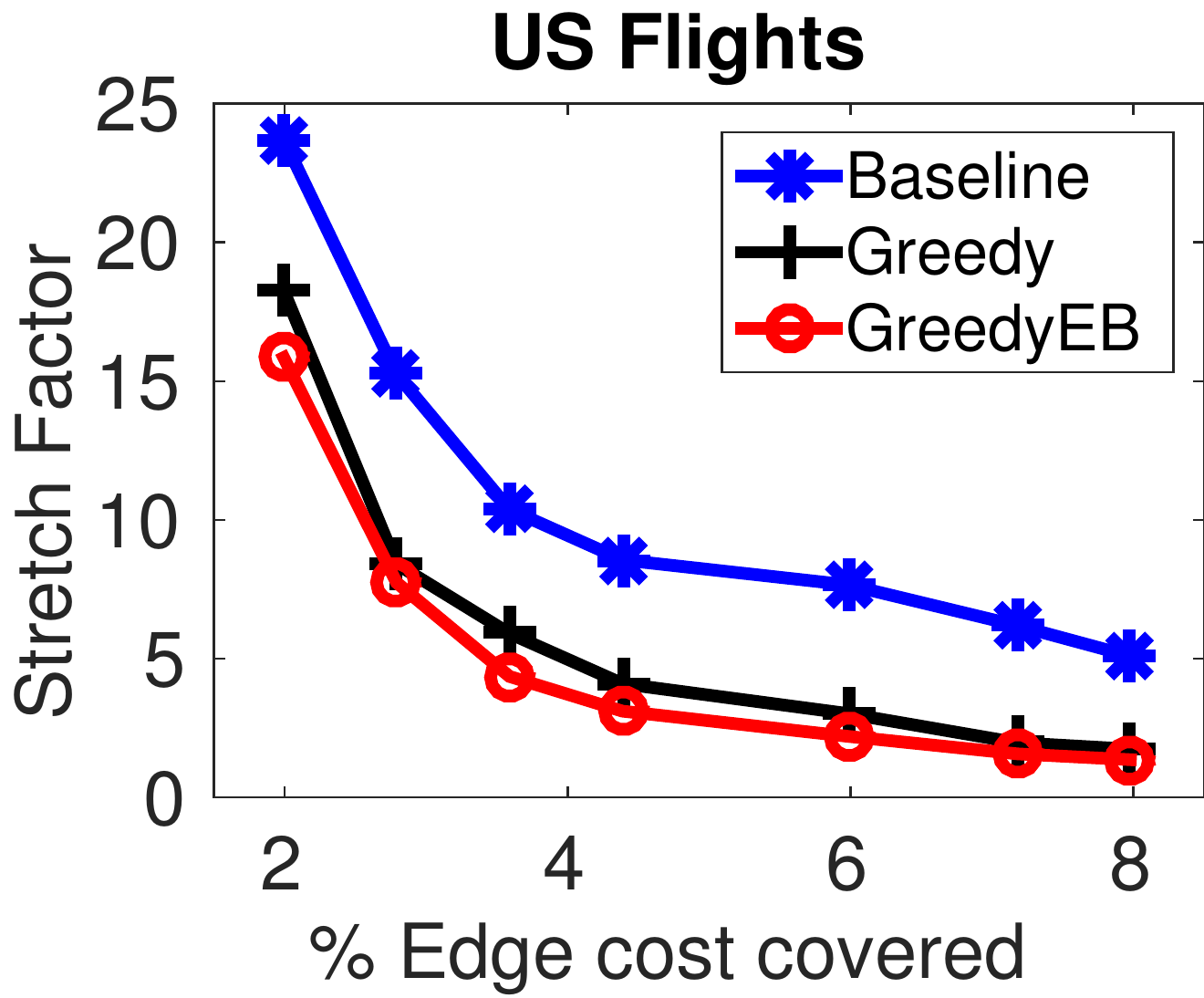}}
\end{minipage}
\begin{minipage}{.24\linewidth}
\centering
\subfloat[]{\label{}\includegraphics[width=0.95\textwidth, height=\textwidth]{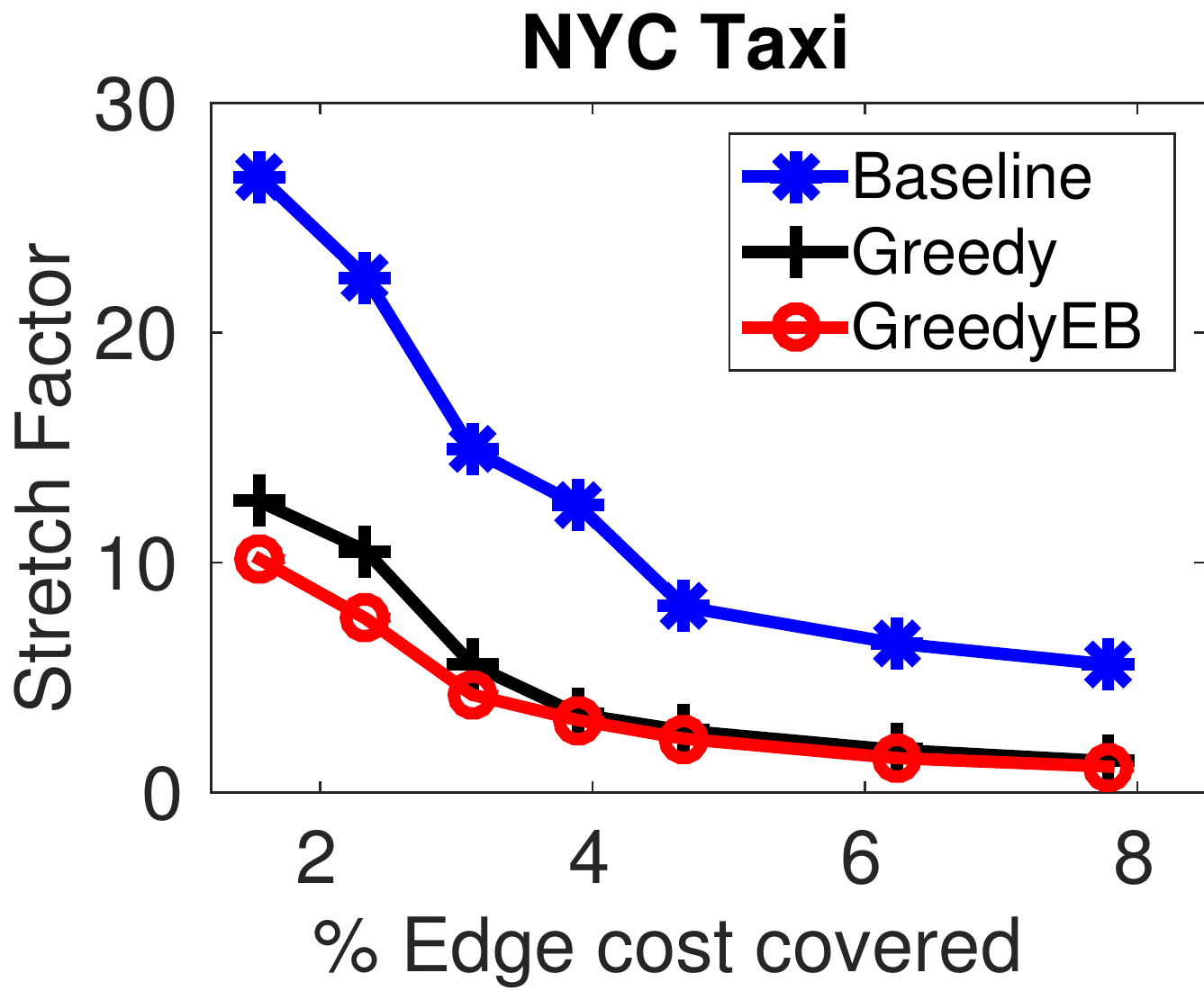}}
\end{minipage}
\begin{minipage}{.24\linewidth}
\centering
\subfloat[]{\label{}\includegraphics[width=0.95\textwidth, height=\textwidth]{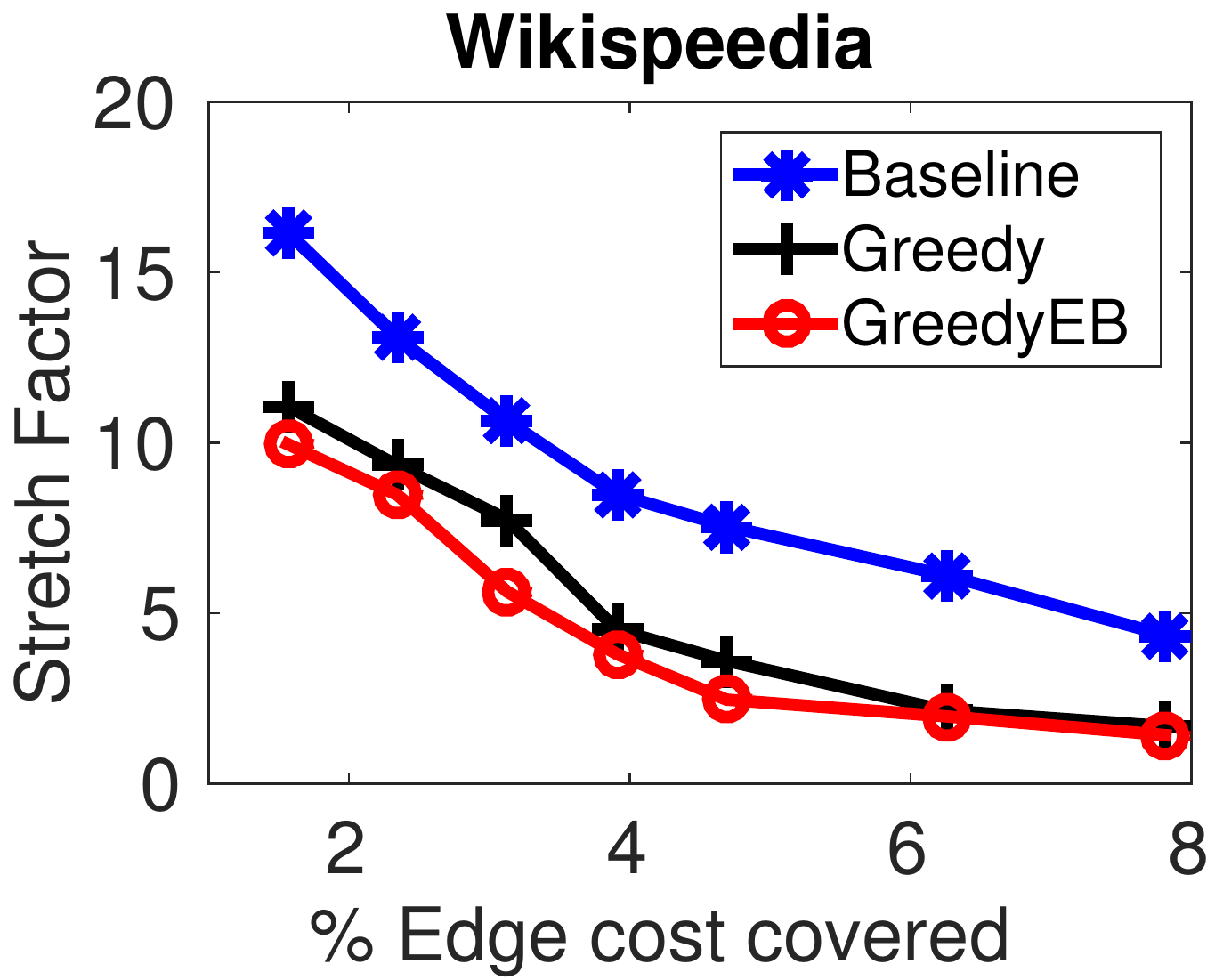}}
\end{minipage}\par\medskip
\begin{minipage}{.24\linewidth}
\centering
\subfloat[]{\label{}\includegraphics[width=\textwidth, height=\textwidth]{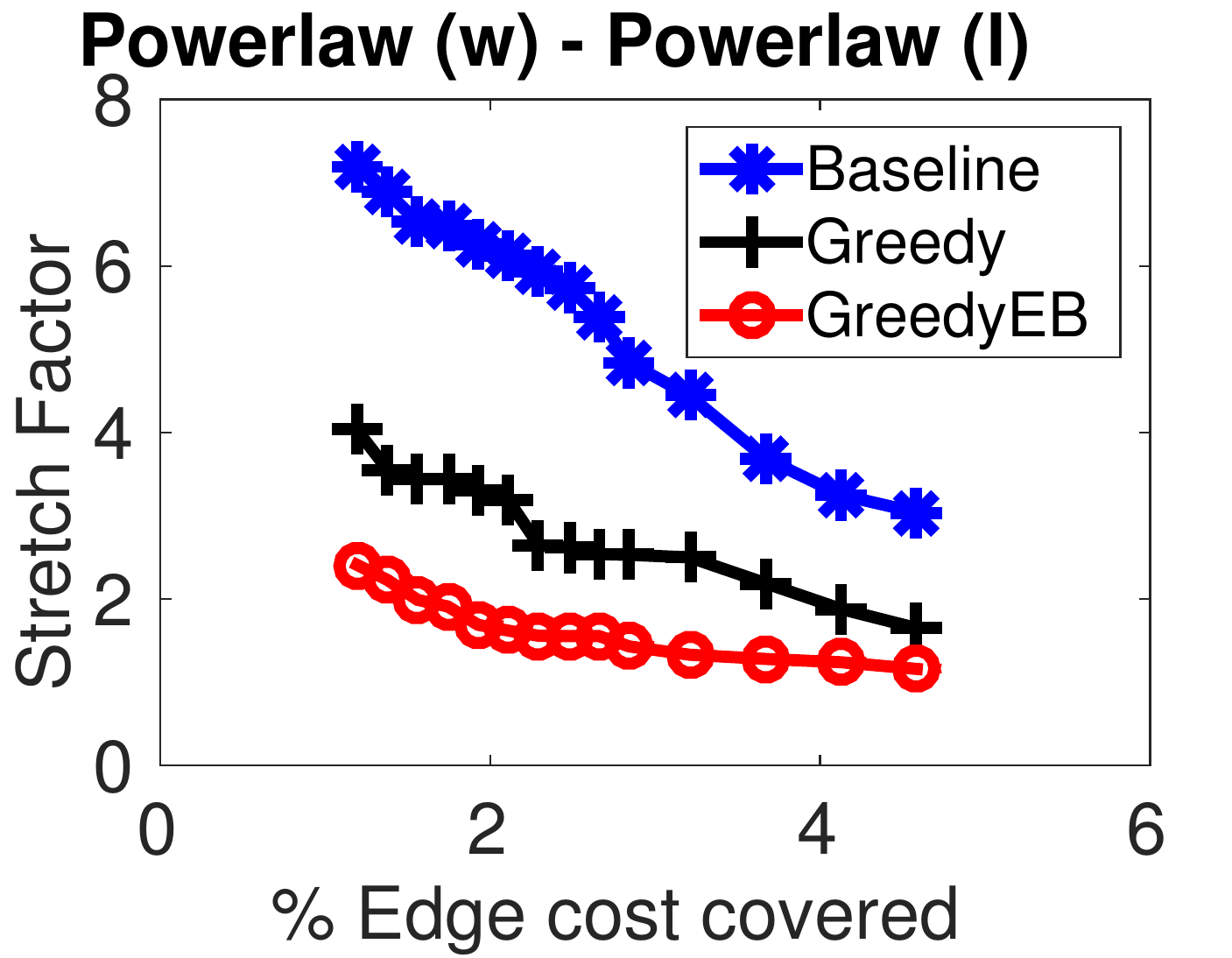}}
\end{minipage}
\begin{minipage}{.24\linewidth}
\centering
\subfloat[]{\label{}\includegraphics[width=\textwidth, height=\textwidth]{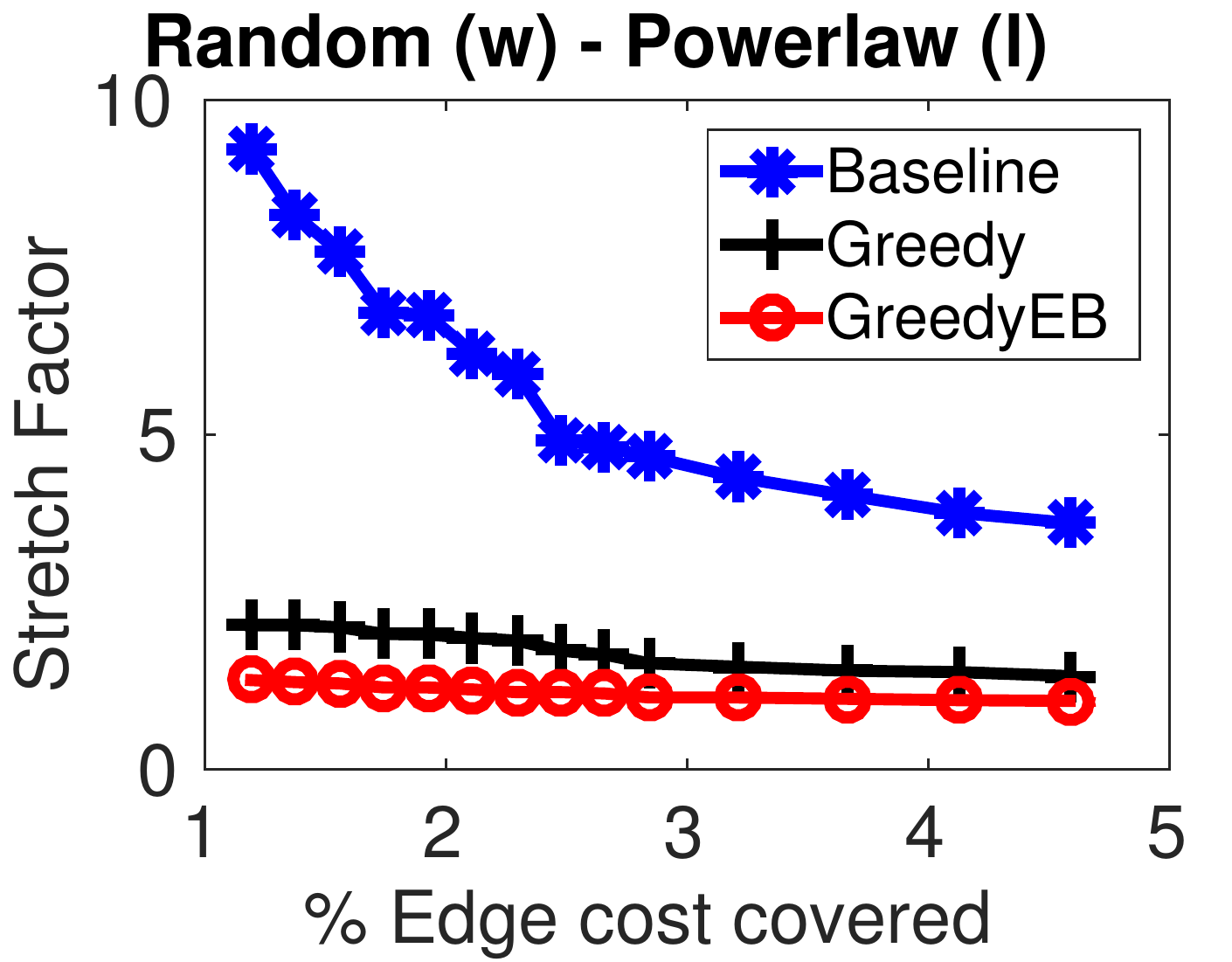}}
\end{minipage}
\begin{minipage}{.24\linewidth}
\centering
\subfloat[]{\label{}\includegraphics[width=\textwidth, height=\textwidth]{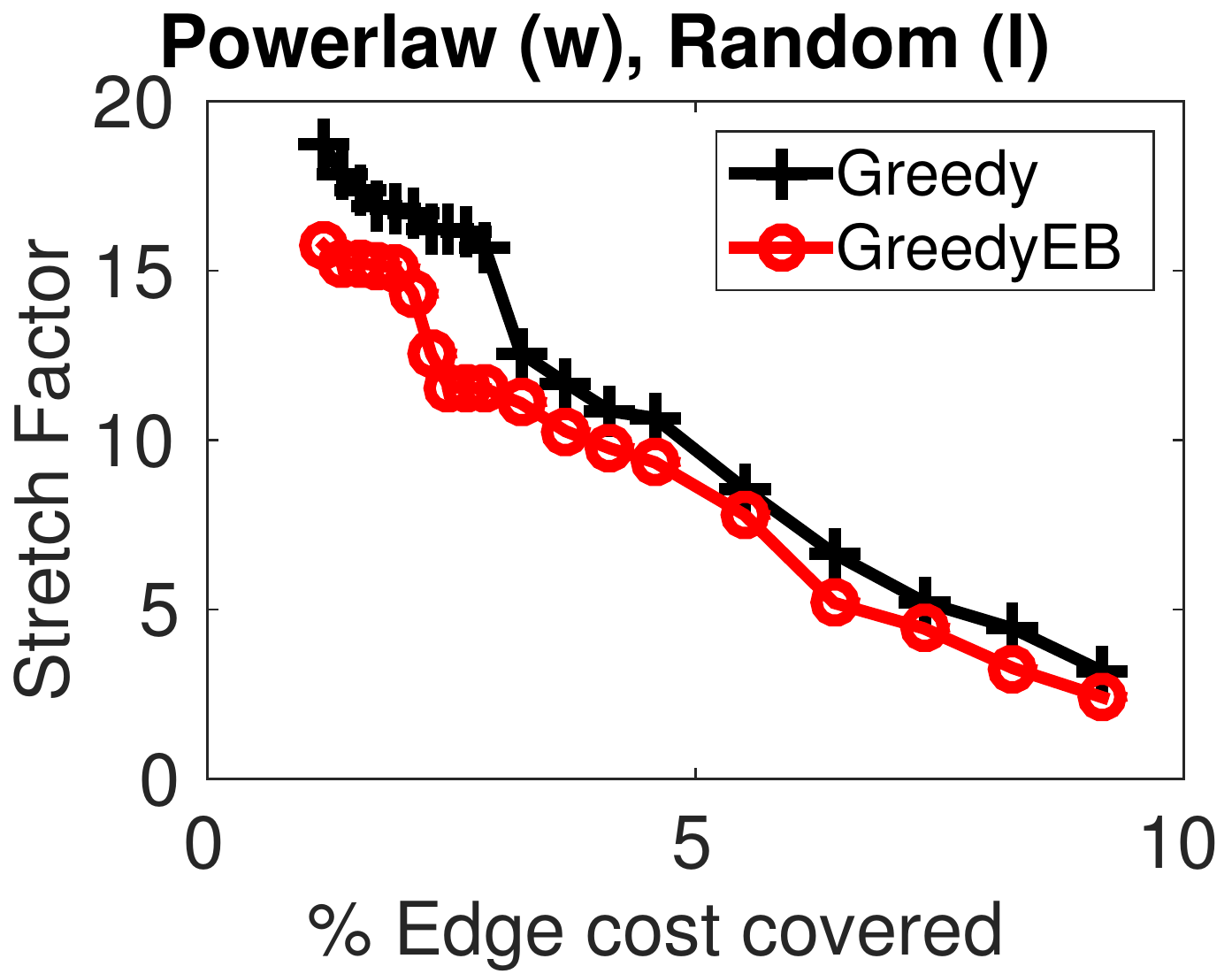}}
\end{minipage}
\begin{minipage}{.24\linewidth}
\centering
\subfloat[]{\label{}\includegraphics[width=\textwidth, height=\textwidth]{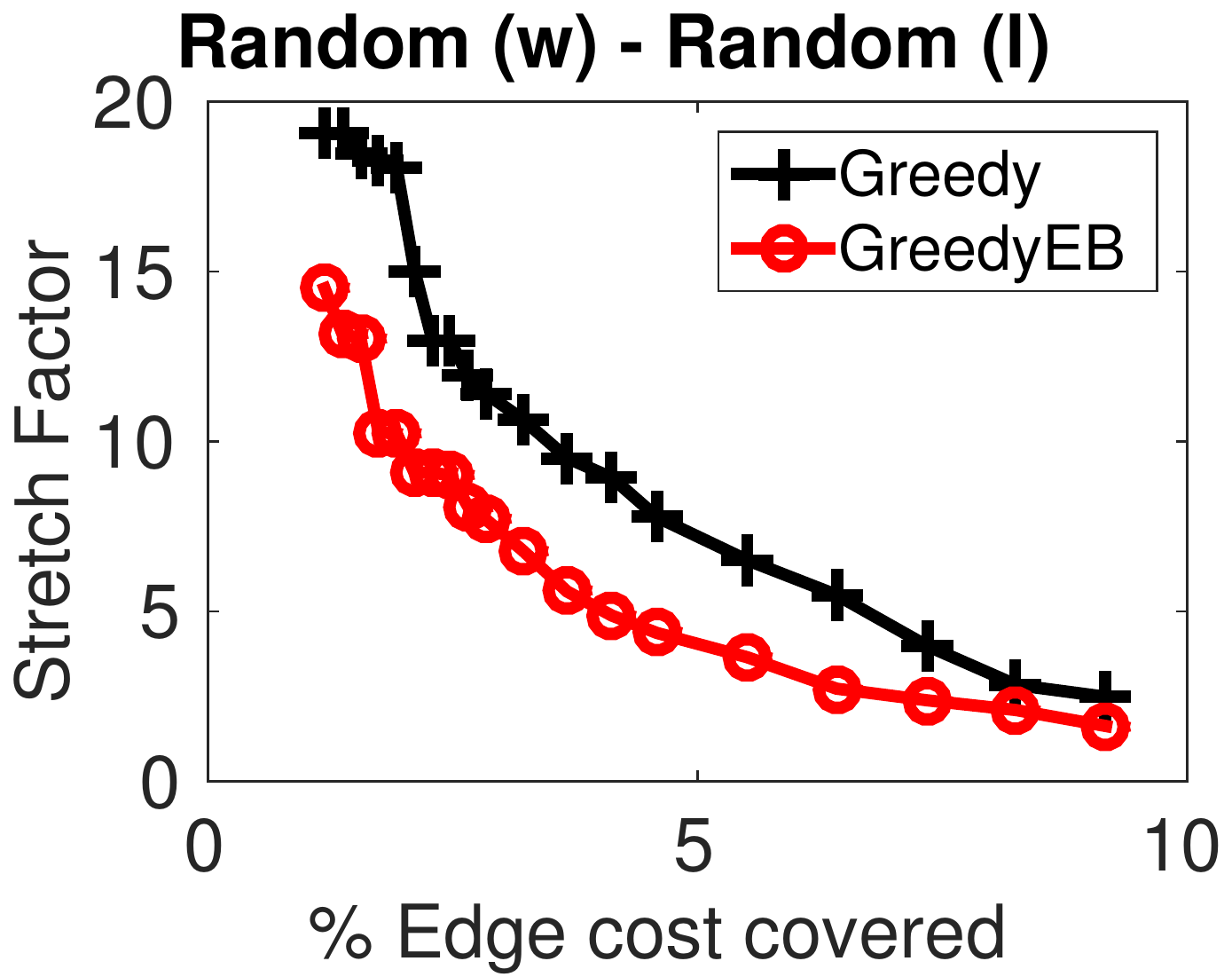}}
\end{minipage}\par\medskip

\caption{Effect of edge-betweenness on the performance of the Greedy algorithm, for various datasets (a) \london, (b) \flights, (c) \nyctaxi, (d) \wikispeedia, (e--h) \ukroad. Baseline is missing in figures (g) and (h) because the stretch factor was very large or infinity. We see a consistent trend that using edge-betweenness improves the performance. In Figures (e--h), $(w)$ indicates traffic volume, and $(l)$ indicates the log.}
\label{fig:ebVSnormal}

\end{figure*}

\subsection{Quantitative results.}
%% We now present the results of our experimental evaluation.
% Even though we perform the analysis on all the datasets, for all settings, we only
% present a subset of the results here, due to lack of space and because
% trends are consistent across datasets.
% A complete analysis will be presented in an extended version of the paper. 
%% A complete analysis of all our experiements is provided in our arXiv version.\footnote{{\small\url{http://arxiv.org/abs/xxxx}}(anonymized for double blind submission)}
%
We focus our evaluation on three main criteria: 
(i) Comparison of the performance with and without the edge-betweenness measure;
(ii) effect of the optimizations, in terms of quality and speedup; and
(iii) effect of allocating more budget on the stretch factor. 

\spara{Effect of edge-betweenness.}
We study the effect of using edge-betweenness in the Greedy algorithm.
The results are presented in Figure~\ref{fig:ebVSnormal}.

\spara{Effect of landmarks.} 
Landmarks provide faster computation with a trade off for quality. 
Figure~\ref{fig:landmarks_timetaken}
shows the speedup achieved when using landmarks.
In the figures, BasicGreedyEB indicates the greedy algorithm that doesn't use any optimizations. GreedyEBCC makes use of the optimizations proposed in Section~\ref{sec:optimizations} which do not use approximation. GreedyEBLandmarks* makes use of the landmarks optimatization and the * indicates the number of landmarks we tried.
% We weren't able to run our algorithms with out using landmarks on \nyctaxi\ dataset.
Figure~\ref{fig:landmarks_performance} shows the performance of \greedyeb\ algorithm with and without using landmarks.
% We do not show the results for the \ukroad\ dataset because of space constraints, but they are consistent with the results on other datasets.

\spara{Budget vs.\ stretch factor.}
We examine the trade-off between budget and stretch factor for our
algorithm and its variants. A lower stretch factor for the same budget indicates that the algorithm is able to pick better edges for the backbone.
Figure~\ref{fig:ebVSnormal} shows the trade-off between
budget and stretch factor for all our datasets.
In all figures the budget used by the algorithms, shown in the
$x$-axis, is expressed as a percentage of the total cost of all the
edges in the network. 

%% \todo[Kiran]{@all, can we remove figures c,d in
%% Figure~\ref{fig:landmarks_performance}? since they are not adding a
%% lot to the story (and have a similar trend) or will it be
%% cheating?}

\begin{figure*}[t]
\begin{minipage}{.24\linewidth}
\centering
\subfloat[]{\label{}\includegraphics[width=\textwidth, height=\textwidth]{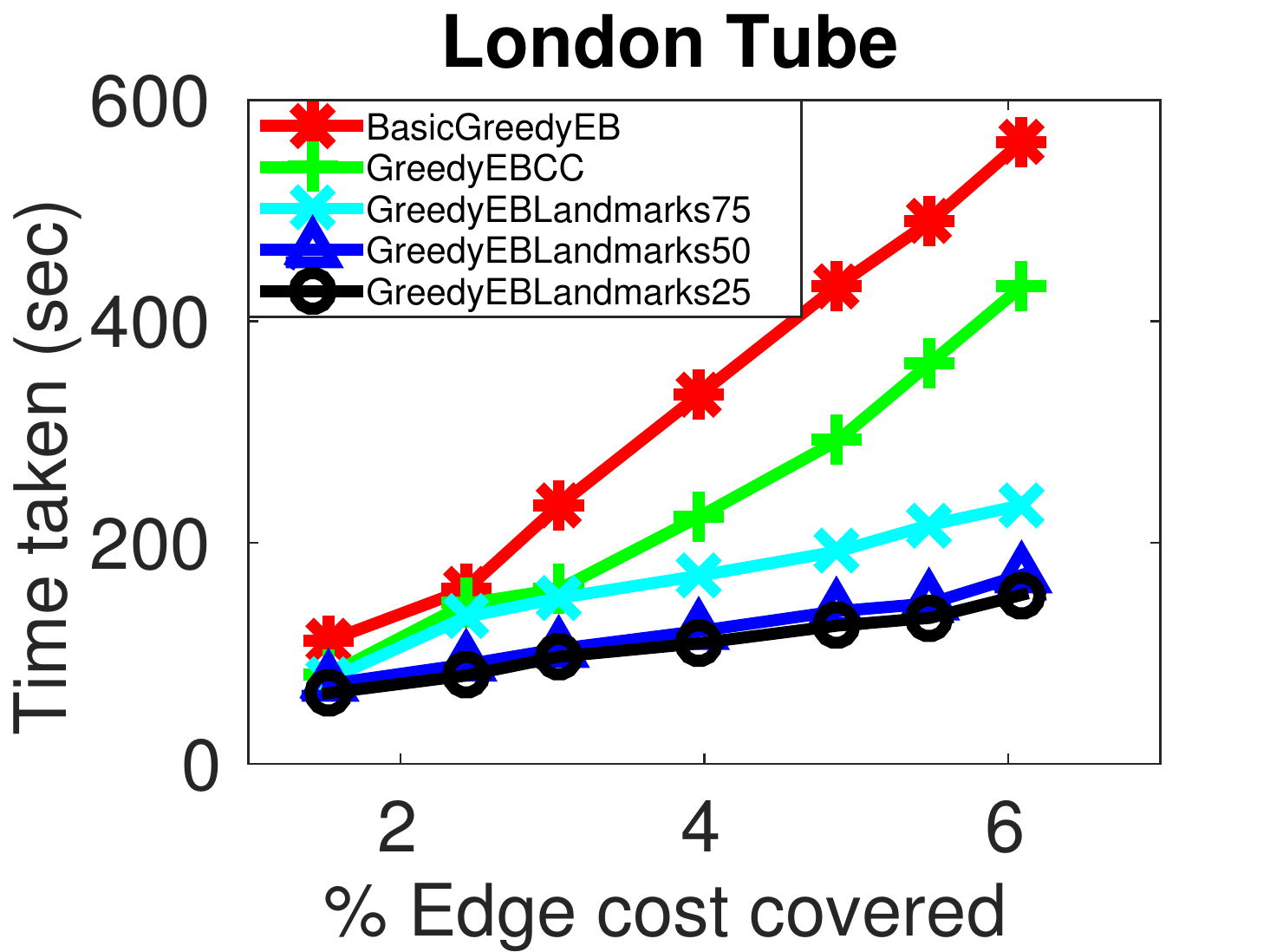}}
\end{minipage}%
\begin{minipage}{.24\linewidth}
\centering
\subfloat[]{\label{}\includegraphics[width=\textwidth, height=\textwidth]{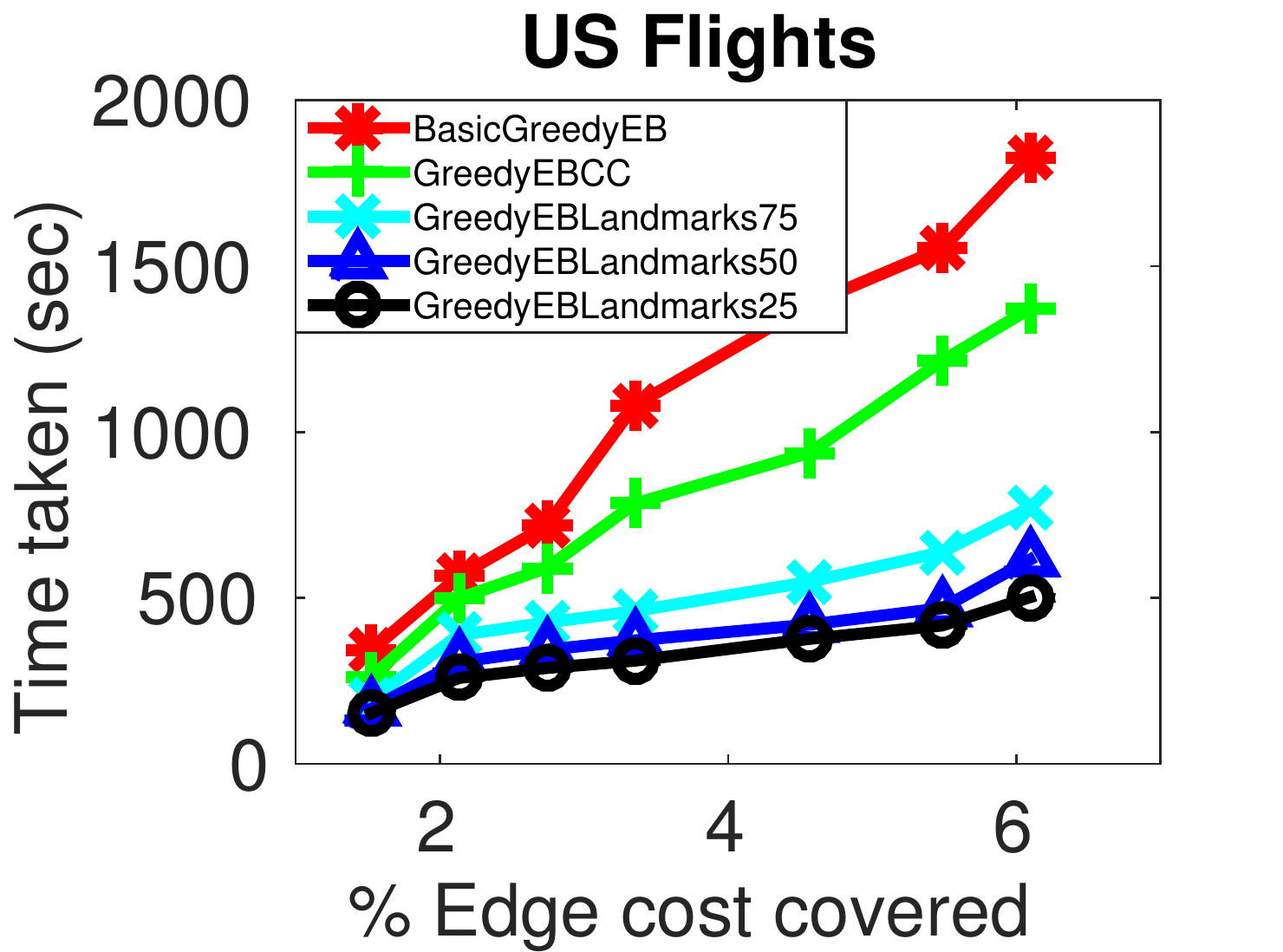}}
\end{minipage}
%\par\medskip
%
\begin{minipage}{.24\linewidth}
\centering
\subfloat[]{\label{}\includegraphics[width=\textwidth, height=\textwidth]{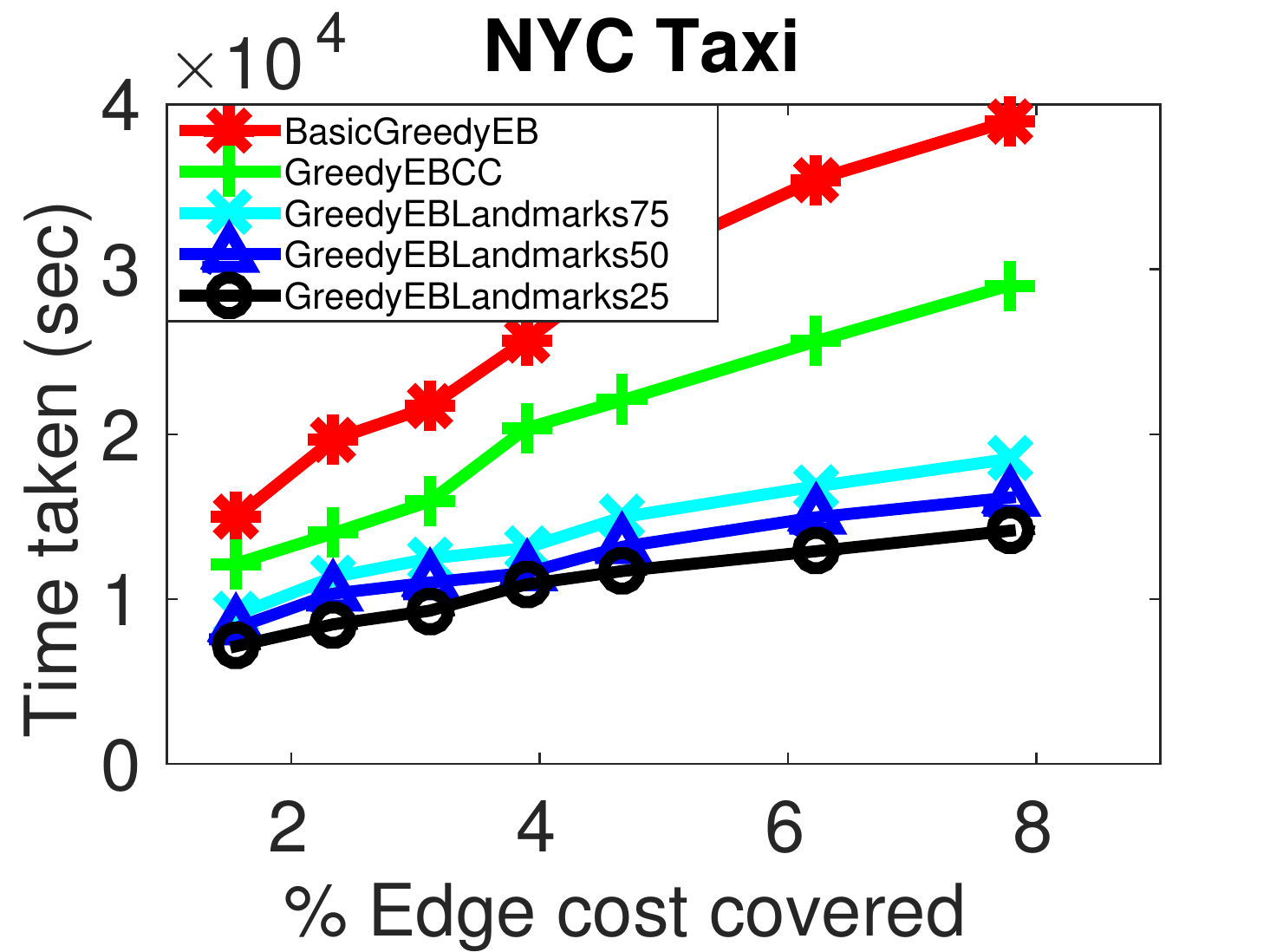}}
\end{minipage}%
\begin{minipage}{.24\linewidth}
\centering
\subfloat[]{\label{}\includegraphics[width=\textwidth, height=\textwidth]{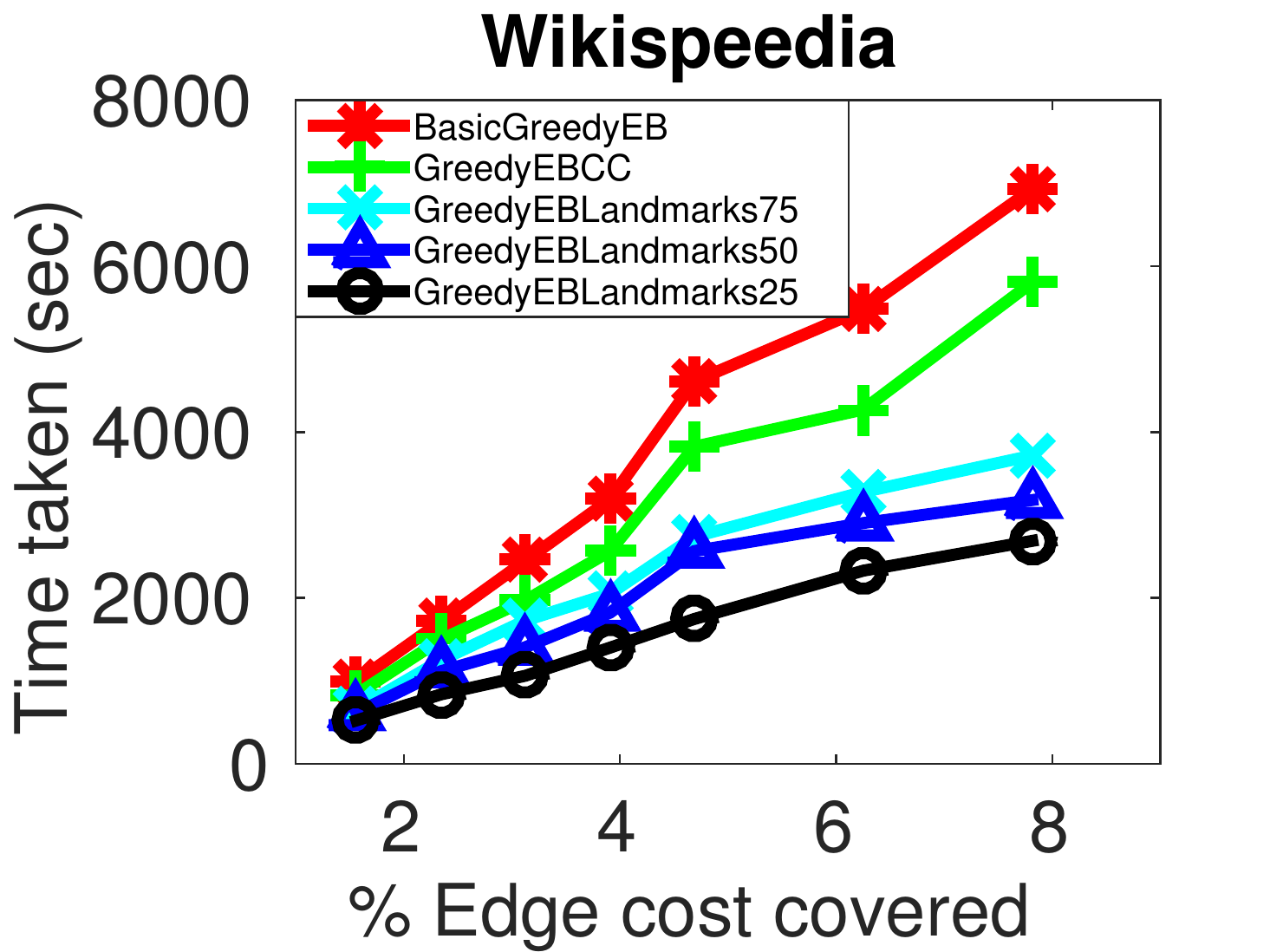}}
\end{minipage}\par\medskip

\begin{minipage}{.24\linewidth}
\centering
\subfloat[]{\label{}\includegraphics[width=\textwidth, height=\textwidth]{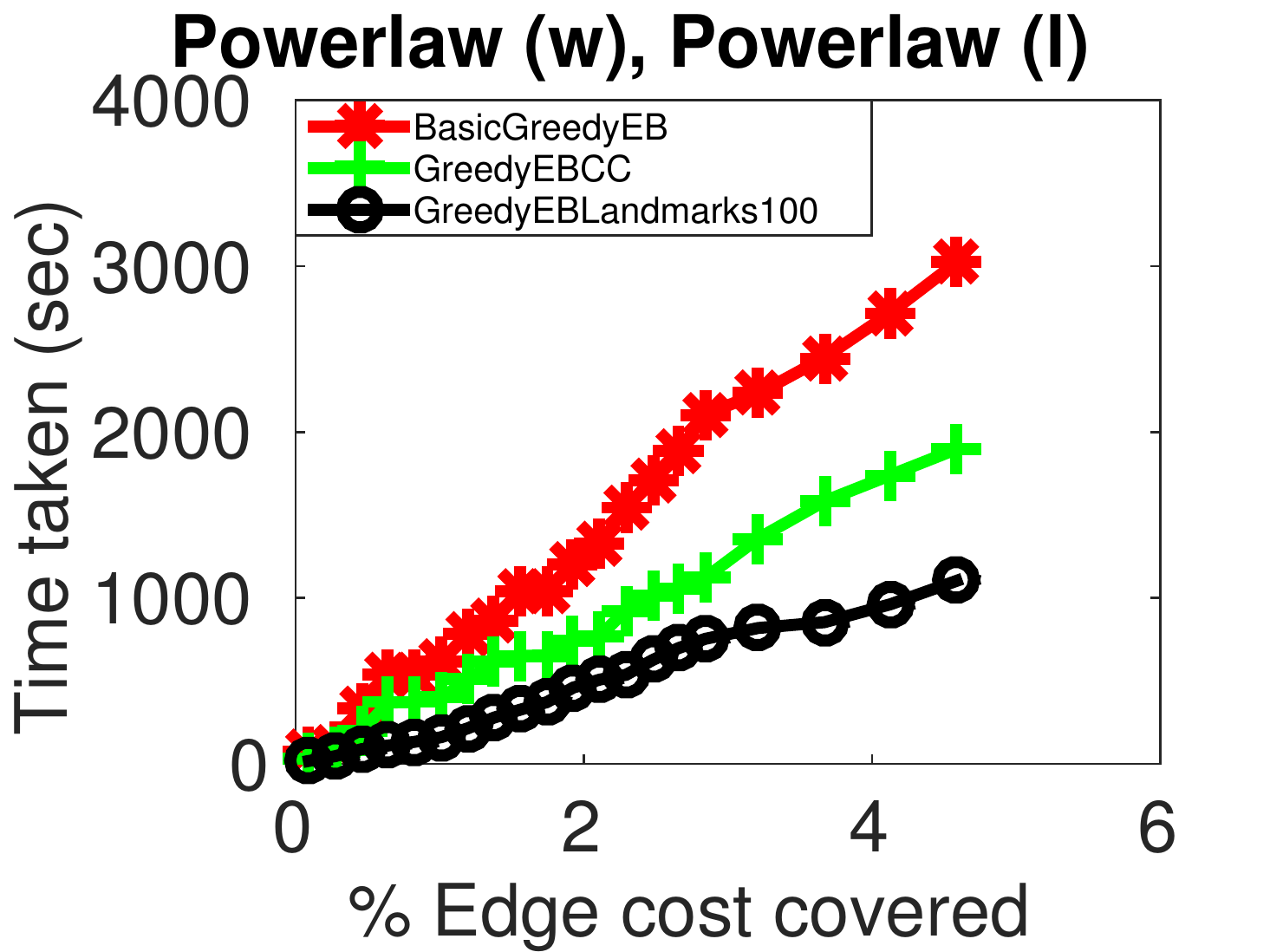}}
\end{minipage}%
\begin{minipage}{.24\linewidth}
\centering
\subfloat[]{\label{}\includegraphics[width=\textwidth, height=\textwidth]{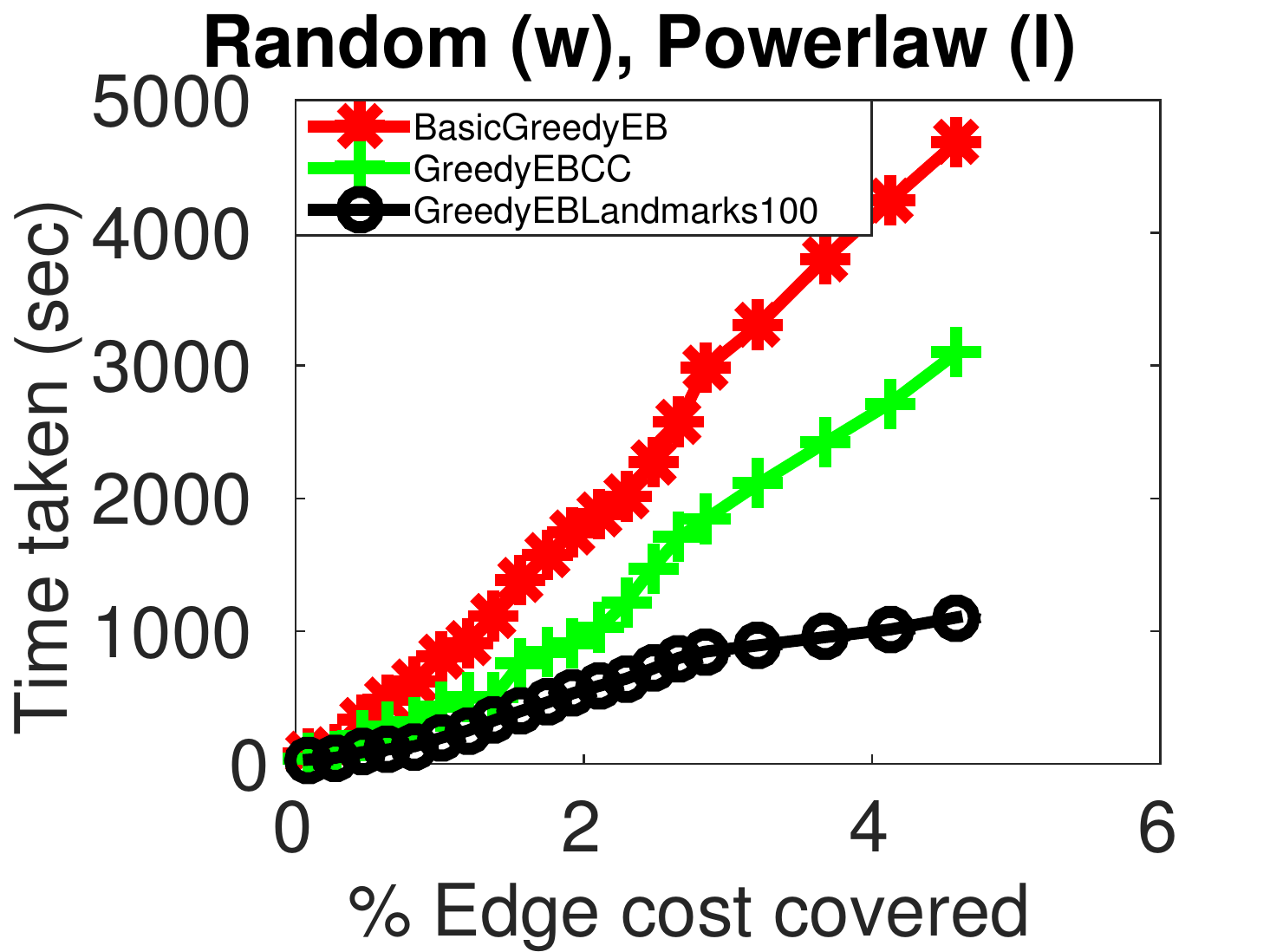}}
\end{minipage}
%\par\medskip
%
\begin{minipage}{.24\linewidth}
\centering
\subfloat[]{\label{}\includegraphics[width=\textwidth, height=\textwidth]{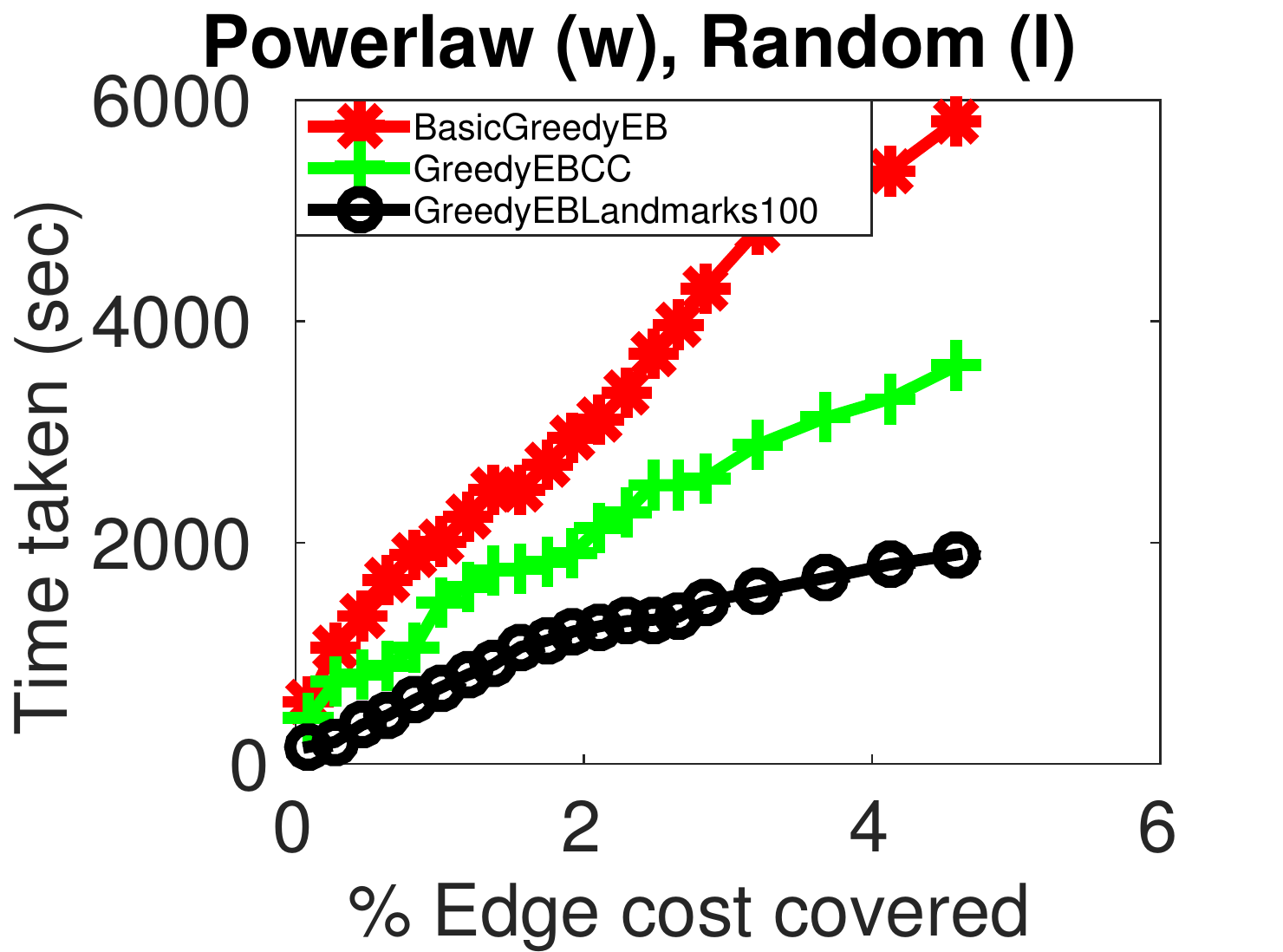}}
\end{minipage}%
\begin{minipage}{.24\linewidth}
\centering
\subfloat[]{\label{}\includegraphics[width=\textwidth, height=\textwidth]{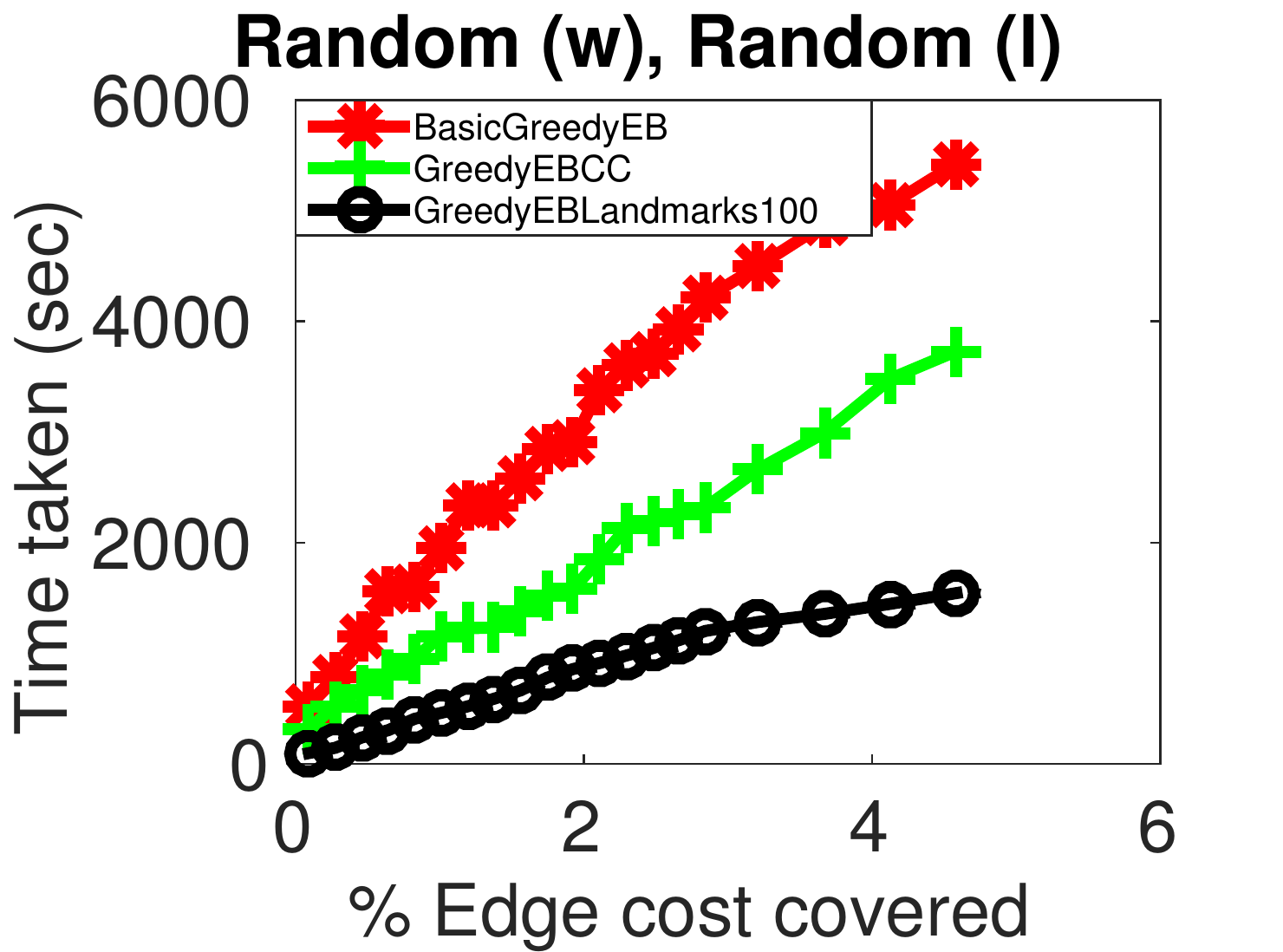}}
\end{minipage}\par\medskip

\caption{Comparison of the time taken by the algorithm using different optimizations mentioned in Section~\ref{sec:optimizations}, for (a) \london, (b) \flights, (c) \nyctaxi, (d) \wikispeedia, (e--h) \ukroad. BasicGreedyEB doesnt use any optimizations, GreedyEBCC is the version using connected components, GreedyEBLandmarks* uses * landmarks. We can clearly see a great improvement (up to 4x) in speed by using landmarks. As we increase the number of landmarks, we trade-off speed with accuracy. In Figures (e--h), $(w)$ indicates traffic volume, and $(l)$ indicates the log.}
\label{fig:landmarks_timetaken}
\end{figure*}

\begin{figure*}
\begin{minipage}{.24\linewidth}
\centering
\subfloat[]{\label{}\includegraphics[width=\textwidth, height=\textwidth]{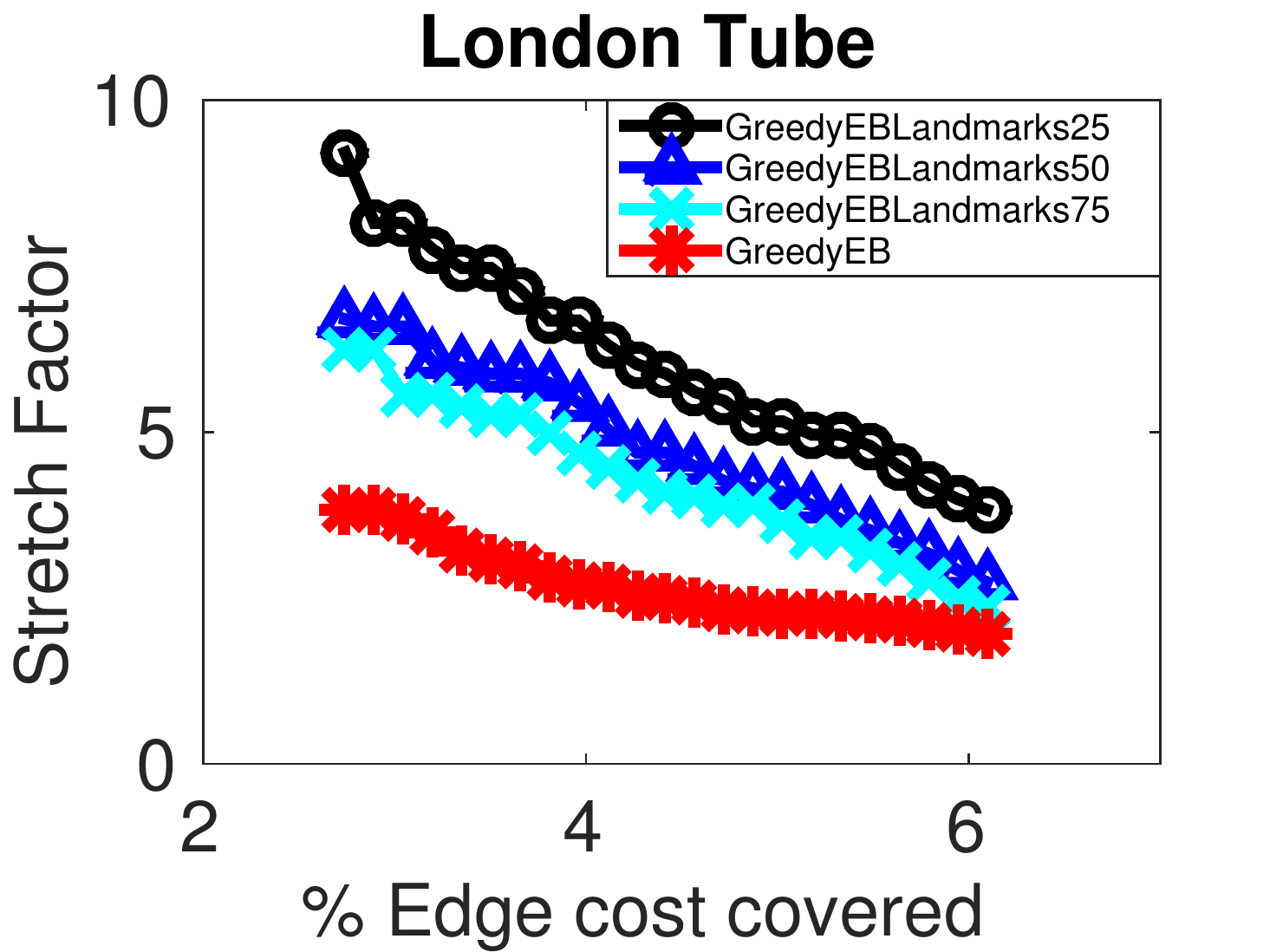}}
\end{minipage}%
\begin{minipage}{.24\linewidth}
\centering
\subfloat[]{\label{}\includegraphics[width=\textwidth, height=\textwidth]{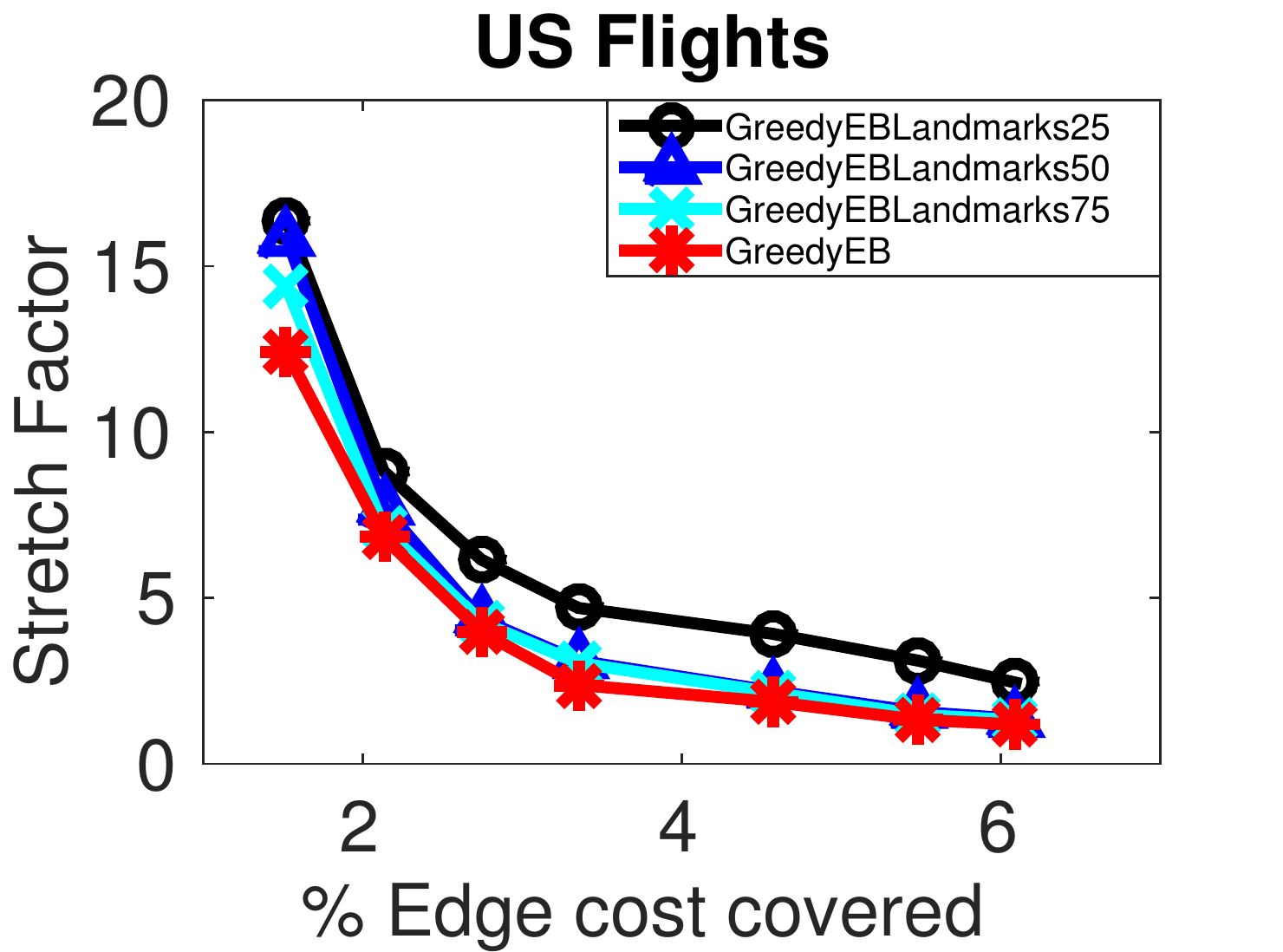}}
\end{minipage}
%\par\medskip
%
\begin{minipage}{.24\linewidth}
\centering
\subfloat[]{\label{}\includegraphics[width=\textwidth, height=\textwidth]{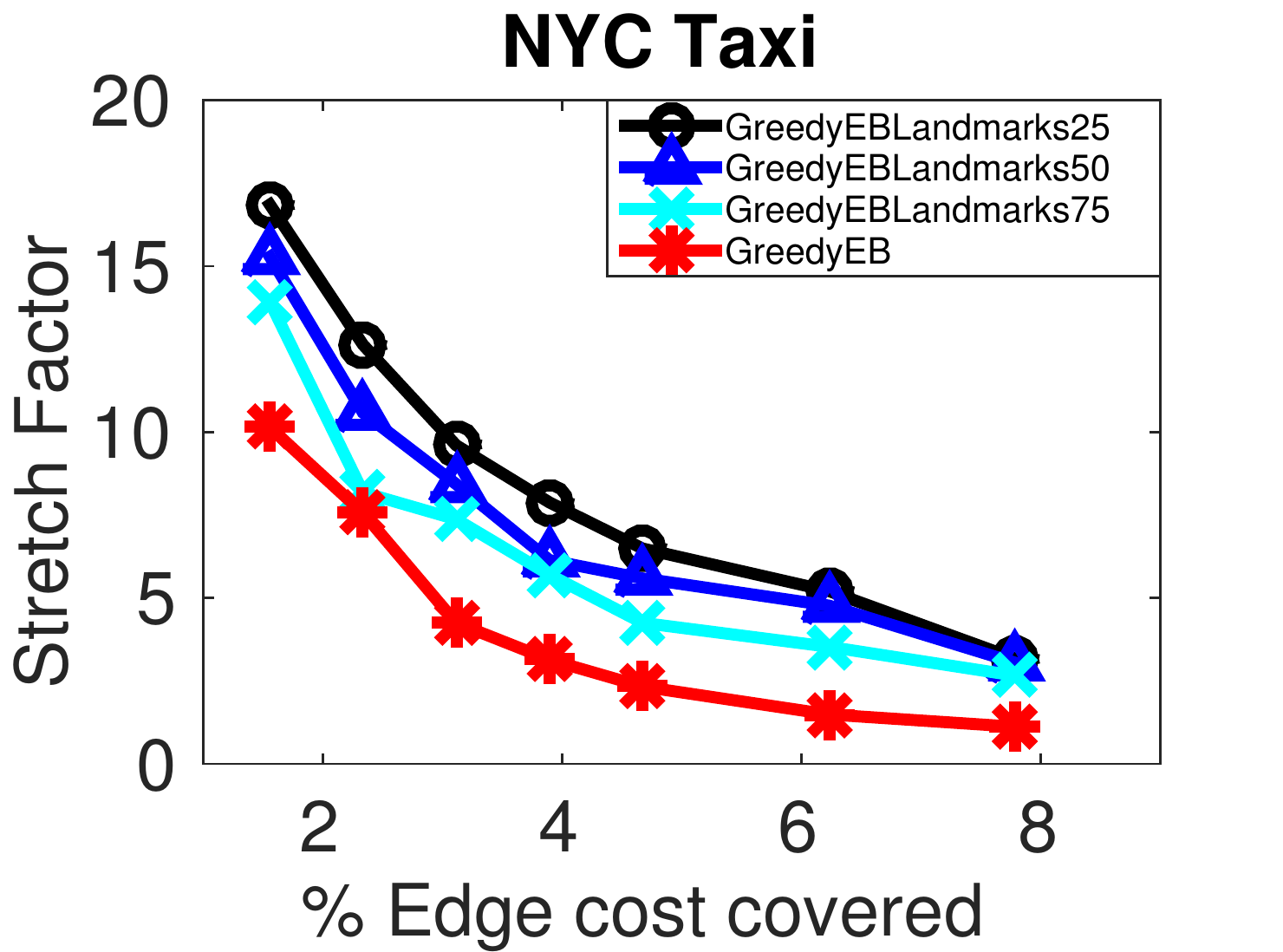}}
\end{minipage}%
\begin{minipage}{.24\linewidth}
\centering
\subfloat[]{\label{}\includegraphics[width=\textwidth, height=\textwidth]{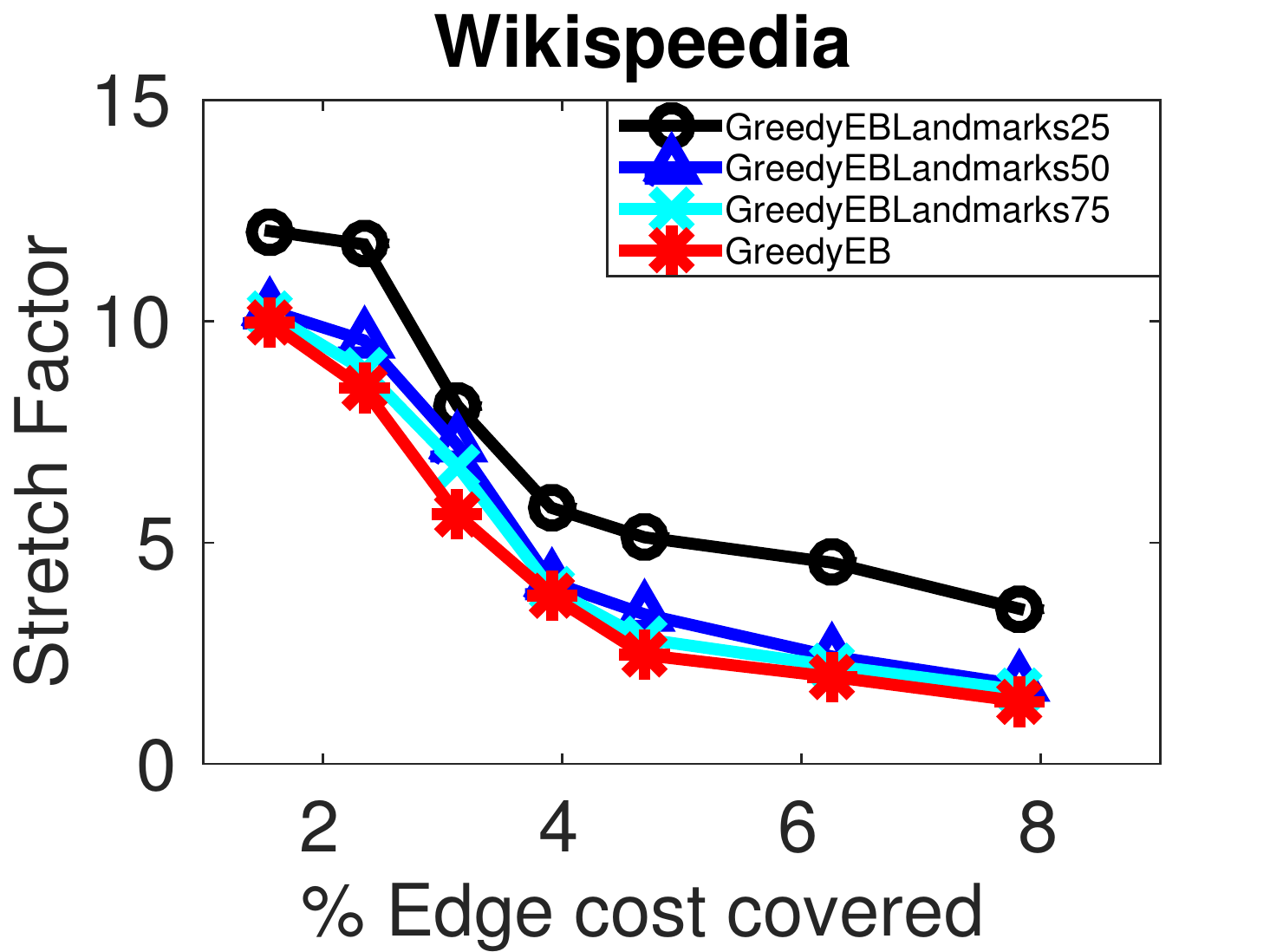}}
\end{minipage}\par\medskip

\begin{minipage}{.24\linewidth}
\centering
\subfloat[]{\label{}\includegraphics[width=\textwidth, height=\textwidth]{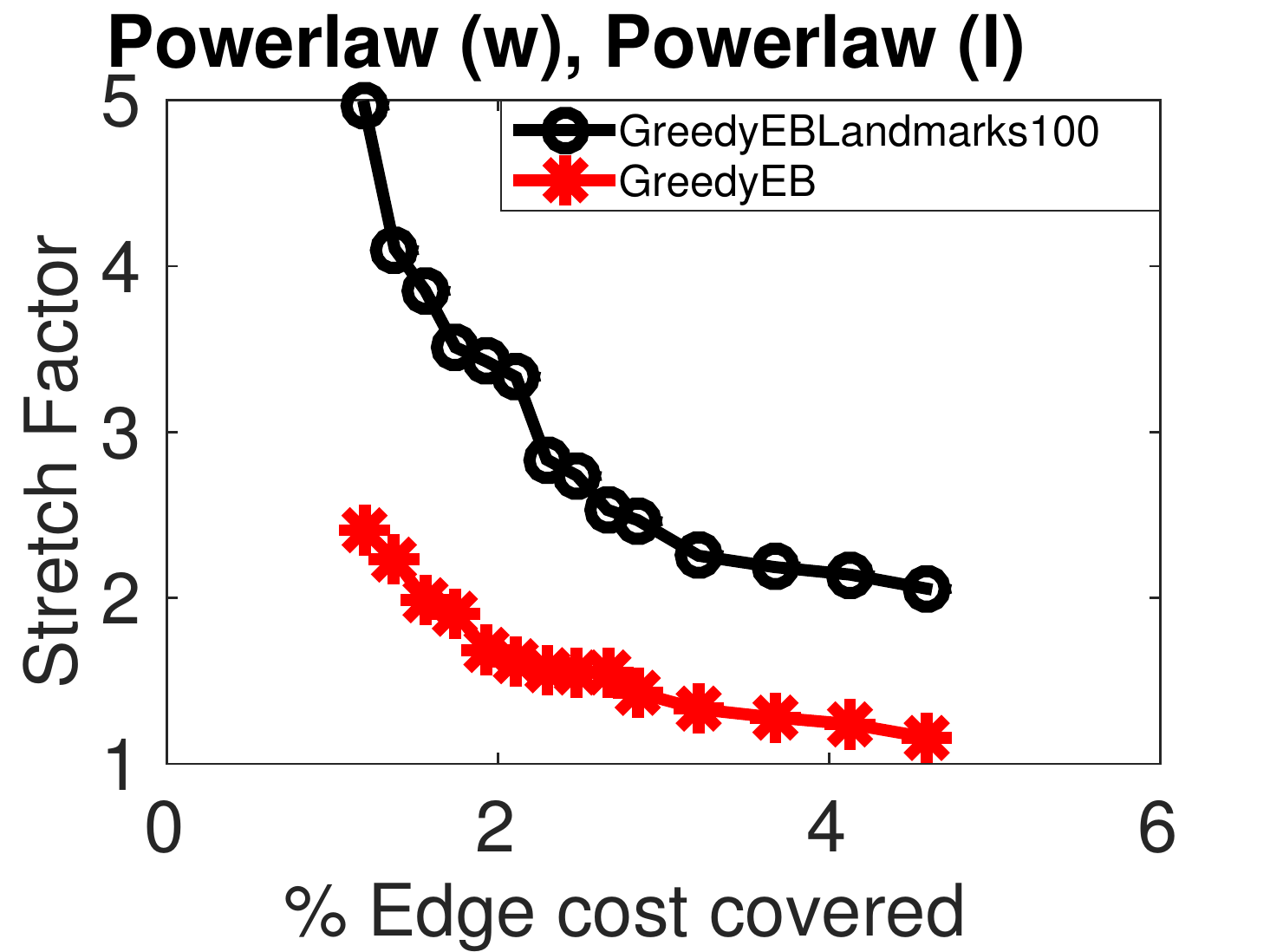}}
\end{minipage}%
\begin{minipage}{.24\linewidth}
\centering
\subfloat[]{\label{}\includegraphics[width=\textwidth, height=\textwidth]{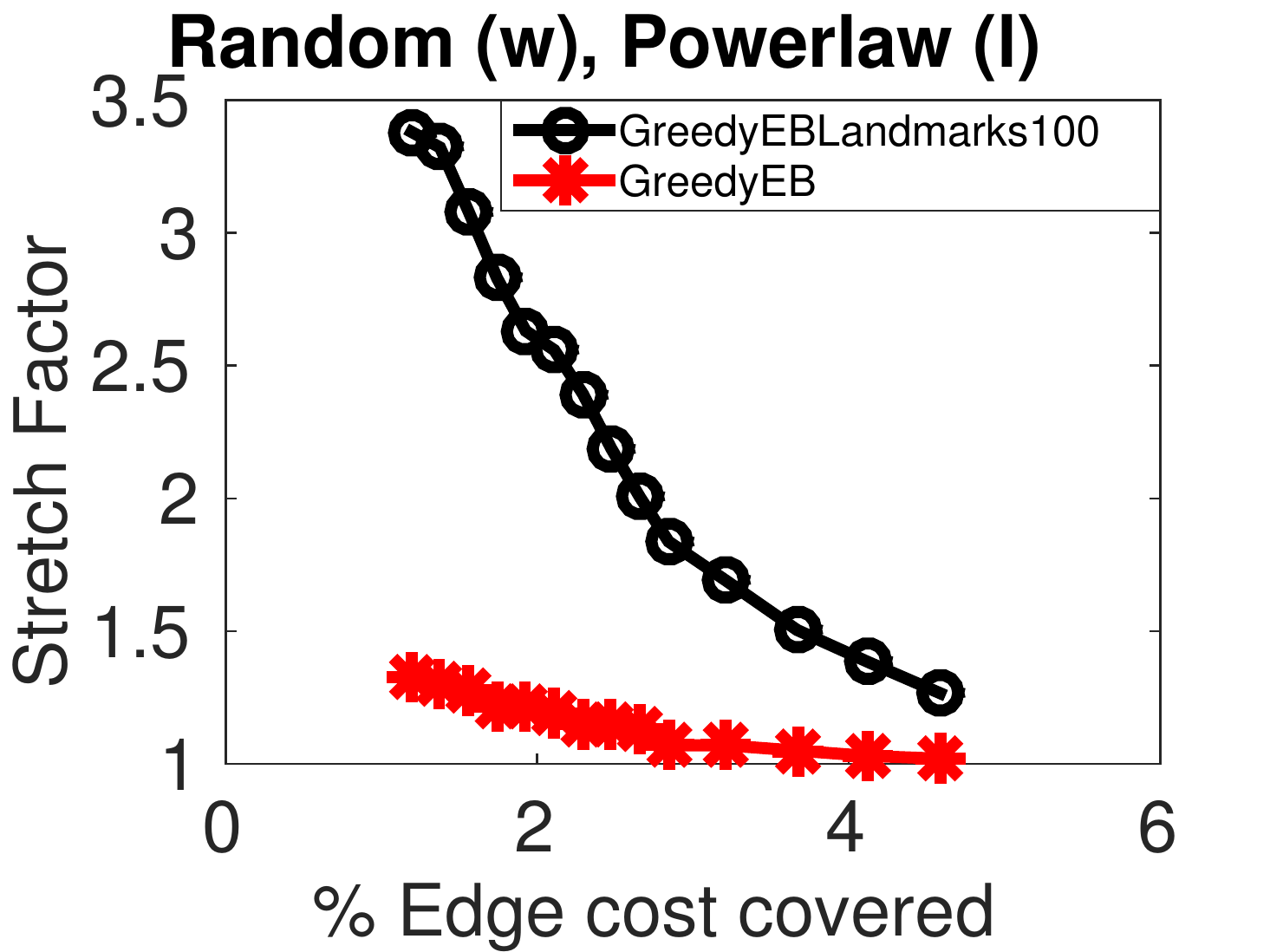}}
\end{minipage}
%\par\medskip
%
\begin{minipage}{.24\linewidth}
\centering
\subfloat[]{\label{}\includegraphics[width=\textwidth, height=\textwidth]{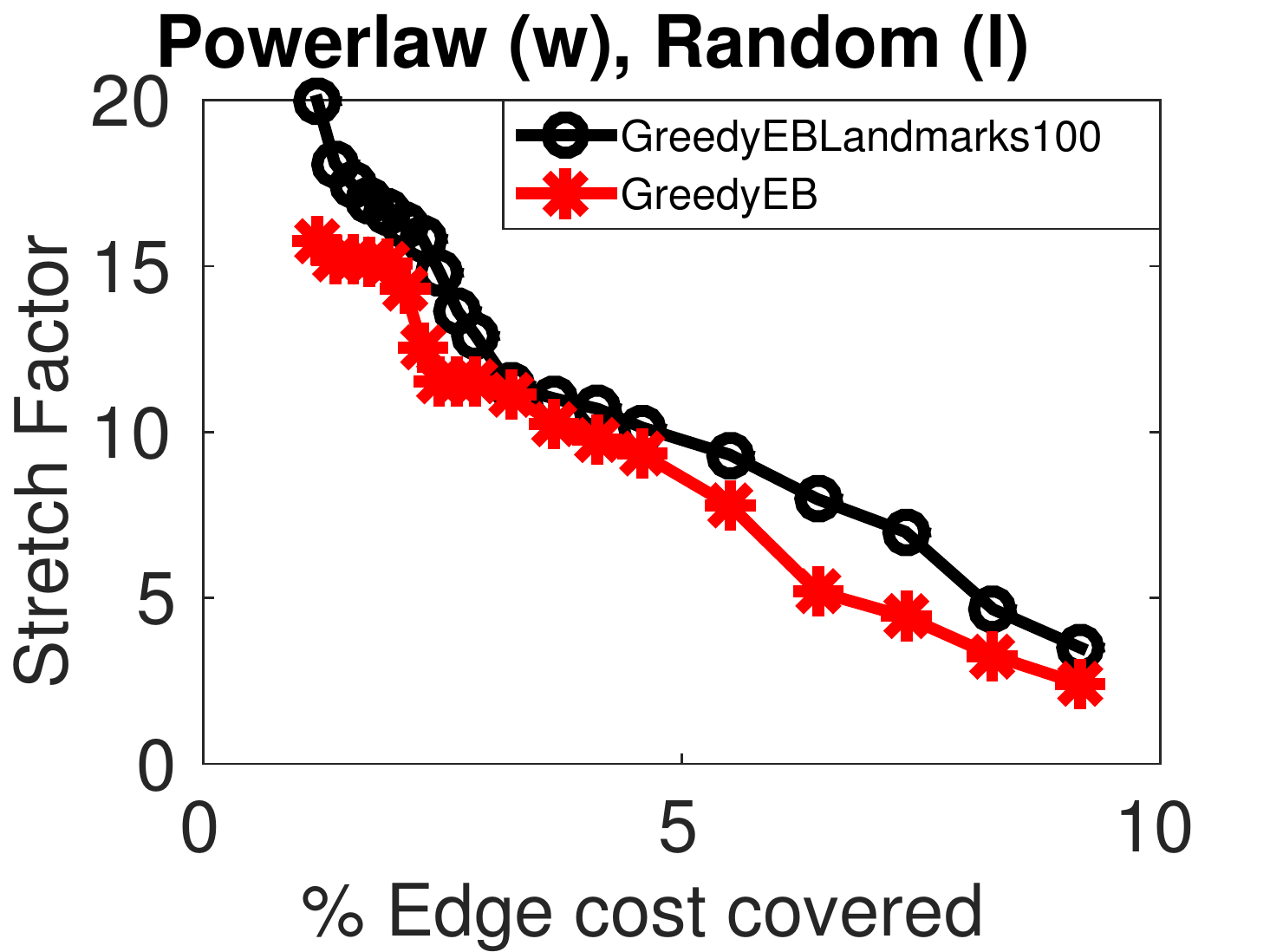}}
\end{minipage}%
\begin{minipage}{.24\linewidth}
\centering
\subfloat[]{\label{}\includegraphics[width=\textwidth, height=\textwidth]{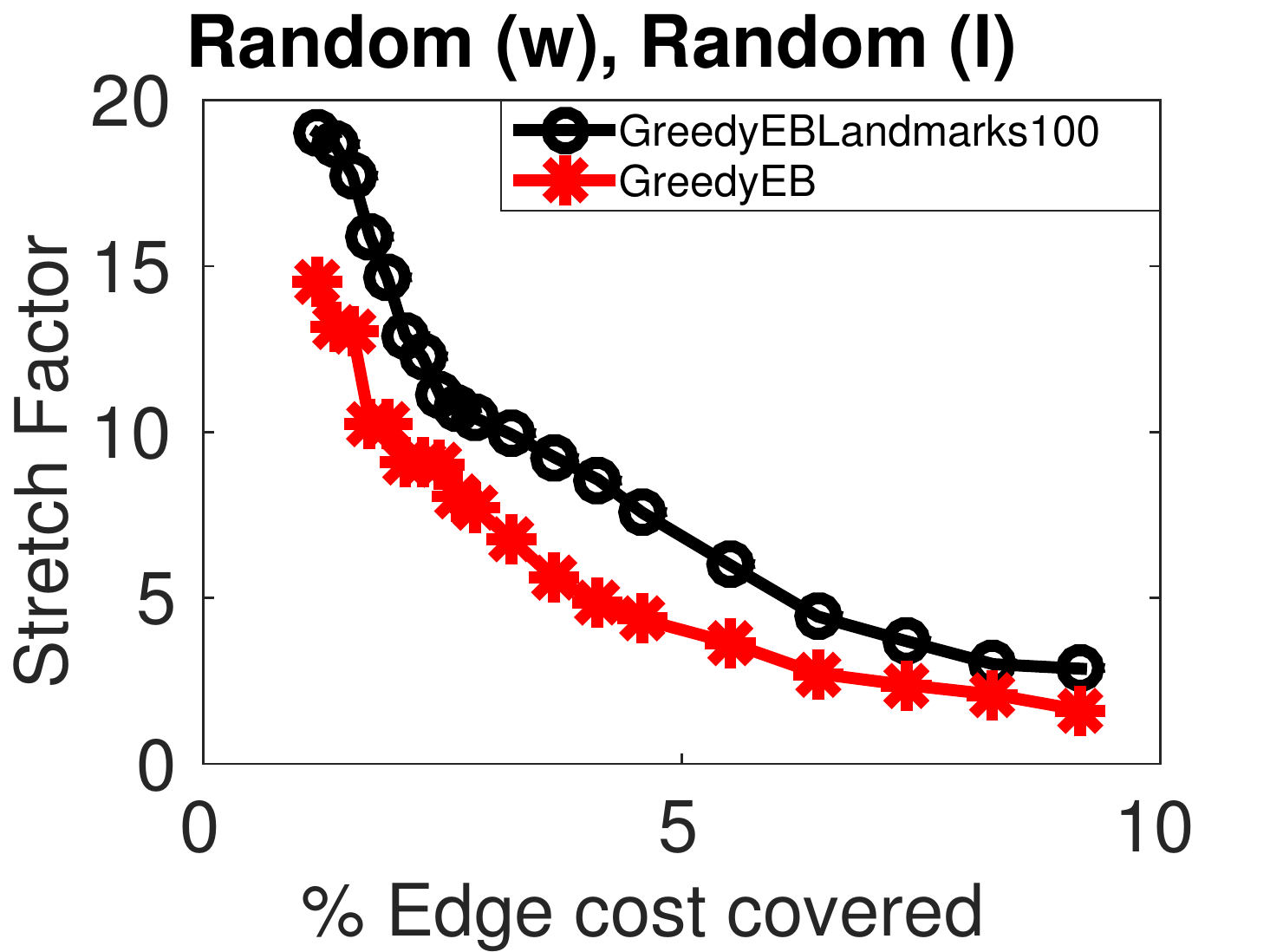}}
\end{minipage}\par\medskip

\caption{Performance in terms of stretch factor of our greedy algorithm with and with out using landmarks, for (a) \london, (b) \flights, (c) \nyctaxi\ and (d) \wikispeedia\, (e--h) \ukroad. For all the datasets, as expected, we see a slight decrease in performance using landmarks. In Figures (e--h), $(w)$ indicates traffic volume, and $(l)$ indicates the log.}
\label{fig:landmarks_performance}
\end{figure*}

\spara{Key findings}. From all the above results, we would like to highlight the following points.

\spara{1.} The greedy algorithm and its variants performs
 much better than the baseline (See Figure~\ref{fig:ebVSnormal}).
 Note that baseline is not included in Figure~\ref{fig:ebVSnormal}(g,h) because the edges in the baseline
 are added one-by-one and for a large interval of the cost, the
 stretch factor was very large or even infinity.
 This shows that the backbone produced by our greedy approach not only consists of edges with low benefit, but also tries to re-use a lot of edges,
 hence obtaining a lower stretch factor.
% Notice that even though from Figure~\ref{fig:londonTubea} and~\ref{fig:londonTubed},
% we see that the baseline is better than the output produced by
% landmarks, we have observed that for other datasets, e.g., \ukroad\
% (not shown due to lack of space) the baseline is much worse than
% using landmarks and even infinite in some cases, indicating that the
% baseline does not perform consistently and hence it is not a reliable method.

\spara{2.}
The backbones discovered by our algorithms are sparse and summarize
well the given traffic (Figures~\ref{fig:ebVSnormal},~\ref{fig:landmarks_performance}). 
In all cases, with about 15\% of the edge cost in the network it is
possible to summarize the traffic with stretch factor close to 1.
In some cases, even smaller budget (than 15\%) is sufficient to reach a lower
stretch-factor value.

\iffalse
\spara{3.}
For all variants of the greedy algorithm, the stretch factor decreases as the budget
increases. This is expected as increase in budget implies the ability
to include more edges in the backbone and thus leading to a decrease
in the stretch factor.
\note[Aris]{Removed the latter comment can go away, as it is
  quite obvious}
\fi

\spara{3.}
 Incorporating edge-betweenness as an edge-weighting scheme in the
 algorithm improves the performance, in certain cases there is an
 improvement of at least 50\% (See Figure~\ref{fig:ebVSnormal}; in
 most cases, even though there is a significant improvement, the plot
 is overshadowed by a worse performing baseline). This is because,
 using edges of high centrality will make sure that these edges are
 included in many shortest paths, leading to re-using many edges.
% In many cases, the performance by using edge betweenness is a bit better than using commute time. 
% Edge betweenness is useful when high edge betweennees edges are in the original shortest path or near shortest paths of the (s,t) pairs.

\spara{4.} The optimizations we propose in Section~\ref{sec:optimizations} 
help in reducing the running time of our algorithm (See Figure~\ref{fig:landmarks_timetaken}).
For the optimizations not using landmarks, we see around 30\% improvement in running time.
Using landmarks substantially decreases the time taken by
the algorithms (3--4 times). While there is a compromise in the quality of the solution, we can observe from 
Figures~\ref{fig:landmarks_performance} that the performance drop is small in most cases and can be
controlled by the choosing the number of landmarks accordingly.
% (See Figures~\ref{fig:landmarks_timetaken} and~\ref{fig:landmarks_performance}).\\*
Our algorithms, using the various optimizations we propose, are able to scale for large, real-world networks with tens of thousands of nodes which is the typical size of a road/traffic network.

%\item Generally, \greedy\ selects a backbone with a fewer connected \source-\dest\ pairs compared to \pd\ (not sure if we can write this.. I wanted to mention something about the size of the backbone)

\iffalse
\spara{Budget and size of backbone.}
The greedy algorithm operates by connecting \source-\dest\ pairs in
each iteration and reducing stretch factor. On the other hand, the
primal--dual algorithm works by adding or removing edges from the
Steiner forest. One would expect that for a given cost,
primal--dual will keep a larger percentage of \source-\dest\ pairs
connected than greedy. However, as we can see in Figure~\ref{fig:cost_st_pairs}, we did not observe this behavior:
greedy seems to be able to connect the same fraction of
\source-\dest\ pairs, for the same budget. 
This can be partially explained by the fact that the standard reverse
deletion step of the primal--dual algorithm has been replaced by a
greedy step to reduce stretch factor. A deeper analysis of the behavior of
of the primal--dual algorithm for the problem is warranted to draw
firmer conclusions.
\fi

\subsection{Comparison to existing approaches}
\label{sec:comparison}

In this section, we compare the performance of \backbonediscovery\ with other related work in literature. The comparison is done based on two factors (i) Stretch factor, (ii) Percentage of edges covered by the solution. Intuitively, a good backbone should try to minimize both, i.e. produce a sparse backbone, which also preserves the shortest paths between vertices as well as possible.

\spara{Comparison with Prize Collecting Steiner-forest (PCSF)} - Prize Collecting Steiner-forest~\cite{hajiaghayi2010prize} is a variant of the classic Steiner Forest problem, which allows for disconnected source--destination pairs, by paying a penalty. The goal is to minimize the total cost of the solution by `buying' a set of edges (to connect the $s$--$t$ pairs) and paying the penalty for those pairs which are not connected. We compare the performance of our algorithm with PCSF, based on two factors (i) Stretch factor (Figure~\ref{fig:pcsf_comparison}a), (ii) Percentage of edges covered by the solution (Figure~\ref{fig:pcsf_comparison}b). We use the same ($s$,$t$) pairs that we use in our algorithm and set the traffic volume $w_i$ as the penalty score in PCSF. We first run PCSF on our datasets and compute the budget of the solution produced. Using the budget as input to our algorithm (\greedyeb), we compute our backbone.

\begin{figure}
\begin{minipage}{0.48\linewidth}
\centering
\subfloat[]{\label{}\includegraphics[width=\textwidth, height=\textwidth]{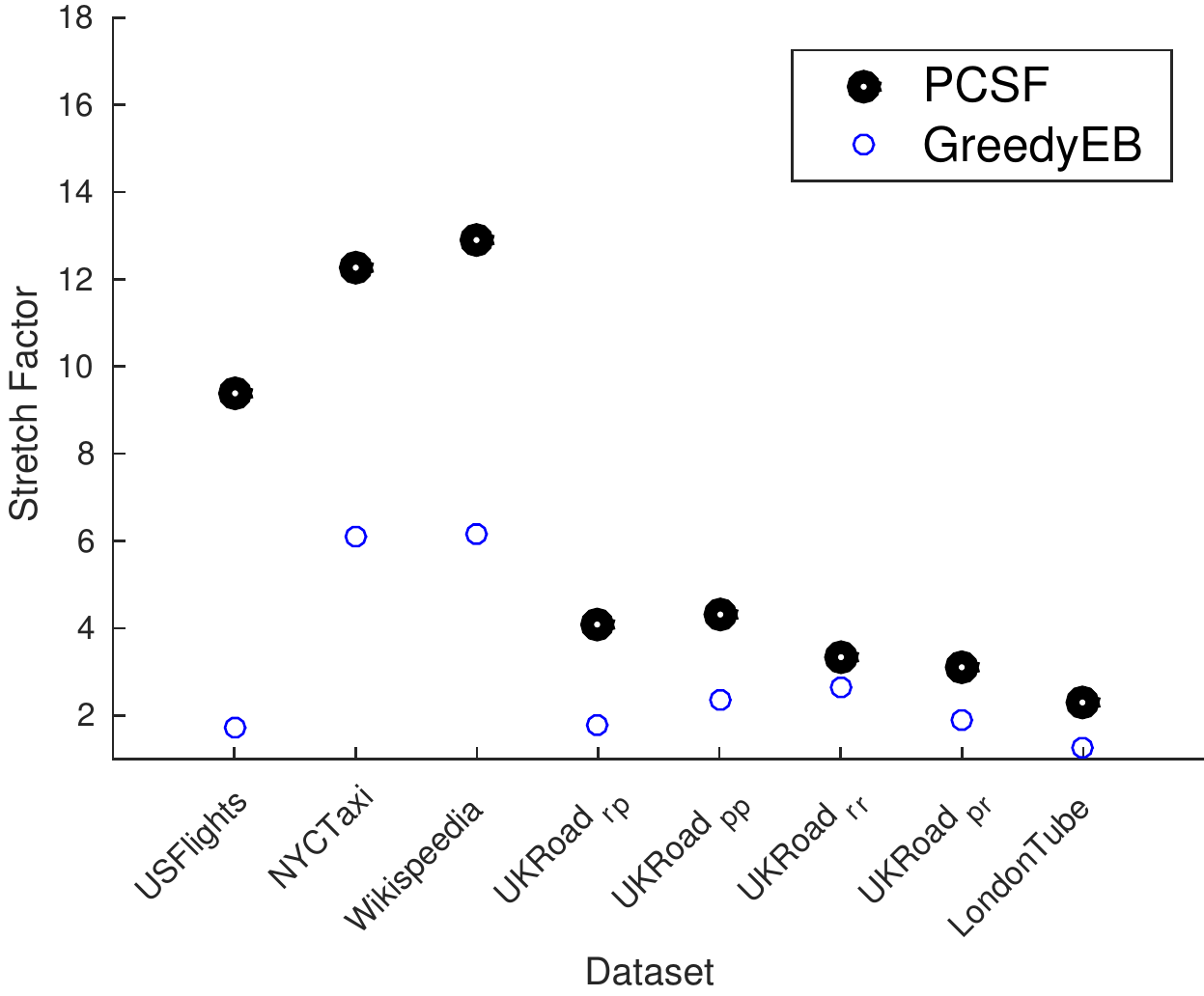}}
\end{minipage}%
\hspace{0.01\textwidth}
\begin{minipage}{0.48\linewidth}
\centering
\subfloat[]{\label{}\includegraphics[width=\textwidth, height=\textwidth]{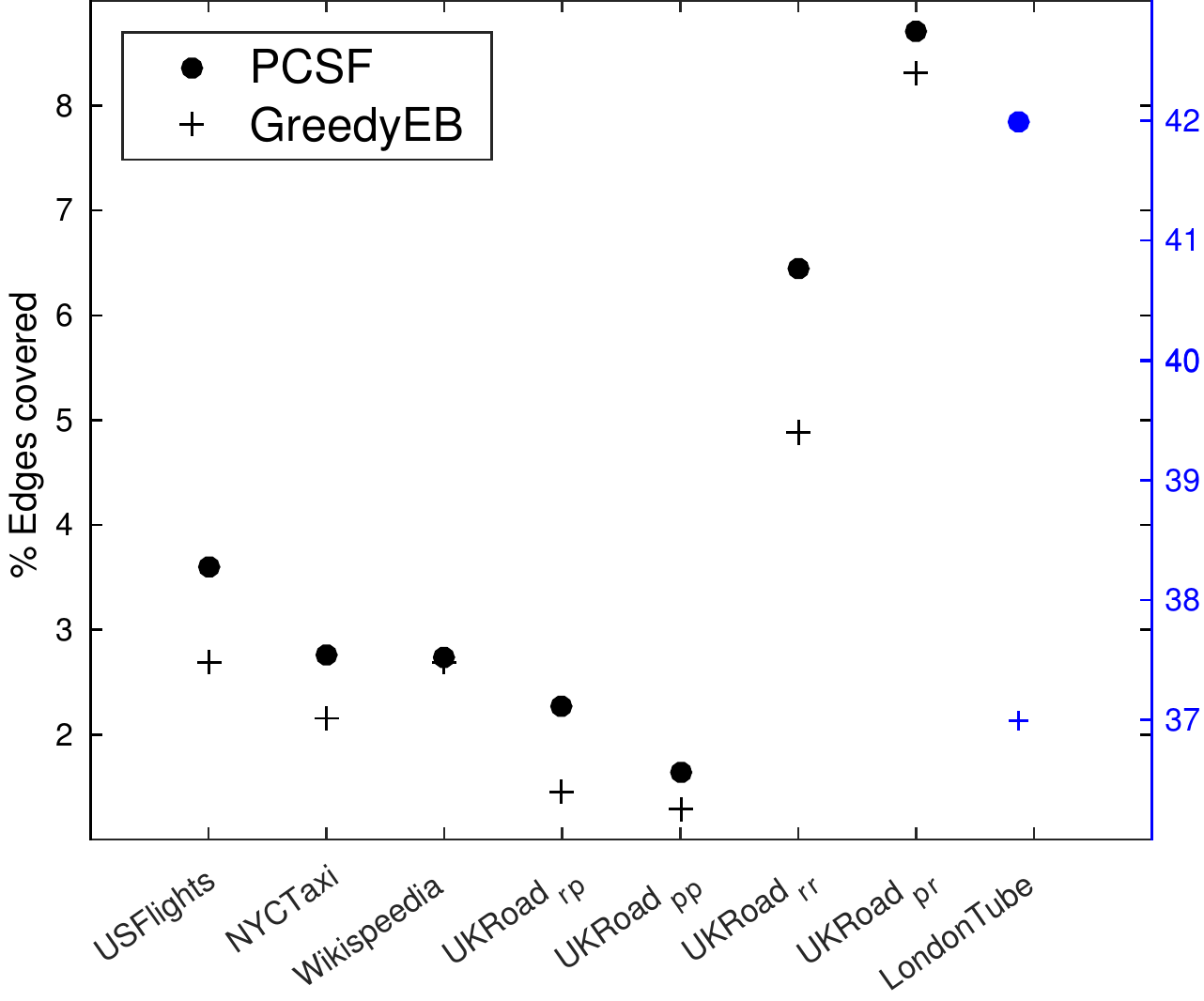}}
\end{minipage}%
%\begin{minipage}{.32\linewidth}
\centering
%\subfloat[]{\label{}\includegraphics[width=\textwidth, height=\textwidth]{figure/comparison_related_work/PCSF_vs_Greedy_STpairs}}
%\end{minipage}\par\medskip

\caption{Comparison of our algorithm \greedyeb\ with PCSF, in terms of (a) stretch factor (b) Percentage of edges covered. The 4 variants of \ukroad\ for the different traffic log are indicated by $UKRoad_{ab}$ where $a$ indicates traffic volume, $b$ indicates (s,t) pairs ($r$ - random, $p$ - powerlaw). (In (b) \london\ is plotted on a secondary y-axis because of mismatch in scale).}
\label{fig:pcsf_comparison}
\end{figure}

We can see from Figure~\ref{fig:pcsf_comparison}a that our algorithm produces a backbone with a much better stretch factor than PCSF. In most datasets, our algorithm produces a backbone which is at least 2 times better in terms of stretch factor.

Figure~\ref{fig:pcsf_comparison}b compares the fraction of edges covered by our algorithm and PCSF. We observe that the fraction of edges covered by our algorithm is lower than that of PCSF. This could be because our algorithm re-uses edges belonging to multiple paths. Figures~\ref{fig:pcsf_comparison}(a,b) show that even though our solution is much better in terms of stretch factor, we produce sparse backbones (in terms of the percentage of edges covered).

% Figure~\ref{fig:pcsf_comparison}c shows the percentage of (s,t) pairs that are left out by both algorithms. We observe that even though our algorithm doesn't try to maximize the number of (s,t) pairs covered, we perform fairly in comparison to PCSF. We have to be careful in interpreting this result, because since both algorithms optimize for different criteria, the actual (s,t) pairs selected might be different.

\spara{Comparison with k-spanner} - As described in Section~\ref{sec:related}, our problem is similar to $k$-spanner~\cite{narasimhan2007geometric} in the sense that we try to minimize the stretch factor. A $k$-spanner of a graph is a subgraph in which any two vertices are at most $k$ times far apart than on the original graph. One of the main advantages of our algorithm compared to spanners is that spanners can not handle disconnected vertices. We also propose and optimize a modified version of stretch factor in order to handle disconnected vertices. Similar to PCSF, we first compute a 2-spanner using a 2 approximation greedy algorithm and compute the budget used. We then run our algorithm for the same budget. Figures~\ref{fig:spanner_comparison}(a,b) show the performance of our algorithm in terms of stretch factor and percentage of edges covered. Our objective here is to compare the cost our algorithm pays in terms of stretch factor for allowing disconnected vertices. We can clearly observe that even though we allow for disconnected pairs, our algorithm performs slightly better in terms of stretch factor and also produces a significantly sparser backbone.

\begin{figure}[ht]
\begin{minipage}{.48\linewidth}
\centering
\subfloat[]{\label{}\includegraphics[width=\textwidth, height=\textwidth]{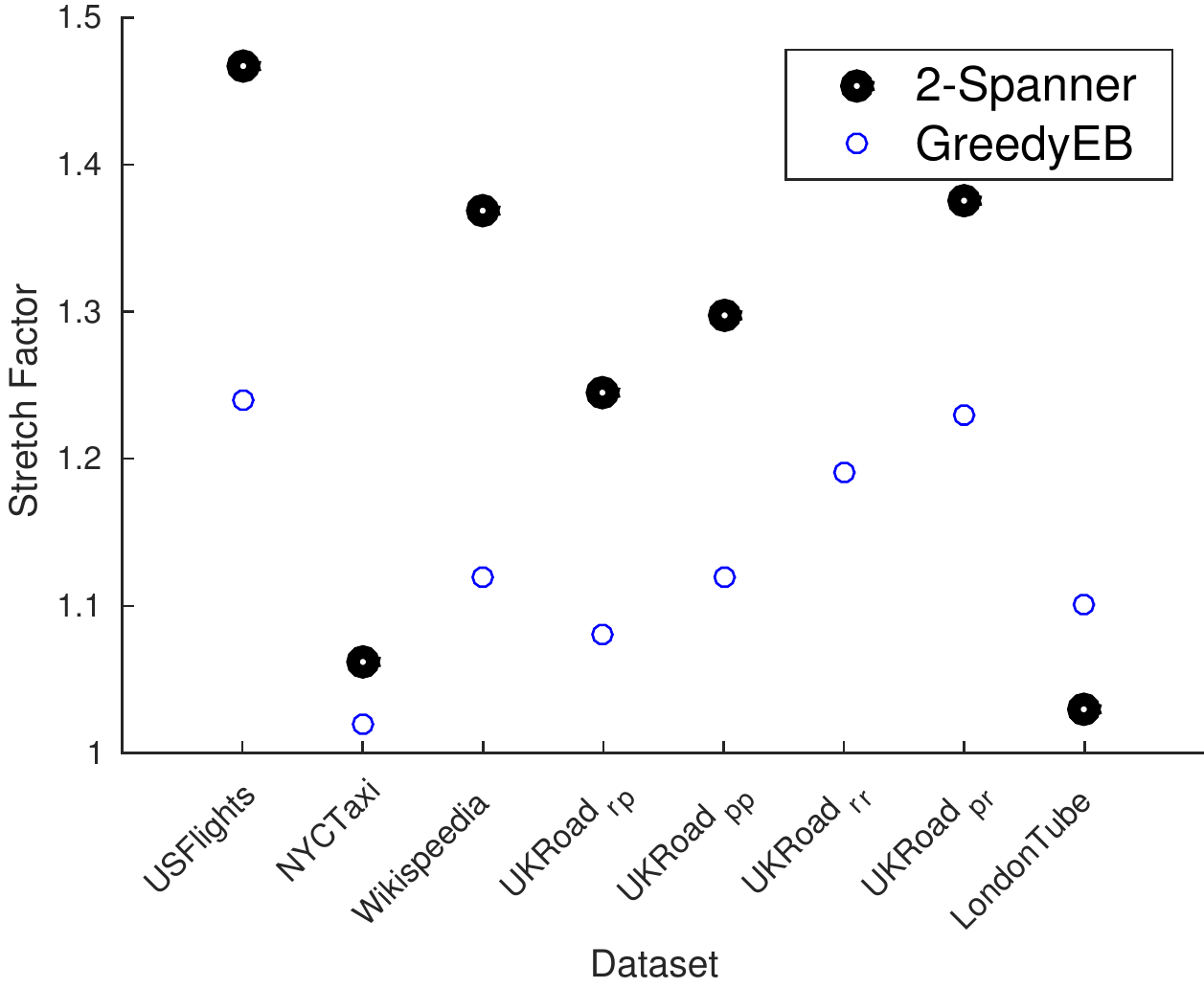}}
\end{minipage}%
\hspace{0.01\textwidth}
\begin{minipage}{.48\linewidth}
\centering
\subfloat[]{\label{}\includegraphics[width=\textwidth, height=\textwidth]{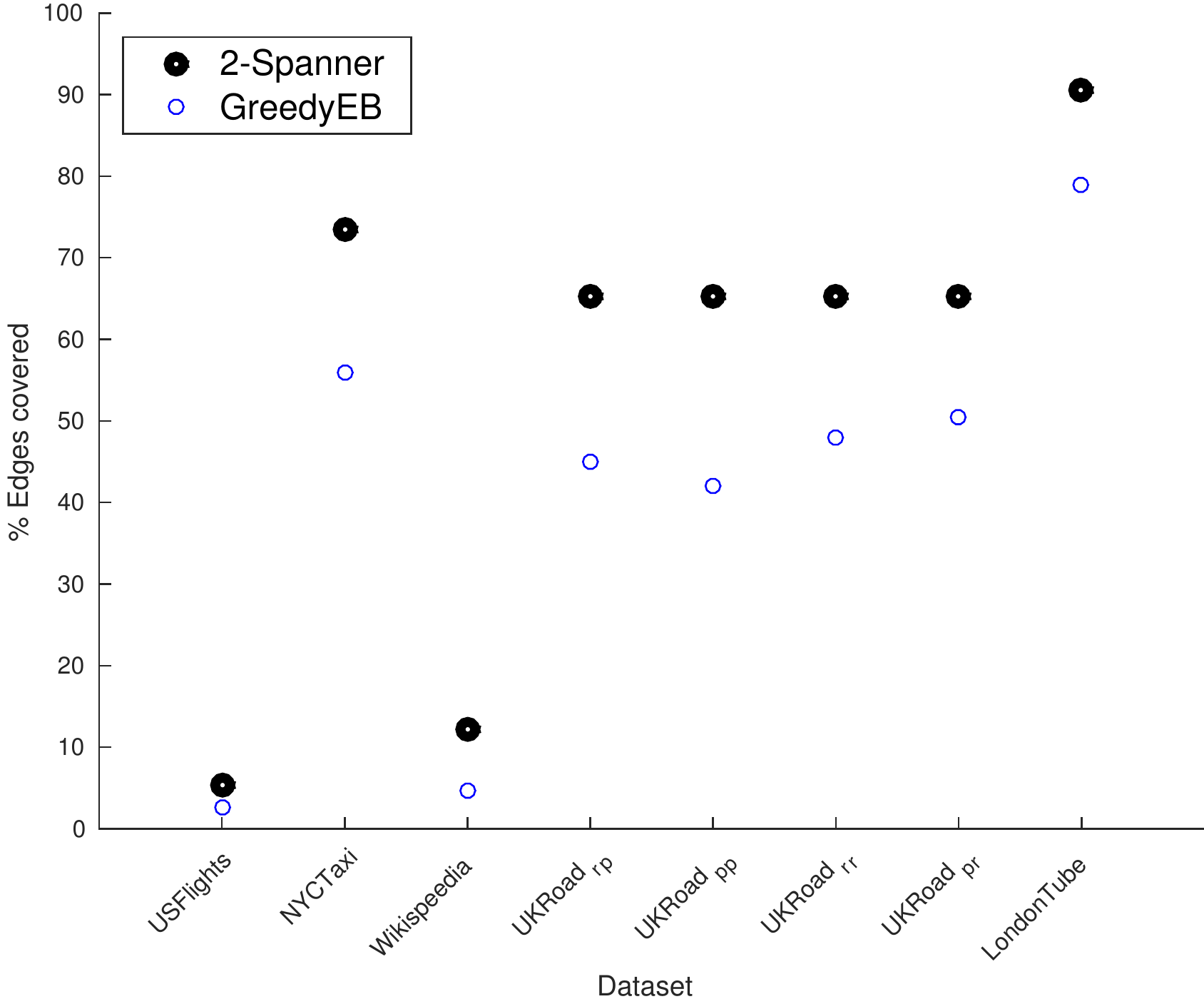}}
\end{minipage}%

\caption{Comparison of our algorithm \greedyeb\ with 2-spanner in terms of (a) stretch factor (b) Percentage of edges covered.}
\label{fig:spanner_comparison}
\end{figure}

\spara{Comparison with Toivonen et al.~\cite{ToivonenMZ10}} - Next, we compare our algorithm with Toivonen, et al~\cite{ToivonenMZ10}. Toivonen et al. propose a framework for path-oriented graph simplification, in which edges are pruned while keeping the original quality of the paths between all pairs of nodes. The objective here is to check how well we perform in terms of graph sparsification. Figures~\ref{fig:toivonen_comparison}(a,b) shows the comparison in terms of stretch factor and percentage of edges covered. Similar to the above approaches, we use the same budget as that used by Toivonen's algorithm. We observe that for most of the datasets, their algorithm works poorly in terms of sparsification, pruing less than 20\% of the edges (Figure~\ref{fig:toivonen_comparison}(b)). Our algorithm performs better both in terms of the stretch of the final solution as well as sparseness of the backbone.

\begin{figure}[ht]
\begin{minipage}{.48\linewidth}
\centering
\subfloat[]{\label{}\includegraphics[width=\textwidth, height=\textwidth]{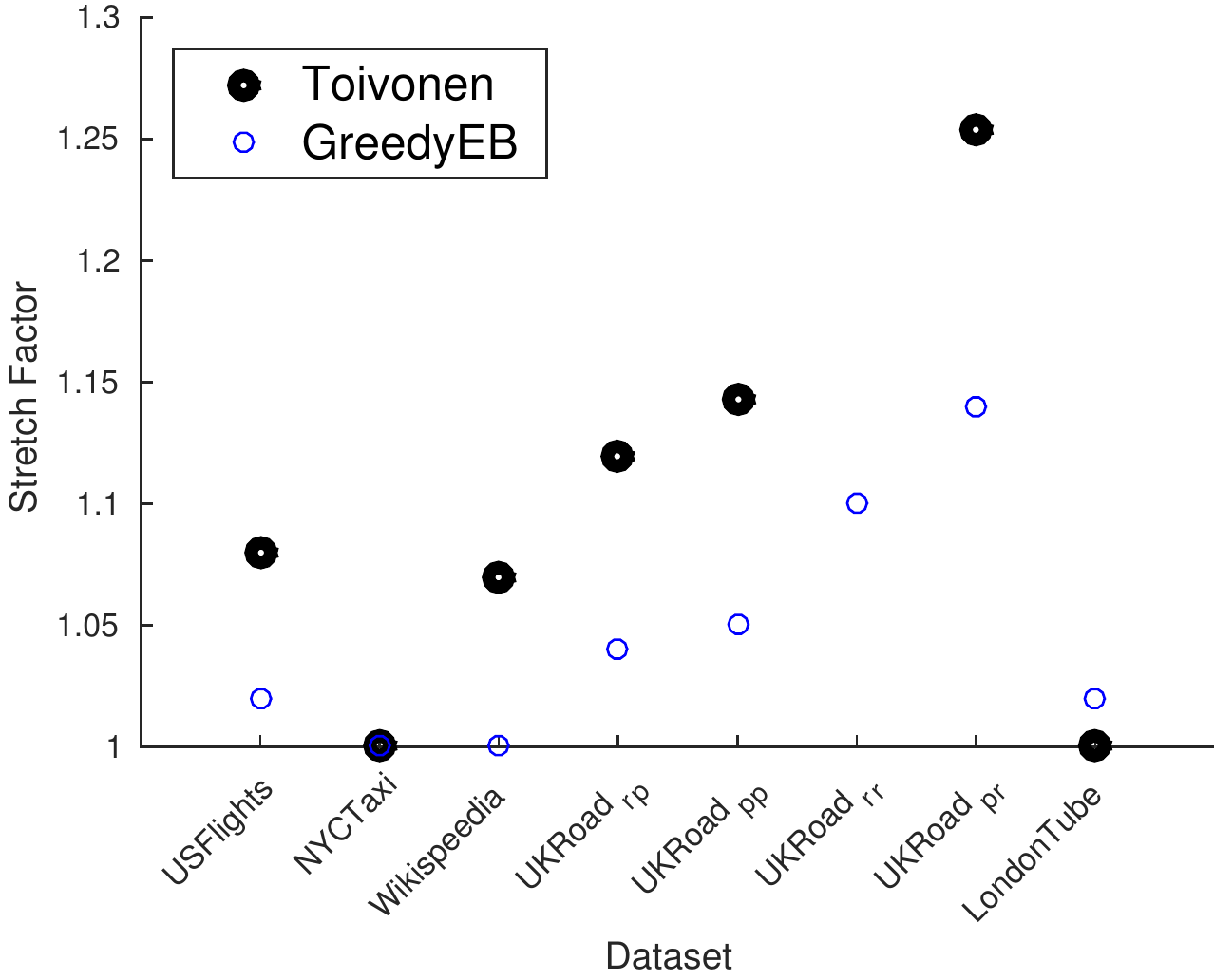}}
\end{minipage}%
\hspace{0.01\textwidth}
\begin{minipage}{.48\linewidth}
\centering
\subfloat[]{\label{}\includegraphics[width=\textwidth, height=\textwidth]{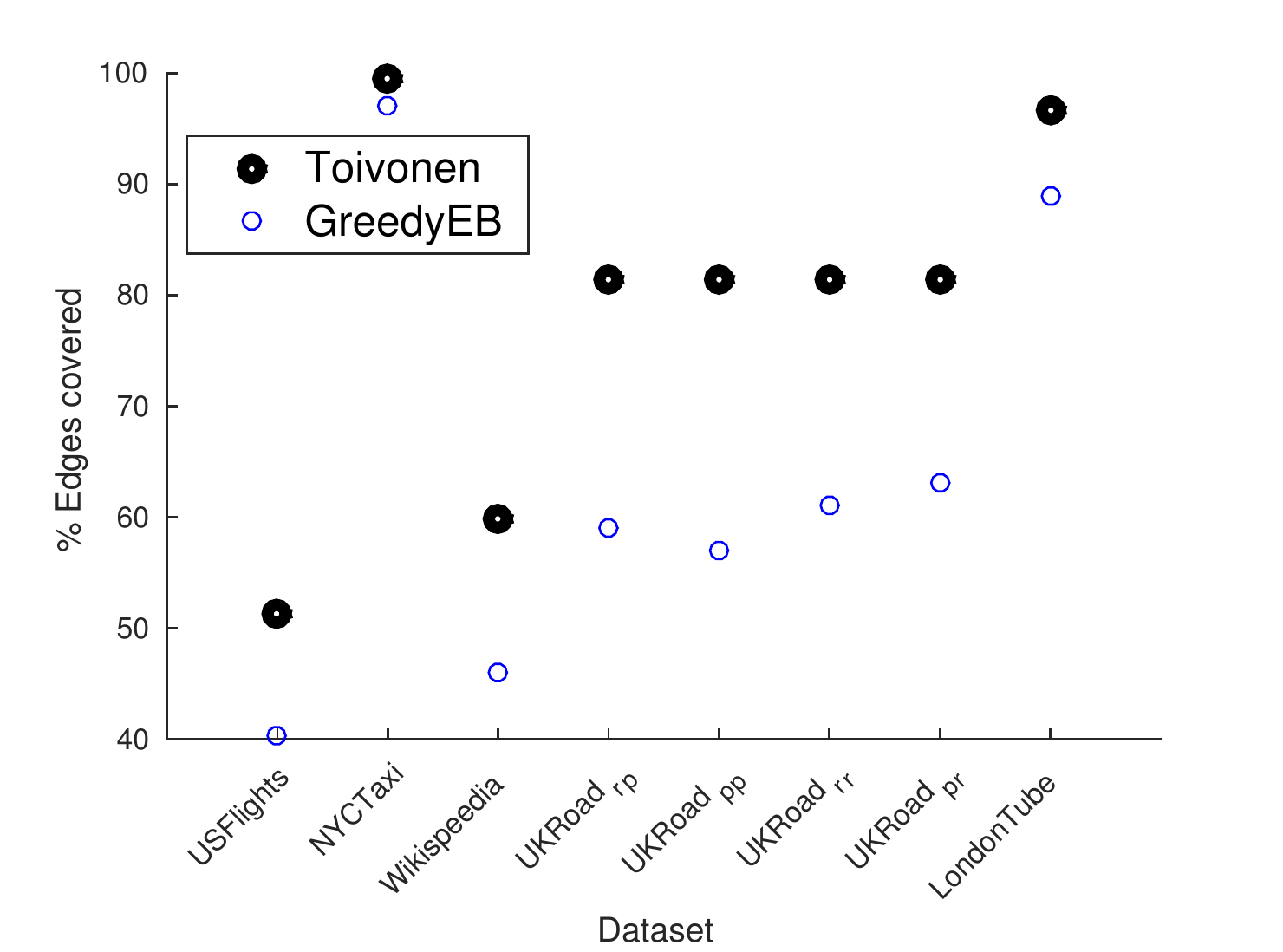}}
\end{minipage}%

\caption{Comparison of our algorithm \greedyeb\ with Toivonen et al, in terms of (a) stretch factor (b) Percentage of edges covered.}
\label{fig:toivonen_comparison}
\end{figure}

The above results, comparing our work with the existing approaches showcase the power of our algoritm in finding a concise representation of the graph, at the same time maintaining a low stretch factor. In all the three cases, our algorithm performs considerably better than the related work.

\spara{Fairness} - Though we claim that our approach performs better, we need to keep in mind that there might be differences between these algorithms. PCSF does not optimize for stretch factor. Spanners and Toivonen et al. do not have a traffic log (($s$,$t$) pairs). They also do not try to optimize stretch factor. For this section, we were just interested in contrasting the performance of our approach with existing state of the art methods and show how our approach  is different and better at what we do. 
% WRITE MORE ABOUT HOW ITS not fair to compare, but we did our best.

% In this section, we would like to compare the effect of the budget on the size of the backbone, in terms of the number of \source-\dest\ pairs included. Figure~\ref{fig:cost_st_pairs} shows the comparison on two datasets. We can clearly see that as we increase the budget, the size of the backbone produced by \pd\ increases very quickly. Remember that \greedy\ works by including \source-\dest\ pairs in each iteration, where as \pd\ works by adding/removing edges from the steiner forest. In this case, \pd\ produces a steiner forest where all \source-\dest\ pairs are connected and we remove edges greedily.

% \spara{Performance improvement with heuristics.}
% As described in Section~\ref{sec:greedy}, our greedy algorithm
% scales as $\bigO(\nopairs^2(\novertices + \noedges\log \noedges))$. 
% Since this might not scale well for large networks, we proposed two
% heuristics to improve the performance of the greedy
% algorithm. Figure~\ref{fig:time_taken} shows a comparison of the time
% taken to run our algorithm with and without the heuristics. We
% observe that in most cases, there is at least around 30\% improvement
% in runtime. 
\subsection{Case study \#1: \nyctaxi.}
The backbone of the NYC taxi traffic, as discovered by our algorithms
\greedy\ and \greedyeb, is shown in Figure~\ref{fig:nycbackbone}.
We see that both backbones consist of many street stretches in the
mid-town (around Times Square) while serving lower-town (Greenwich
village and Soho) and up-town (Morningside heights). 
We also note that there are stretches to the major transportation centers,
such as the LaGuardia airport, the World Financial Center Ferry
Terminal, and the Grand Central Terminal, as well as to the
Metropolitan museum.
Comparing the \greedy\ and \greedyeb\ backbones, we see that 
\greedyeb\ emphasizes more on the traffic to 
lower-town, and ignores the northern stretch via Robert Kennedy bridge, as 
it is less likely to be included in many shortest paths. The 
case study reiterates the advantages of using edge-betweenness to
guide the selection of the backbone to include edges which are
likely to be used more and is consistent with the well established
notion of Wardrop Equilibrium in Transportation Science that
users (in a non-cooperative manner) seek to minimize their cost of 
transportation~\cite{wardrop}. 

\begin{figure*}[t]
\begin{minipage}{.48\linewidth}
\centering
\subfloat[]{\label{}\includegraphics[width=\textwidth, height=\textwidth, clip=true, trim=5 0 5 5]{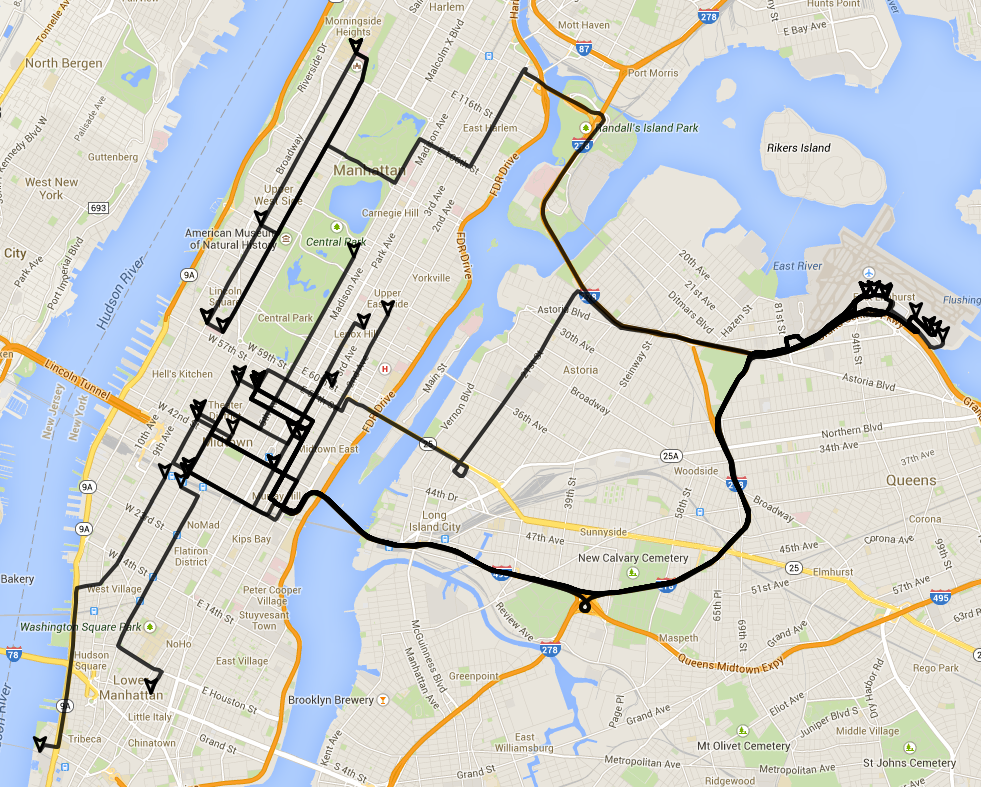}}
\end{minipage}%
\hspace{0.01\textwidth}
\begin{minipage}{.48\linewidth}
\centering
\subfloat[]{\label{}\includegraphics[width=\textwidth, height=\textwidth, clip=true, trim=5 0 5 5]{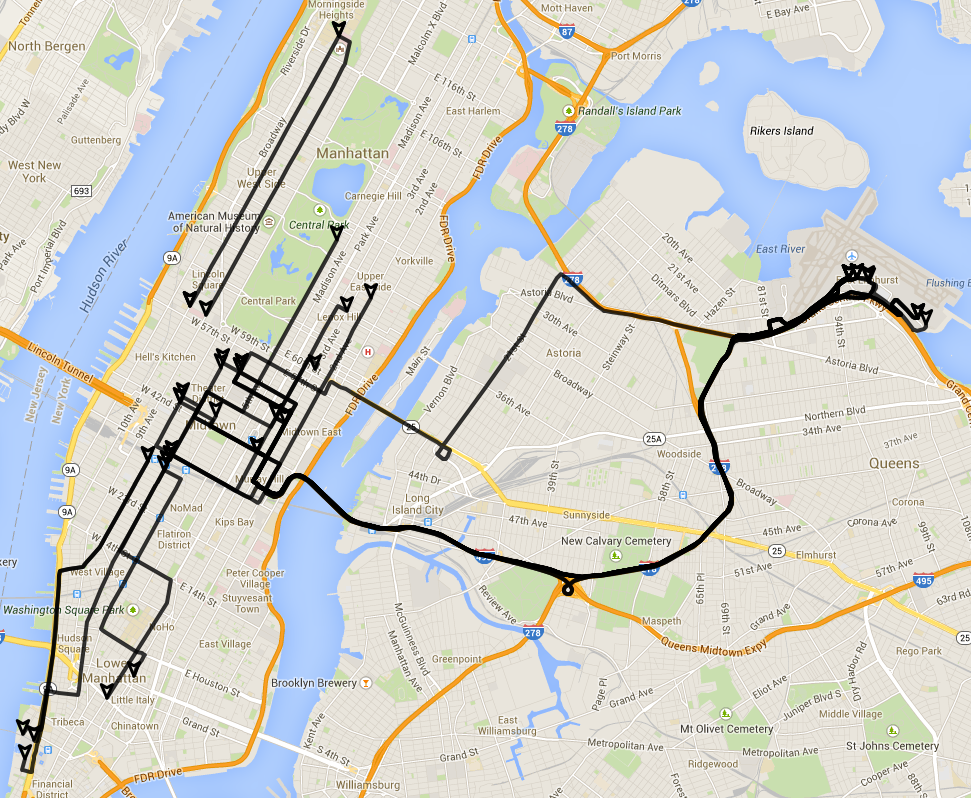}}
\end{minipage}\par\medskip

\caption{NYC backbone using (a) \greedy (b) \greedyeb.}
\label{fig:nycbackbone}
\end{figure*}

\subsection{Case study \#2: \abeline.}
\label{sec:abeline}
We carry out a qualitative analysis 
%% of the \backbonediscovery\ problem 
on the Abilene dataset. 
The results of applying the \greedy\ algorithm are shown
in Figure \ref{fig:abeline}.\footnote{The two nodes in Atlanta have been merged.}
The results provide preliminary evidence that the backbone produced by our
%\backbonediscovery\
problem can be tightly integrated with software defined networks
(SDN), an increasingly important area in communication
networks~\cite{KimF13}. The objective of SDN is to allow a software
layer to control the routers and switches in the physical layers based
on the profile and shape of the traffic. 
This is precisely what our
%% the \backbonediscovery\ 
solution is
accomplishing in Figure~\ref{fig:abeline}. 
The design of data-driven logical networks will be an important
operation implemented through an SDN and will help network designers
manage traffic in real time.

\begin{figure*}
\begin{minipage}{.48\linewidth}
\centering
\subfloat[]{\label{}
	\includegraphics[width=\textwidth, height=\textwidth, clip=true, trim=5 0 5 5]{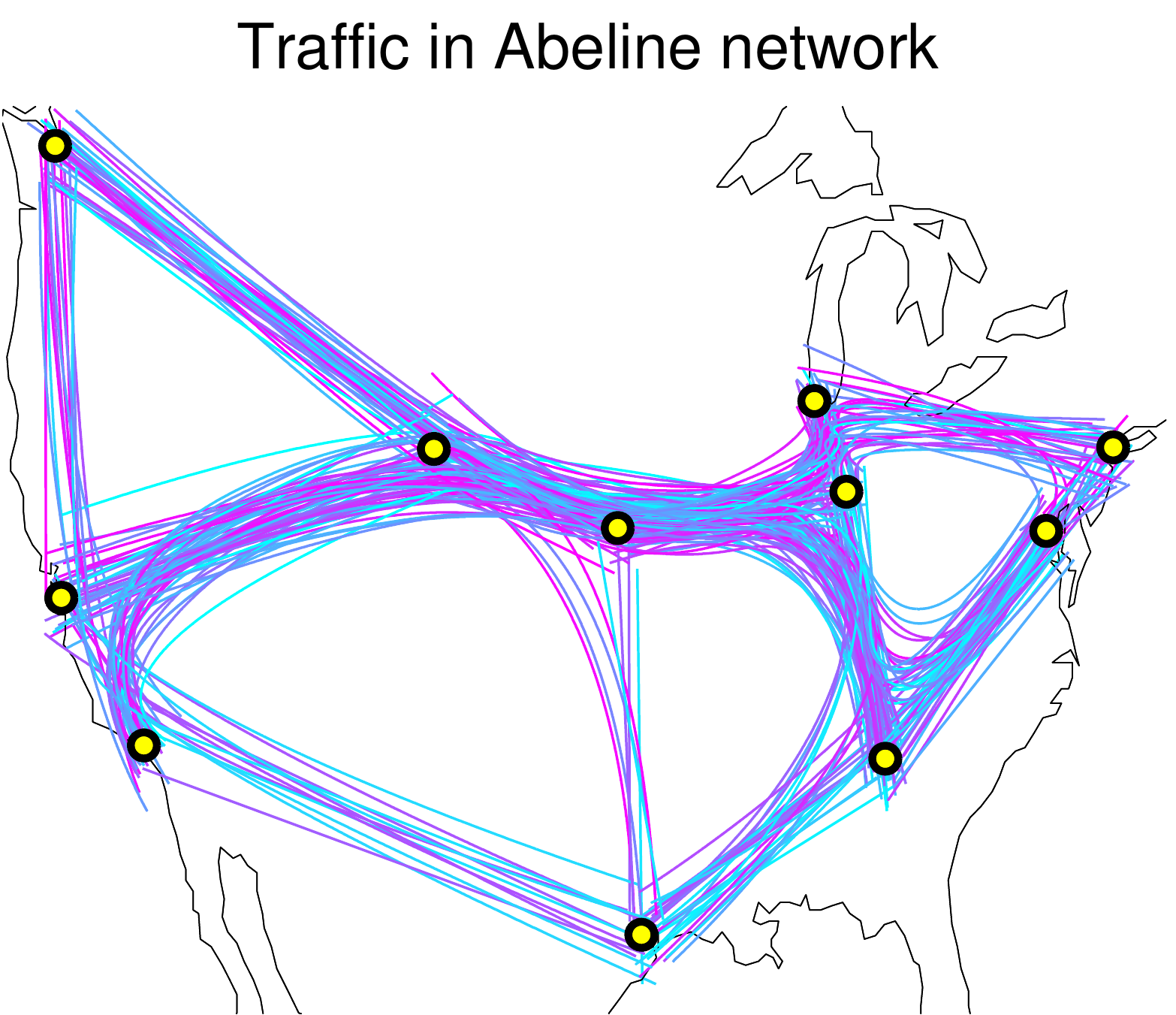}}
\end{minipage}%
\hspace{0.01\textwidth}
\begin{minipage}{.48\linewidth}
\centering
\subfloat[]{\label{}
	\includegraphics[width=\textwidth, height=\textwidth, clip=true, trim=5 0 5 5]{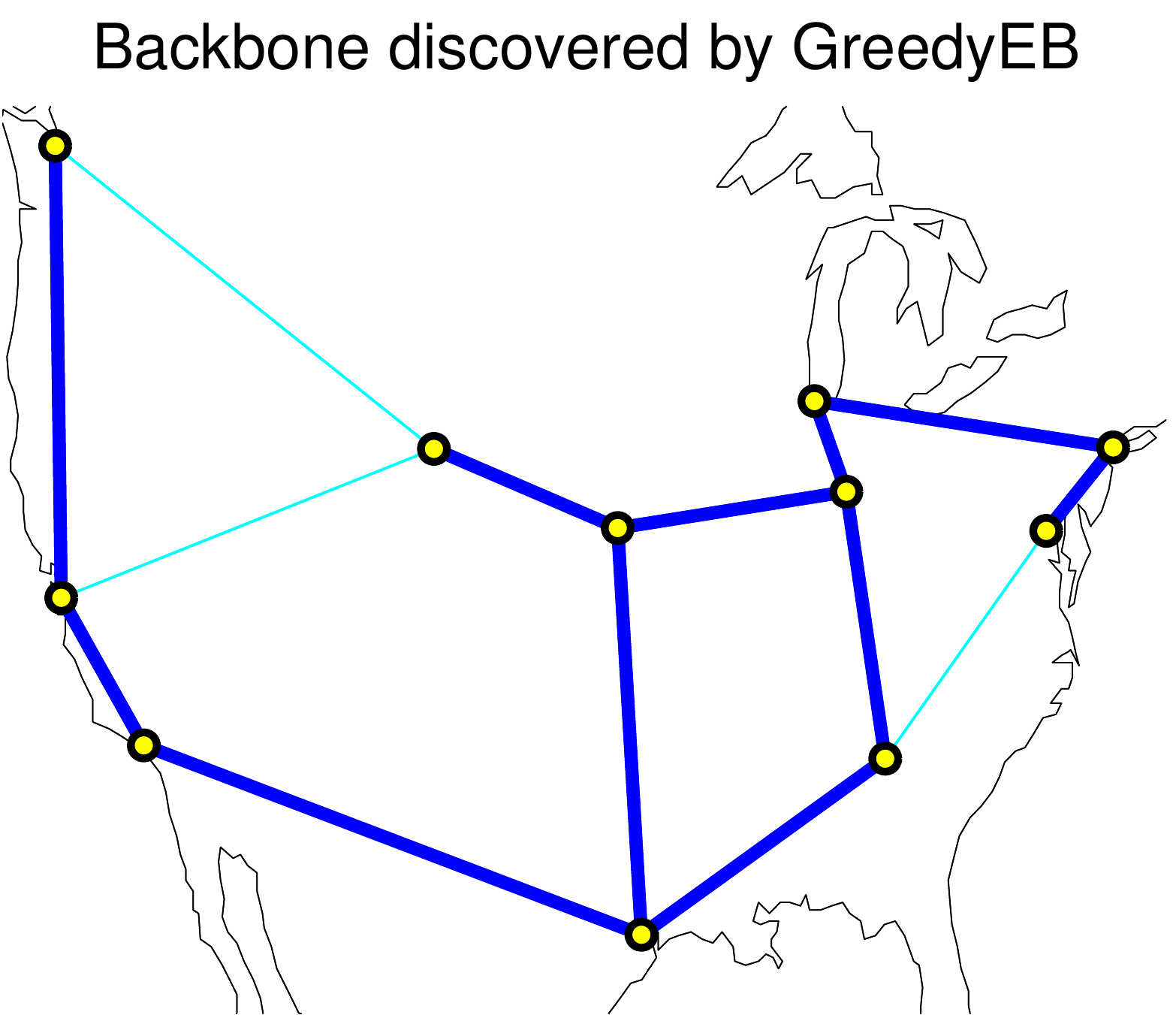}}
\end{minipage}\par\medskip
\caption{Qualitative analysis of the real Internet network.
The figure on the left shows network traffic in the Abeline dataset,
and the one on the right shows the backbone discovered by the
\greedyeb\ algorithm.
As in Figure~\ref{fig:londontube}, the traffic shown is an
% actual one. The data contains only source--destination pairs and a curve was
interpolation along the shortest path between the source--destination pairs. % The cost of the network produced by \greedyeb\ was 77621520, SF - 1.07, cost of \pdeb\ was 74189840, SF - 1.09
}
\label{fig:abeline}

\end{figure*}
%For example, in this particular case the discovered backbone 
%can help network designers answer the following question: 
%``Given  the expected traffic flow between nodes of a network and a
%budget $L$, how should routing tables be modified globally, in order
%to improve the overall aggregate network experience?''

\section{Related work}
\label{sec:related}

As already noted, \backbonediscovery\ is related to the $k$-spanner and the Steiner-forest problem and the
decision versions of both are known to be \NPcomp~\cite{narasimhan2007geometric,williamson2011design}.
The $k$-spanner problem is designed to bound the stretch factor for {\em all pairs} of nodes and not just those from
a specific set of $(s,t)$ pairs. The Steiner-forest problem on the other hand is designed to keep the $(s,t)$ pairs
connected with a minimal number of edges and is agnostic about the stretch factor. 
Both these problems only consider structural information
and completely ignore functional (activity) data that maybe
available about the usage of the network. They also have strict limitations that all nodes need
 to be covered, which makes them restrictive.

The Prize collecting Steiner-forest problem (PCSF)~\cite{hajiaghayi2010prize} is a version of the Steiner-forest problem
 that allows for disconnected source--destination pairs, by imposing a penalty for disconnected pairs. Even in this variant, 
 there is no budget or stretch requirement and hence the optimization problem that PCSF solves is completely different from 
 what we solve. We show how our algorithm fares in comparison to PCSF in Section~\ref{sec:comparison}.

Another enhancement in our work is to normalize edge costs
with measures related to the structure of the network (like edge 
betweenness~\cite{Brandes2007Centrality,girvan2002community-structure,measbetrandom05})
As we show in our experiments, this leads to finding solutions of better quality.
% The classical definition of edge betweenness was proposed by Girvan and Newmann and uses 
% the number of shortest paths between pairs of nodes in a graph that pass through an edge
% as a measure of its importance~\cite{Brandes2007Centrality,girvan2002community-structure, detcommspringer06, findevacommstruc04, measbetrandom05}.
% \todo[Kiran]{I think we also have }

%For example, Du et al.\ propose an algorithm, called {\em sketcher} which when
%applied to a network ``extracts its backbone from the intertwined
%connections among the vertices''~\cite{backdissocnetacm07}. 
%However, the term backbone is not formally defined and
%ostensibly means the set of vertices with high centrality. 
%In another paper, Ruan et al.\ have formalized backbone discovery
%as an energy minimization problem consisting of two components~\cite{netbackdisclusteraxviv12}. The first
%component is based on shortest paths and the second on, what the authors term
%as the path recognition cost: an information-theoretic measure inspired
%from coding theory. Both of the above papers do not consider the availability of
%a travel log and neither characterize the stretch factor of the backbone.

Our work is different from trajectory mining~\cite{giannotti2007trajectory,zheng2009mining}, which consider complete trajectories between source--destination pairs. We do not make use of the trajectories and are only interested in the amount of traffic flowing between a source and destination. Also, the type of questions we try to answer in this paper are different from that of trajectory mining. While trajectory mining tries to answer questions like ``Which are the most used routes between A and B?'', our paper tries to use information about traffic from A to B in order to facilitate a sparse backbone of the underlying network which allows traffic to flow from A to B, also keeping global network characteristics in mind.

The \backbonediscovery\ problem is also related to finding graph
sparsifiers and simplifying graphs. For example, Toivonen et
al.~\cite{ToivonenMZ10} as well as Zhou et al~\cite{ToivonenMZ10b},
propose an approach based on pruning edges while keeping the quality
of best paths between all pairs of nodes, where quality is defined on
concepts such as shortest path or maximum flow.
Misiolek and Chen~\cite{flownetsimpl} propose an algorithm which prune
edges while maintaining the source-to-sink flow for each pair of
nodes.
Mathioudakis et al.~\cite{spine} and 
Bonchi et al.~\cite{spine2}
study the problem of discovering the backbone of a
social network in the context of information propagation, which is a
different type of activity than source--destination pairs, as
considered here.
In the work of Butenko et al.\ a heuristic algorithm for the minimum
connected dominating subset of wireless networks was
proposed~\cite{heurminiwirelessspringer04}.
%In another variation, Newman introduced
%a method for community detection based on modularity matrix analysis~\cite{modcommstrupnas06}.
%
There has been some work in social network research to extract a subgraph
from larger subgraphs subject to constraints~\cite{backdissocnetacm07,netbackdisclusteraxviv12}. 
Other forms of network backbone-discovery have been explored in 
domains including biology, communication networks and the social
sciences. The main focus of most of these approaches 
is on the trade-off between the level of network reduction and the amount of relevant information to be preserved
either for visualization or community detection.
%For example, according to Serranoa et al.\ the backbone of dominant connections in
%weighted networks with strong disorder is filtered based on the
%``local identification of the statistically relevant weight
%heterogeneities''~\cite{extrmultscapnas09}. 
%
While in this paper we try to also sparsify a graph, our objective and approach is completely different from the above because 
we cast the problem in a well-defined optimization framework where the {\em structural} aspects
of the network are captured in the requirement to maintain a low
stretch while the {\em functional} requirements are captured in
maintaining connectedness between traffic terminals, which has not been done before.

%While the word ``backbone'' is employed it is either not formally defined or is used in a structural context (i.e., does not
%incorporate functional activity).

In the computer network research community, the notion of software
defined networks (SDN), which in principle decouples the network
control layer from the physical routers and switches, has attracted
a lot of attention~\cite{CasadoFPLGMS09,KimF13}. 
SDN (for example through OpenFlow) will essentially allow network administrators to remotely control routing tables. The \backbonediscovery\ problem
can essentially be considered as an abstraction of the SDN problem, and as we show in Section~\ref{sec:abeline}, our approach can make use of traffic logs 
to help SDN's make decisions on routing and switching in the physical layer.
% We will provide some qualitative validation in the Experiments section.

\section{Conclusions}
\label{sec:conclusions}
We introduced a new problem, \backbonediscovery, to
address a modern phenomenon: these days not only is the {\em structural}
information of a network available but increasingly, highly granular {\em
functional (activity)} information related to network usage is accessible. 
For example, the aggregate traffic usage of the London Subway between all
stations is available from a public website. The \backbonediscovery\
problem allowed us to efficiently combine structural and functional
information to obtain a highly sophisticated understanding of how the
Tube is used (See Figure~\ref{fig:londontube}). 
From a computational perspective, the \backbonediscovery\ problem has elements
of both the $k$-spanner and the Steiner-forest problem and thus requires new 
algorithms to maintain low stretch and connectedness between important
nodes subject to a budget constraint. 
%We propose a modification of the stretch factor to handle disconnected (s,t) pairs. We incorporate edge-betweenness in our algorithm, allowing our algorithm to prefer better paths.
We compare our algorithm with other similar algorithms and show how our algorithm is different and performs better for our setting. Our case studies show the application of the proposed methods for a wide range of applications, including network and traffic planning.

Though our algorithm makes use of shortest paths, in practice, any other types of paths could be incorporated into our algorithm. We leave this generalization for future analysis. 
The use of harmonic mean not only allows us to handle disconnected (s,t)-pairs, but also makes our stretch factor measure more sensitive to outliers. For future work, we would also incorporate a deeper theoretical analysis of the algorithm and the stretch factor measure.
% Future work includes a deeper theoretical analysis of the algorithms proposed.
%\todo[Kiran]{@all: can we mention here that we were not able to prove the approximation guarantee because of harmonic mean and it was a trade off we had to make (to handle disconnected pairs)?}

\bibliographystyle{abbrv}
\bibliography{bibliography}
\end{document}